\newtheorem{theorem}{Theorem}
\newtheorem{lemma}{Lemma}
\newtheorem{corollary}{Corollary}
\begin{document}

\title{On the Synthetic Channels in Polar Codes over Binary-Input Discrete Memoryless Channels\footnote{This work was supported by the Natural Science Foundation of China (No. 61977056).}}
\author{Yadong Jiao, Xiaoyan Cheng, Yuansheng Tang\footnote{Corresponding author.}\\
{\it\small School of Mathematical Sciences, Yangzhou University, Jiangsu, China}\\
and Ming Xu\footnote{Email addresses: dx120210046@stu.yzu.edu.cn(Y. Jiao), xycheng@yzu.edu.cn(X. Cheng), ystang@yzu.edu.cn(Y. Tang), mxu@szcu.edu.cn(M. Xu)}
\\
{\it \small Suzhou City University, Jiangsu, China}}
\date{}
\maketitle

\begin{abstract}
Polar codes introduced by Arikan in 2009 are the first code family achieving the capacity of binary-input discrete memoryless channels (BIDMCs) with low-complexity encoding and decoding. Identifying unreliable synthetic channels in polar code construction is crucial. Currently, because of the large size of the output alphabets of synthetic channels, there is no effective approach to evaluate their reliability, except in the case that the underlying channels are binary erasure channels. This paper defines equivalence and symmetry based on the likelihood ratio profile of BIDMCs and characterizes symmetric BIDMCs as random switching channels (RSCs) of binary symmetric channels.
By converting the generation of synthetic channels in polar code construction into algebraic operations on underlying channels, some compact representations of RSCs for these synthetic channels are derived. Moreover, a lower bound for the average number of elements that possess the same likelihood ratio within the output alphabet of any synthetic channel generated in polar codes is also derived.
\end{abstract}
\vspace{1ex}
{\noindent\small{\bf Keywords:}
    	Polar code; BIDMC; synthetic channel; likelihood ratio profile; random switching channel}

\section{Introduction}

Polar codes introduced by Arikan in 2009 are the first code family achieving the capacity of binary-input discrete memoryless channels (BIDMCs) under successive cancellation (SC) decoding as the code length tends to infinity. Due to their low complexity encoding and decoding, polar codes were selected in 2016 by the 3GPP Group for the uplink/downlink channel control in the 5G standard.

By using the kernel matrix $G$ of order 2, one can transform two independent BIDMCs into two synthetic channels (cf. \cite{Arikan09}) that are equivalents of the parity-constrained-input parallel channel and the parallel broadcast channel defined in \cite{Shamai05}. These synthetic channels, which and their compounds are called Arikan transformations of the underlying channels in this paper, preserve the sum of capacities.
The idea of polar codes proposed in \cite{Arikan09} is as follows. At first, $N=2^k$ independent
copies of the underlying BIDMC are iteratively transformed $k$ times into $N$ synthetic channels
by using the kernel matrix $G$. When $N$ is large enough, the synthetic channels
polarize into a set of reliable channels and a set of unreliable channels. Then, reliable communication is realized by transmitting the information bits only on the reliable channels while the remaining channels are {\it frozen} (i.e., their inputs are transparent for the receiver). Clearly, for polar codes, it is critical to identify the unreliable synthetic channels.

About the determination of the exact reliability for all the synthetic channels generated in polar codes, Arikan proposed in \cite{Arikan09} a recursive algorithm based on the Bhattacharyya parameter. However, this algorithm is efficient only for the case that the underlying channel is a binary erasure channel (BEC).
In general, the ranking of the synthetic channels depends
on the underlying channel. In literature, several techniques have been proposed to estimate the reliability of the synthetic channels: Monte
Carlo simulation \cite{Arikan09}, density evolution and its Gaussian approximation
\cite{Tanaka09}, \cite{Trifonov12}, efficient degrading
and upgrading method \cite{Vardy13}, polarization weight and $\beta$-expansion \cite{3GPP16}, \cite{He17}, etc.
Meanwhile, it was observed that
there is a partial order (with respect to degradation) between
the synthetic channels, which holds for any underlying
channel. By exploiting the partial order, the
complexity of the code construction can be significantly
reduced (cf. \cite{Urbanke19}, \cite{Siegel19}).

In this paper, our main focus is on providing the most compact representations for the synthetic channels generated in polar codes over symmetric BIDMCs. Through these representations, the reliability of the synthetic channels can be computed efficiently.

The paper is organized as follows.
In Section~\ref{sec02}, we introduce a few definitions and some elementary properties of likelihood ratio profile, equivalence, random switching, symmetry and degradation of BIDMCs.
In Section~\ref{sec03}, we introduce some elementary properties of Arikan transformations of arbitrary BIDMCs and some algebraic operations between the Arikan transformations of symmetric BIDMCs.
In Section~\ref{sec04}, we give a brief introduction for the polar codes from a new angle of view and a method for representing the Arikan transformations of symmetric BIDMCs as
random switchings of binary symmetric channels (BSCs). A lower bound for the average number of elements that possess the same likelihood ratio within the output alphabet of any synthetic channel generated in polar codes is also derived in this section.
Finally, conclusions are drawn in Section \ref{sec05}.
\section{Binary-Input Discrete Memoryless Channels}
\label{sec02}
In this paper, for any random variable $v$ the notation $v$ may also express, a concrete value in its sample space, or the probability event that $v$ takes a concrete value, if there is no confusion.
For example, $\Pr(v)$ denotes the probability of that $v$ takes a concrete value which is also denoted by $v$.
In particular, if $u$ is a random variable whose sample space is a subset of that of $v$, then $v=u$ expresses the probability event that $v$ takes a concrete value denoted by $u$.

Let $W:x\in\mathcal{X}\mapsto y\in\mathcal{Y}$ be a \emph{binary-input discrete memoryless channel} (BIDMC), where the input $x$ is always supposed to be {\it uniformly distributed} in $\mathcal{X}=\mathbb{F}_2=\{0,1\}$, the finite field of two elements, and the output alphabet $\mathcal{Y}$ is a discrete set.

The {\it symmetric capacity} of $W$ is defined as
\begin{equation}\label{sym_cap} I(W)=\sum_{y\in\mathcal{Y}}\sum_{x\in\mathcal{X}}\frac{1}{2}\Pr(y|x)
\log_2\frac{\Pr(y|x)}{\frac{1}{2}\Pr(y|x=0)+\frac{1}{2}\Pr(y|x=1)},
\end{equation}
which is equal to the {\it mutual information}
$I(x;y)=H(x)+H(y)-H(x,y)$
between the output $y$ and the input $x$.

The \emph{likelihood ratio} of $\hat{y}\in\mathcal{Y}$ is defined as
$\mathcal{L}_W(\hat{y})=\Pr(y=\hat{y}|x=0)/\Pr(y=\hat{y}|x=1)$.
The \emph{maximum likelihood decoding} (MLD) of the BIDMC $W$ decodes $\hat{y}\in\mathcal{Y}$ into $d_{\text{mld}}(\mathcal{L}_W(\hat{y}))$, where
$d_{\text{mld}}(l)\in\mathcal{X}$ equals
0 if $l\geq 1$ and 1 otherwise.
Then, the {\it probability of error decoding} for MLD of $W$ is given by
\begin{equation}\label{error probability}
	P_{\epsilon}(W)=\frac{1}{2}\sum_{y\in\mathcal{Y}}\min\{\Pr(y|x=0),\Pr(y|x=1)\}.
\end{equation}

The reliability of $W$ can also be scaled by the Bhattacharyya parameter
\begin{gather}
Z(W)=\sum_{y\in\mathcal{Y}}\sqrt{\Pr(y|x=0)\Pr(y|x=1)}.
\label{Bp}
\end{gather}

\subsection{Equivalence of BIDMCs}
For $\varepsilon\in[0,1]$, let $\overline{\varepsilon}$ denote $1-\varepsilon$ and $L_W(\varepsilon)$ the following subset of $\mathcal{Y}$
\begin{align}
L_W(\varepsilon)=\{y\in\mathcal{Y}: \varepsilon\Pr(y|x=0)=\overline{\varepsilon}\Pr(y|x=1)\}.
\end{align}
Clearly, $\{L_W(\varepsilon)\}_{\varepsilon\in[0,1]}$ is a partition of $\mathcal{Y}$ determined by distinguishing the likelihood ratios.
Since the input $x$ of $W$ is uniformly distributed in $\mathcal{X}$, the output $y$ of $W$ belongs to the set $L_W(\varepsilon)$ with probability
\begin{align}
\Pr(y\in L_W(\varepsilon))=P_W(\varepsilon)=\frac{1}{2}\sum_{x\in\mathcal{X},\,y\in L_W(\varepsilon)}\Pr(y|x).
\label{rr01}
\end{align}
The function $P_W(\varepsilon)$ over $[0,1]$ is also called the {\it likelihood ratio profile} (LRP) of $W$ in this paper.

Since for any BIDMC the likelihood ratio constitutes a sufficient
statistic with respect to decoding (c.f. \cite{Urbanke08}), two BIDMCs $W$, $W'$ are said to be \emph{equivalent}, and written as $W\cong W'$, if their LRPs are the same, i.e.,
$P_{W}(\varepsilon)=P_{W'}(\varepsilon)$ is valid for all $\varepsilon\in[0,1]$.
Indeed, we have the following theorem.
\begin{theorem}\label{lem00s}
For any BIDMC $W$, we have \begin{gather}
I(W)=1-\sum_{\varepsilon\in[0,1]} \hbar(\varepsilon)P_W(\varepsilon),\label{f02}\\
P_{\epsilon}(W)=\sum_{\varepsilon\in[0,1]}\min\{\varepsilon,\overline{\varepsilon}\} P_W(\varepsilon),\label{f01}\\
Z(W)=2\sum_{\varepsilon\in[0,1]}\sqrt{\varepsilon\overline{\varepsilon}} P_W(\varepsilon),\label{f03}
\end{gather}
where $\hbar(\varepsilon)=-\varepsilon\log_2 \varepsilon-\overline{\varepsilon}\log_2 \overline{\varepsilon}$ is the binary entropy function.
Therefore, for any BIDMC\ $W'$ with $W'\cong W$, we have $I(W')=I(W)$, $P_{\epsilon}(W')=P_{\epsilon}(W)$ and $Z(W')=Z(W)$.
\end{theorem}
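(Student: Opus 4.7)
My plan is to prove all three formulas by the same underlying device: partition $\mathcal{Y}$ according to the likelihood ratio via the partition $\{L_W(\varepsilon)\}_{\varepsilon\in[0,1]}$ already introduced, and then rewrite the conditional probabilities $\Pr(y|0)$, $\Pr(y|1)$ inside each block as explicit functions of $\varepsilon$ and the unconditional probability $\Pr(y)$. The three displayed identities, together with $\sum_{\varepsilon}P_W(\varepsilon)=1$, then reduce to routine algebra, and the three equivalence statements at the end follow at once because each right-hand side depends on $W$ only through its LRP.

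The first step is the key lemma: for any $y\in L_W(\varepsilon)$, the defining relation $\varepsilon\Pr(y|0)=\overline{\varepsilon}\Pr(y|1)$ together with $\Pr(y)=\tfrac12\Pr(y|0)+\tfrac12\Pr(y|1)$ gives
\begin{equation*}
\Pr(y|0)=2\overline{\varepsilon}\,\Pr(y),\qquad \Pr(y|1)=2\varepsilon\,\Pr(y).
\end{equation*}
Summing $\Pr(y)$ over $y\in L_W(\varepsilon)$ recovers $P_W(\varepsilon)$ by \eqref{rr01}, so every sum $\sum_{y\in\mathcal{Y}}f(\Pr(y|0),\Pr(y|1))$ whose integrand is homogeneous of degree one in the conditional probabilities collapses to $\sum_{\varepsilon}g(\varepsilon)\,P_W(\varepsilon)$ for a suitable $g$.

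Applying this to \eqref{error probability} gives $\min\{\Pr(y|0),\Pr(y|1)\}=2\min\{\varepsilon,\overline{\varepsilon}\}\Pr(y)$, and summing yields \eqref{f01}; applied to \eqref{Bp} it gives $\sqrt{\Pr(y|0)\Pr(y|1)}=2\sqrt{\varepsilon\overline{\varepsilon}}\,\Pr(y)$, producing \eqref{f03}. For \eqref{f02} I would use the mutual-information form $I(W)=\sum_{y,x}\tfrac12\Pr(y|x)\log_2(\Pr(y|x)/\Pr(y))$ and substitute the expressions above to get, per block, $\Pr(y)\bigl[\overline{\varepsilon}\log_2(2\overline{\varepsilon})+\varepsilon\log_2(2\varepsilon)\bigr]=\Pr(y)(1-\hbar(\varepsilon))$; summing over all $y$ and using $\sum_{\varepsilon}P_W(\varepsilon)=1$ yields $I(W)=1-\sum_{\varepsilon}\hbar(\varepsilon)P_W(\varepsilon)$.

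The only subtlety, rather than a genuine obstacle, is the meaning of the outer sum $\sum_{\varepsilon\in[0,1]}$: since $\mathcal{Y}$ is discrete, at most countably many values of $\varepsilon$ give $L_W(\varepsilon)\neq\emptyset$, so the three right-hand sides are actually countable sums with no convergence issue, and the range of summation is an invariant of the equivalence class of $W$. With that remark, the final ``therefore'' clause is immediate: if $W'\cong W$ then $P_{W'}=P_W$ on $[0,1]$, and plugging into \eqref{f02}--\eqref{f03} gives $I(W')=I(W)$, $P_{\epsilon}(W')=P_{\epsilon}(W)$, $Z(W')=Z(W)$.
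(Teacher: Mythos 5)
Your proof is correct and takes essentially the same route as the paper: your key identities $\Pr(y|0)=2\overline{\varepsilon}\Pr(y)$, $\Pr(y|1)=2\varepsilon\Pr(y)$ for $y\in L_W(\varepsilon)$ are precisely the paper's display (\ref{d01}) rewritten with $\Pr(y)=\tfrac12\sum_{x}\Pr(y|x)$, and the three summations then proceed identically. The only addition is your remark that the sum over $\varepsilon\in[0,1]$ is really a countable sum, which the paper leaves implicit; this is a useful clarification but not a different argument.
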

\begin{proof}
Clearly, for any $\varepsilon\in[0,1]$ and $y\in L_W(\varepsilon)$, we have
\begin{align}
\Pr(y|x=1)=\varepsilon\sum_{x\in\mathcal{X}}\Pr(y|x),\
\Pr(y|x=0)=\overline{\varepsilon}\sum_{x\in\mathcal{X}}\Pr(y|x).\label{d01}
\end{align}
From (\ref{sym_cap}), (\ref{rr01}) and (\ref{d01}) we see
\begin{align*}
I(W)&=\sum_{\varepsilon\in[0,1]}\sum_{y\in L_W(\varepsilon)}\frac{1}{2}\Big(\Pr(y|x=1)\log_2 (2\varepsilon)+\Pr(y|x=0)\log_2 (2\overline{\varepsilon})\Big)\nonumber\\
&=\sum_{\varepsilon\in[0,1]}\sum_{y\in L_W(\varepsilon)}\frac{1}{2}\Big(\varepsilon\log_2 (2\varepsilon)+\overline{\varepsilon}\log_2 (2\overline{\varepsilon})\Big)\sum_{x\in\mathcal{X}}\Pr(y|x)\nonumber\\
&=\sum_{\varepsilon\in[0,1]}(\varepsilon\log_2 (2\varepsilon)+\overline{\varepsilon}\log_2 (2\overline{\varepsilon}))P_W(\varepsilon)\nonumber\\
&=1-\sum_{\varepsilon\in[0,1]} \hbar(\varepsilon)P_W(\varepsilon).
\end{align*}
From (\ref{error probability}), (\ref{rr01}) and (\ref{d01}) we see
\begin{align*}
P_{\epsilon}(W)=\frac{1}{2}\sum_{\varepsilon\in[0,1]}\sum_{y\in L_W(\varepsilon)}\min\{\varepsilon,\overline{\varepsilon}\}\sum_{x\in\mathcal{X}}\Pr(y|x)
= \sum_{\varepsilon\in[0,1]}\min\{\varepsilon,\overline{\varepsilon}\}P_W(\varepsilon).
\end{align*}
From (\ref{Bp}), (\ref{rr01}) and (\ref{d01}) we see
\begin{align*}
Z(W)=\sum_{\varepsilon\in[0,1]}\sum_{y\in L_W(\varepsilon)}\sqrt{\varepsilon\overline{\varepsilon}}\sum_{x\in\mathcal{X}}\Pr(y|x)
=2\sum_{\varepsilon\in[0,1]}\sqrt{\varepsilon\overline{\varepsilon}}P_W(\varepsilon).
\end{align*}

The proof is completed.
\end{proof}

\subsection{Random Switching of BIDMCs}
For any positive integer $n$, let $[n]$ denote the set $\{0,1,\cdots\,n-1\}$ of integers.

Assume that $\{\mathcal{Y}_j\}_{j\in[n]}$ is a partition of the output alphabet $\mathcal{Y}$ of the BIDMC $W$ such that, for each $j\in[n]$, the probability
\begin{equation*}
	\Pr(y\in\mathcal{Y}_j| x=a)=q_j
\end{equation*}
is independent of $a\in\mathcal{X}$. Clearly, we have $q_j=\Pr(y\in\mathcal{Y}_j),\, j\in[n]$ and that $(q_0,q_1,\ldots,q_{n-1})$ is a probability distribution vector. For $j\in[n]$, let $W_j:x\in\mathcal{X}\mapsto y_j\in\mathcal{Y}_j$ denote the synthetic BIDMC with transition probabilities
\begin{equation}\label{60}
	\Pr(y_j| x)=\frac{1}{q_j}\Pr(y=y_j| x),
\end{equation}
which, in accordance with our assumption regarding the notations, should be understood as
\begin{equation*}
	\Pr(y_j=b| x=a)=\frac{1}{q_j}\Pr(y=b| x=a), \text{ for any }b\in\mathcal{Y}_j\text{ and }a\in\mathcal{X}.\label{60'}
\end{equation*}

Since the input $x$ of the channel $W$ may be seen as randomly being transmitted over the sub-channels $\{W_j\}_{j\in[n]}$, and over $W_j$ with probability $q_j$ for each $j\in[n]$,
as depicted in Figure~1,
$W$ is also called a \emph{random switching channel} (RSC) of $\{W_j\}_{j\in[n]}$, and written as
\begin{align}
W=\sum_{j\in[n]}q_jW_j.\label{rr02}
\end{align}
We note that each pair of the output alphabets of the sub-channels $\{W_j\}_{j\in[n]}$ are disjoint, it is supposed naturally that the receiver knows the exact sub-channel used for each transmission.

\setlength{\unitlength}{0.3cm}
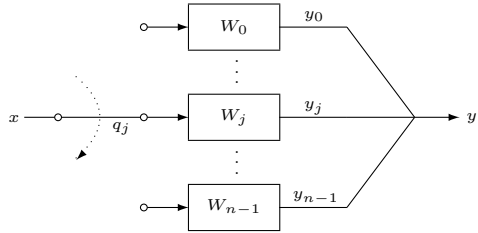
\begin{figure}[t]
\begin{center}
\begin{picture}(22,10)

\put(8,8){\framebox(4,2)[]{{\tiny $W_0$}}}
\put(8,4){\framebox(4,2)[]{{\tiny $W_j$}}}
\put(8,0){\framebox(4,2)[]{{\tiny $W_{n-1}$}}}
\put(10,2.5){\tiny $\vdots$}
\put(10,6.5){\tiny $\vdots$}
\multiput(6,1)(0,4){3}{\circle{.3}}
\multiput(6.15,1)(0,4){3}{\vector(1,0){1.85}}
\put(2.35,5){\line(1,0){3.5}}
\put(2.2,5){\circle{.3}}
\put(.7,5){\line(1,0){1.35}}
\qbezier[15](3.,3.2)(5.1,5)(3.,6.8)
\put(3.2,3.3){\vector(-3,-3){0.2}}
\put(0,4.8){\tiny $x$}
\multiput(12,1)(0,4){3}{\line(1,0){3}}
\put(18.,5){\line(-1,0){3}}
\put(18.,5){\line(-3,-4){3}}
\put(18.,5){\line(-3,4){3}}
\put(18.,5){\vector(1,0){2}}
\put(20.3,4.9){\tiny $y$}
\put(13.1,9.4){\tiny $y_0$}
\put(13.1,5.4){\tiny $y_j$}
\put(12.6,1.4){\tiny $y_{n-1}$}
\put(4.6,4.5){\tiny $q_j$}
\end{picture}
\end{center}
\caption{An RSC of BIDMCs $\{W_j\}_{j\in[n]}$. For each transmission, the sub-channel $W_j$ is chosen with probability $q_j$ to be the actually used channel.}
\end{figure}

For the RSCs of BIDMCs, we have the following theorem
which shows the LRP and the notation defined by (\ref{rr02}) admit some natural operations.

\begin{theorem}
Let $W=\sum_{j\in[n]}q_jW_j$ be an RSC of the BIDMCs $\{W_j\}_{j\in[n]}$.

\noindent
\begin{enumerate}
\item For any $\varepsilon\in[0,1]$, we have
\begin{align}
P_W(\varepsilon)=\sum_{j\in[n]}q_jP_{W_j}(\varepsilon).\label{62'}
\end{align}
Therefore,
\begin{gather}
I(W)=\sum_{j\in[n]}q_jI(W_j),
P_{\epsilon}(W)=\sum_{j\in[n]}q_jP_{\epsilon}(W_j),
Z(W)=\sum_{j\in[n]}q_jZ(W_j).\label{62}
\end{gather}
\item Assume $W'=\sum_{j\in[n]}q'_jW'_j$ is a BIDMC that is independent of\ $W$, where $W'_j$ is equivalent to $W_j$ for $j\in[n]$. Then,
for any $p\in[0,1]$ we have
\begin{align}
pW+\bar{p}W'\cong\sum_{j\in[n]}pq_jW_j+\sum_{j\in[n]}\overline{p}q'_jW'_j
\cong\sum_{j\in[n]}(pq_j+\overline{p}q'_j)W_j.\label{c00}
\end{align}
\end{enumerate}
\end{theorem}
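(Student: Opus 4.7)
The plan is to prove both parts by reducing everything to the behavior of the likelihood ratio profile under the RSC construction and then invoking Theorem~\ref{lem00s} to transfer the LRP identities to $I$, $P_{\epsilon}$, and $Z$.

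For part 1, the key observation is that the likelihood ratio is preserved under the sub-channel rescaling in (\ref{60}). Specifically, for any $y\in\mathcal{Y}_j$, the factor $1/q_j$ cancels in the ratio $\Pr(y_j=y|x=0)/\Pr(y_j=y|x=1)$, so $\mathcal{L}_{W_j}(y)=\mathcal{L}_W(y)$. Combined with the fact that $\{\mathcal{Y}_j\}_{j\in[n]}$ is a partition of $\mathcal{Y}$, this gives the disjoint decomposition $L_W(\varepsilon)=\bigsqcup_{j\in[n]}L_{W_j}(\varepsilon)$. First I would compute $P_W(\varepsilon)$ using definition (\ref{rr01}), split the sum over $y$ according to this decomposition, and use (\ref{60}) to rewrite $\Pr(y|x)=q_j\Pr(y_j=y|x)$ on each piece; this yields (\ref{62'}) directly. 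The three identities in (\ref{62}) then follow immediately by substituting (\ref{62'}) into the formulas (\ref{f02})--(\ref{f03}) of Theorem~\ref{lem00s} and interchanging the order of summation.

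For part 2, I would work at the level of LRPs and appeal to the equivalence criterion $P_W=P_{W'}$. The first equivalence in (\ref{c00}) is essentially formal: since $pW+\bar{p}W'$ is itself an RSC in which one first tosses a biased coin with bias $p$ to pick between $W$ and $W'$ and then performs the inner switching, part~1 applied twice gives
\begin{align*}
P_{pW+\bar{p}W'}(\varepsilon)=pP_W(\varepsilon)+\bar{p}P_{W'}(\varepsilon)=\sum_{j\in[n]}pq_jP_{W_j}(\varepsilon)+\sum_{j\in[n]}\bar{p}q'_jP_{W'_j}(\varepsilon),
\end{align*}
which matches the LRP of $\sum_j pq_jW_j+\sum_j\bar{p}q'_jW'_j$ computed via (\ref{62'}). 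For the second equivalence, the hypothesis $W'_j\cong W_j$ means $P_{W'_j}=P_{W_j}$ for each $j$, so the two contributions to each index $j$ collapse: $pq_jP_{W_j}(\varepsilon)+\bar{p}q'_jP_{W'_j}(\varepsilon)=(pq_j+\bar{p}q'_j)P_{W_j}(\varepsilon)$, and summing over $j$ yields exactly $P_{\sum_j(pq_j+\bar{p}q'_j)W_j}(\varepsilon)$.

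The only subtle point, and the place I would be most careful, is in setting up the first equivalence of (\ref{c00}). The channels $W$ and $W'$ need not have disjoint output alphabets, so $pW+\bar{p}W'$ as written is not literally of the form (\ref{rr02}) unless one first tags the outputs with a flag indicating which of $W,W'$ was used; the independence hypothesis guarantees this tagging is consistent, and since part 1 shows that the LRP depends only on the additive decomposition of probability masses over likelihood-ratio level sets, the relabeling has no effect on $P_{pW+\bar{p}W'}$. Once this is clearly articulated, the rest of part 2 is a short LRP computation.
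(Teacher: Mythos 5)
Your proposal is correct and follows essentially the same route as the paper: both rest on the observation that the rescaling in (\ref{60}) preserves likelihood ratios, so $L_W(\varepsilon)$ partitions as $\bigsqcup_{j\in[n]}L_{W_j}(\varepsilon)$, which gives (\ref{62'}) directly from (\ref{rr01}), and then (\ref{62}) and (\ref{c00}) follow by plugging into Theorem~\ref{lem00s} and comparing LRPs. Your remark about tagging the outputs of $W$ and $W'$ to make the alphabets disjoint is a valid subtlety that the paper handles implicitly through its standing convention that the receiver always knows which sub-channel of an RSC was used.
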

\begin{proof}
Since $W:x\in\mathcal{X}\mapsto y\in\mathcal{Y}$ is an RSC of the BIDMCs $W_j:x\in\mathcal{X}\mapsto y_j\in\mathcal{Y}_j$,
for any $\varepsilon\in[0,1]$ the set $L_W(\varepsilon)$ has partition $\{L_{W_j}(\varepsilon)\}_{j\in[n]}$, and thus according to (\ref{rr01}) and (\ref{60}) we see
\begin{align*}
P_W(\varepsilon)=&\frac{1}{2}\sum_{x\in\mathcal{X},\,y\in L_W(\varepsilon)}\Pr(y|x)\\
=&\frac{1}{2}\sum_{x\in\mathcal{X},\,j\in[n],\,y_j\in L_{W_j}(\varepsilon)}q_j\Pr(y_j|x)\\
=&\sum_{j\in[n]}q_jP_{W_j}(\varepsilon),
\end{align*}
i.e., (\ref{62'}) is valid for any $\varepsilon\in[0,1]$.
Therefore,
we see further that (\ref{62}) and (\ref{c00}) follow from Theorem~\ref{lem00s} and the equivalence of BIDMCs,
respectively.
\end{proof}

For $\varepsilon\in[0,1]$, let $\mathrm{B}(\varepsilon)$ denote the BSC with crossover probability $ \varepsilon $ and $\mathrm{E}(\varepsilon)$ the BEC with erasure probability $\varepsilon$, respectively. Clearly, for any $\varepsilon,\sigma,q\in[0,1]$ we have $\mathrm{B}(\varepsilon)\cong\mathrm{B}(\overline{\varepsilon})$,
$q\mathrm{E}(\varepsilon)+\overline{q}\mathrm{E}(\sigma)\cong \mathrm{E}(q\varepsilon+\overline{q}\sigma)$ and
	$q\mathrm{B}(1/2)+\overline{q}\mathrm{B}(0)\cong \mathrm{E}(q)$, where $\mathrm{B}(0)$ and $\mathrm{B}(1/2)$ are the noiseless channel and the completely noisy channel respectively.

A BIDMC $W$ is said to be \emph{symmetric} if its LRP is symmetric with respect to $1/2$, i.e.,
\begin{gather}
P_W(\varepsilon)=P_W(\overline{\varepsilon}),\ \text{for }\varepsilon\in[0,1].
\end{gather}
Clearly, the BIDMC $W$ is symmetric if and only if it is equivalent to an RSC of some BSCs $\{\mathrm{B}(\varepsilon_i)\}_{i\in[n]}$
with $0\leq \varepsilon_0<\varepsilon_1<\cdots<\varepsilon_{n-1}\leq 1/2$.

Notice that the symmetry of BIDMCs defined here is based upon equivalence and slightly different from those defined in literature.
For example, for any $p\in(0,1)$ with $p\neq 1/2$ the channel given in Figure~2 is symmetric according to our definition, but according to that noted in \cite{Arikan09}.

\setlength{\unitlength}{0.3cm}
\begin{figure}[t]
\begin{center}
\begin{picture}(12,7)
\put(2,4){\vector(4,1){8}}
\put(2,4){\vector(8,-1){8}}
\put(2,4){\vector(4,-2){8}}
\put(2,2){\vector(4,2){8}}
\put(2,2){\vector(8,1){8}}
\put(2,2){\vector(4,-1){8}}
\put(1,3.8){\mbox{\tiny $0$}}
\put(1,1.8){\mbox{\tiny $1$}}
\put(10.5,5.8){\mbox{\tiny $-1$}}
\put(10.5,2.8){\mbox{\tiny $0$}}
\put(10.5,-0.2){\mbox{\tiny $+1$}}
\put(5.5,.2){\mbox{\tiny $\overline{p}/3$}}
\put(5.5,5.5){\mbox{\tiny $2p/3$}}
\put(8.3,1.2){\mbox{\tiny $p/3$}}
\put(8.3,4.5){\mbox{\tiny $2\overline{p}/3$}}
\put(7.2,2.2){\mbox{\tiny $p$}}
\put(7.2,3.7){\mbox{\tiny $\overline{p}$}}

\end{picture}
\end{center}
\caption{A symmetric BIDMC which is equivalent to $\mathrm{B}(p)$.}
\end{figure}
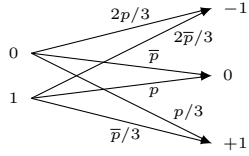

\subsection{Degradation of BIDMCs}
As in \cite{Vardy13}, for any BIDMC $W:x\in \mathcal{X}\mapsto y\in\mathcal{Y}$, another
BIDMC $W':x\in \mathcal{X}\mapsto y'\in\mathcal{Y}'$ is said a \emph{degadation channel} of $W$, and written as $W'\preccurlyeq W$,
if there is a channel $Q:\mathcal{Y}\rightarrow \mathcal{Y}'$ such that
\begin{align}
\Pr(y'|x)=\sum_{y\in\mathcal{Y}}\Pr(y|x)Q(y'|y),\label{c05}
\end{align}
where $Q(b|a)$ is the probability of that $Q$ transmits $a\in\mathcal{Y}$ into $b\in\mathcal{Y}'$. The channel $Q$ is also called the \emph{intermediate channel} of the degradation $W'\preccurlyeq W$.
Clearly, from $W'\preccurlyeq W$ and $W\preccurlyeq W''$ one can get $W'\preccurlyeq W''$, i.e., the degradation of BIDMCs has {\it transitivity}.
The following lemma is useful for the argument on the equivalence of BIDMCs.

\begin{lemma}\label{lem000}
Assume that $W:x\in\mathcal{X}\mapsto y\in\mathcal{Y}$ and $W':x\in\mathcal{X}\mapsto y'\in\mathcal{Y}'$ are two BIDMCs with $W'\preccurlyeq W$.
Then, we have
\begin{gather}
\min\{\varepsilon\in[0,1]:P_W(\varepsilon)\neq 0\}\leq
\min\{\varepsilon\in[0,1]:P_{W'}(\varepsilon)\neq 0\},\label{c20}\\
\max\{\varepsilon\in[0,1]:P_W(\varepsilon)\neq 0\}\geq
\max\{\varepsilon\in[0,1]:P_{W'}(\varepsilon)\neq 0\}.\label{c21}
\end{gather}
\end{lemma}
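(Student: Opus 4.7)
The plan is to translate both inequalities into a single linear identity on the transition probabilities and then push them through the intermediate channel $Q$. First I would note that, for any BIDMC $V : x \in \mathcal{X} \mapsto y \in \mathcal{Y}$, every output $y$ with $\Pr(y|0) + \Pr(y|1) > 0$ lies in exactly one $L_V(\varepsilon)$, namely for $\varepsilon_y = \Pr(y|1)/\bigl(\Pr(y|0) + \Pr(y|1)\bigr)$, while outputs of zero marginal contribute nothing to $P_V$. Consequently the left-hand side of (\ref{c20}) equals $\min_y \varepsilon_y$ (over outputs of positive marginal) and the right-hand side equals the analogous minimum over $\mathcal{Y}'$; similarly for (\ref{c21}) with maxima.

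Set $\varepsilon_* = \min\{\varepsilon \in [0,1] : P_W(\varepsilon) \neq 0\}$. Cross-multiplying the bound $\varepsilon_y \geq \varepsilon_*$ yields the homogeneous linear inequality
\[
\Pr(y|1) \;\geq\; \varepsilon_* \bigl(\Pr(y|0) + \Pr(y|1)\bigr), \qquad y \in \mathcal{Y},
\]
which holds for every $y$ (trivially so when the right-hand side vanishes). Multiplying by $Q(y'|y) \geq 0$, summing over $y$, and invoking the degradation identity (\ref{c05}) once with $x = 0$ and once with $x = 1$, I get the same inequality with $W$ replaced by $W'$ and $y$ by $y'$; rearranging gives $\varepsilon_{y'} \geq \varepsilon_*$ for every $y' \in \mathcal{Y}'$ of positive marginal, which is exactly (\ref{c20}). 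The inequality (\ref{c21}) comes out of the symmetric argument: with $\varepsilon^* = \max\{\varepsilon : P_W(\varepsilon) \neq 0\}$ the bound $\Pr(y|1) \leq \varepsilon^* (\Pr(y|0) + \Pr(y|1))$ is again a homogeneous linear inequality and is likewise preserved under the nonnegative averaging by $Q$.

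There is no real obstacle here; the only thing to be cautious about is the bookkeeping for outputs with $\Pr(\cdot|0) + \Pr(\cdot|1) = 0$, which simply drop out of the definition of $L_V(\cdot)$ and of $P_V$. The conceptual content of the argument is that the degradation $Q$ is an averaging operator acting on the pair of conditional distributions, so any half-space that is linear and homogeneous in $\bigl(\Pr(y|0), \Pr(y|1)\bigr)$ is automatically stable under it — and both the minimum-support and maximum-support conditions on the LRP have precisely this form.
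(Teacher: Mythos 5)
Your proof is correct, and it takes essentially the same route as the paper: push a bound on the likelihood-ratio value through the intermediate channel $Q$. The only stylistic difference is the direction and parameterization. The paper starts from an arbitrary $y' \in L_{W'}(\varepsilon'_0)$, expands the odds ratio $\Pr(y'|0)/\Pr(y'|1)$ by substituting the degradation identity, and bounds each term using $\overline{\varepsilon}/\varepsilon \le \overline{\varepsilon_0}/\varepsilon_0$; you instead start from $W$, write the bound as the division-free homogeneous inequality $\Pr(y|1) \ge \varepsilon_*\bigl(\Pr(y|0)+\Pr(y|1)\bigr)$ holding for every $y$, and push it forward through the nonnegative kernel $Q$. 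These are the same computation; your formulation is marginally tidier because it sidesteps the degenerate cases $\varepsilon_0 = 0$ or $\Pr(y'|1)=0$ without any case split and makes transparent why $Q$ preserves the constraint (it is a nonnegative averaging of the pair $\bigl(\Pr(y|0),\Pr(y|1)\bigr)$ and the constraint is a closed half-space through the origin). Your remark about zero-marginal outputs dropping out of $P_V$ is the right bit of bookkeeping, and the symmetric argument for (\ref{c21}) goes through exactly as you say.
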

\begin{proof}
Let $\varepsilon_0$ denote the left side and $\varepsilon'_0$ the right side, respectively, of (\ref{c20}).
Assume that $y'$ is an arbitrary element in $L_{W'}(\varepsilon'_0)$ and $Q$ is the intermediate channel of the degradation $W' \preccurlyeq W$. Since $Q$ transmits $y\in\mathcal{Y}$ into $y'\in\mathcal{Y}'$ with probability $Q(y'|y)$ that satisfies (\ref{c05}), we have
\begin{align}
\frac{\overline{\varepsilon'_0}}{\varepsilon'_0}=&\frac{\Pr(y'|x=0)}{\Pr(y'|x=1)}
=\frac{1}{\Pr(y'|x=1)}\sum_{\varepsilon\in [0,1]}\sum_{y\in L_W(\varepsilon)}\Pr(y|x=0)Q(y'|y)\nonumber\\
=&\frac{1}{\Pr(y'|x=1)}\sum_{\varepsilon\in [0,1]}\frac{\overline{\varepsilon}}{\varepsilon}\sum_{y\in L_W(\varepsilon)}\Pr(y|x=1)Q(y'|y)\label{c06}\\
\leq& \frac{\overline{\varepsilon_0}}{\varepsilon_0\Pr(y'|x=1)}\sum_{y\in\mathcal{Y}}\Pr(y|x=1)Q(y'|y)
=\frac{\overline{\varepsilon_0}}{\varepsilon_0},\nonumber
\end{align}
which implies (\ref{c20}). The inequality (\ref{c21}) can be proved similarly.
\end{proof}

A BIDMC W is referred to as being {\it LRP-oriented} provided that the set $L_{W}(\varepsilon)$ contains at most one element for every $\varepsilon\in[0,1]$.
Evidently, for any BIDMC $W$, there exists a unique BIDMC $W'$, which is named the LRP-oriented form of $W$, such that $W'$ is LRP-oriented and equivalent to $W$.
The following theorem demonstrates that the equivalence of BIDMCs defined in this paper is truly equivalent to that defined in \cite{Vardy13}.

\begin{theorem}\label{theo20}
For any BIDMCs $W:x\in\mathcal{X}\mapsto y\in\mathcal{Y}$ and $W':x\in\mathcal{X}\mapsto y'\in\mathcal{Y}'$,
we have $W\preccurlyeq W'\preccurlyeq W$ if and only if $W\cong W'$, i.e., $P_{W}(\varepsilon)$ equals $P_{W'}(\varepsilon)$ for any $\varepsilon\in[0,1]$.
\end{theorem}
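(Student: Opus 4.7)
The plan is to prove each direction of the equivalence separately.

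For sufficiency, suppose $W\cong W'$, so $P_W(\varepsilon)=P_{W'}(\varepsilon)$ for every $\varepsilon\in[0,1]$. I would exhibit an intermediate channel $Q:\mathcal{Y}\to\mathcal{Y}'$ realizing $W'\preccurlyeq W$ explicitly: for each $\varepsilon$ with $P_W(\varepsilon)>0$, $y\in L_W(\varepsilon)$, and $y'\in L_{W'}(\varepsilon)$, set
\[
Q(y'|y)=\frac{\Pr(y'|x=0)+\Pr(y'|x=1)}{2P_W(\varepsilon)},
\]
and $Q(y'|y)=0$ whenever the level indices of $y$ and $y'$ differ. Stochasticity of $Q(\cdot|y)$ reduces, via $P_W(\varepsilon)=P_{W'}(\varepsilon)$ together with $\sum_{y'\in L_{W'}(\varepsilon)}\tfrac{1}{2}(\Pr(y'|x=0)+\Pr(y'|x=1))=P_{W'}(\varepsilon)$, to a routine check. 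The degradation identity (\ref{c05}) splits along level sets and, using the decompositions in (\ref{d01}) (each side becomes $\overline{\varepsilon}$ or $\varepsilon$ times a common factor), again reduces to $P_W(\varepsilon)=P_{W'}(\varepsilon)$. Swapping the roles of $W$ and $W'$ yields $W\preccurlyeq W'$.

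For necessity, assume $W\preccurlyeq W'\preccurlyeq W$, and set $S=\{\varepsilon:P_W(\varepsilon)\ne 0\}$, $S'=\{\varepsilon:P_{W'}(\varepsilon)\ne 0\}$. I would match $P_W$ and $P_{W'}$ one $\varepsilon$-level at a time, starting from $\varepsilon_0:=\min S=\min S'$ (which agree by Lemma~\ref{lem000}). Revisiting the chain (\ref{c06}) with $\varepsilon'_0=\varepsilon_0>0$, every inequality must become an equality, which pins the intermediate channel $Q$ realizing $W'\preccurlyeq W$ down to $Q(y'|y)=0$ whenever $y'\in L_{W'}(\varepsilon_0)$ and $y\notin L_W(\varepsilon_0)$ (the boundary case $\varepsilon_0=0$ is symmetric, swapping the roles of $x=0$ and $x=1$). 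Summing $\Pr(y'|x=1)=\sum_{y\in L_W(\varepsilon_0)}\Pr(y|x=1)Q(y'|y)$ over $y'\in L_{W'}(\varepsilon_0)$ and using $\sum_{y'}Q(y'|y)\leq 1$ delivers $P_{W'}(\varepsilon_0)\leq P_W(\varepsilon_0)$; the dual argument via $W\preccurlyeq W'$ yields the matching upper bound, so $P_W(\varepsilon_0)=P_{W'}(\varepsilon_0)$.

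To propagate this to all levels, I would induct on $|S\cup S'|$ by restricting both channels to the complements $\mathcal{Y}\setminus L_W(\varepsilon_0)$ and $\mathcal{Y}'\setminus L_{W'}(\varepsilon_0)$. The support analysis above shows that the restrictions of $Q$ and of its dual are valid intermediate channels on the complements, and the analogous statement $\max S=\max S'$ (from (\ref{c21})) allows the peeling to proceed simultaneously from the other end. The main technical obstacle I anticipate is the bookkeeping needed to present the restricted channels as genuine BIDMCs with uniform input: removing $L_W(\varepsilon_0)$ strips off the different conditional masses $\overline{\varepsilon_0}\cdot 2P_W(\varepsilon_0)$ and $\varepsilon_0\cdot 2P_W(\varepsilon_0)$ from the two input symbols. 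A clean workaround is to collapse $L_W(\varepsilon_0)$ (respectively $L_{W'}(\varepsilon_0)$) into a single representative symbol carrying the aggregate probabilities, producing equivalent BIDMCs with one fewer active level, to which the induction hypothesis applies.
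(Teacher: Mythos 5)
Your sufficiency direction is essentially identical to the paper's: you construct the same intermediate channel $Q$ level-set by level-set and verify stochasticity and the degradation identity, so no issues there. For necessity, your opening moves also track the paper: anchor at the extreme likelihood-ratio level via Lemma~\ref{lem000}, force equality through the chain~(\ref{c06}) to pin down the support of $Q$, and match $P_W(\varepsilon_0)=P_{W'}(\varepsilon_0)$ by summing and using the dual degradation. Your argument is in fact slightly more general at this stage since you avoid first reducing to LRP-oriented form and correctly flag the $\varepsilon_0=0$ boundary.

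The gap is in the propagation step. Your proposed induction measure is $|S\cup S'|$ (the number of distinct likelihood-ratio levels appearing in either channel), but your ``clean workaround'' of collapsing $L_W(\varepsilon_0)$ and $L_{W'}(\varepsilon_0)$ to single representative symbols does not reduce this quantity at all: collapsing a level set is LRP-preserving, so $S$ and $S'$ are unchanged and the induction does not terminate. The alternative you first contemplate --- actually deleting $L_W(\varepsilon_0)$ and $L_{W'}(\varepsilon_0)$ --- does drop a level, but as you note it strips unequal conditional masses from $x=0$ and $x=1$ whenever $\varepsilon_0\neq 1/2$, so the restricted objects are no longer BIDMCs and Lemma~\ref{lem000} (as stated for BIDMCs) cannot be reapplied to them without a separate generalization you have not supplied. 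The paper avoids this entirely by reducing once to LRP-oriented form and then inducting on the level index $j$ \emph{within the same fixed degradation pair}: after establishing $Q(y'_j|y_i)=Q(y'_i|y_j)=0$ for $i>j$, the restriction of $Q$ to indices $>j$ is already a stochastic map, so the ``minimum level'' argument can be re-run on that block without ever constructing a new channel. If you replace your peel-and-collapse scheme with this in-place block-triangularity induction, the rest of your reasoning closes the proof.
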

\begin{proof}
Assume $W\cong W'$. Let $Q'$ denote the channel which transmits $b\in\mathcal{Y}$ into
$b'\in\mathcal{Y}'$ with probability
$Q'(b'|b)$ given by
\begin{align*}
\frac{\Pr(y'=b'|x=0)+\Pr(y'=b'|x=1)}{2P_W(\varepsilon)}
\end{align*}
if $b\in L_W(\varepsilon)$ and $b'\in L_{W'}(\varepsilon)$
for some $\varepsilon\in[0,1]$, and by 0 otherwise.
Since for any $b\in L_W(\varepsilon)$ from (\ref{rr01}) we see
\begin{align*}
&\sum_{b'\in\mathcal{Y}'}Q'(b'|b)=\sum_{b'\in L_{W'}(\varepsilon)}Q'(b'|b)\\
=&\frac{1}{2P_W(\varepsilon)}\sum_{b'\in L_{W'}(\varepsilon)}(\Pr(y'=b'|x=0)+\Pr(y'=b'|x=1))
=\frac{2P_{W'}(\varepsilon)}{2P_W(\varepsilon)}=1,
\end{align*}
$Q'$ is a well-defined channel.
Therefore, for any $\varepsilon\in[0,1]$, $a\in\mathcal{X}$ and $b\in L_{W'}(\varepsilon)$ from (\ref{d01}) we have
\begin{align*}
&\sum_{y\in\mathcal{Y}}\Pr(y|x=a)Q'(y'=b|y)\\
=&\sum_{y\in L_W(\varepsilon)}\frac{\Pr(y|x=a)(\Pr(y'=b|x=0)+\Pr(y'=b|x=1))}{2P_W(\varepsilon)}\\
=&\sum_{y\in L_W(\varepsilon)}\frac{\Pr(y'=b|x=a)(\Pr(y|x=0)+\Pr(y|x=1))}{2P_W(\varepsilon)}
=\Pr(y'=b|x=a),
\end{align*}
i.e., (\ref{c05}) is valid for $Q=Q'$. Hence, we have $W'\preccurlyeq W$. Similarly, we can also get $W\preccurlyeq W'$. Therefore, we have proved the if-part of this theorem.

To show the only-if-part of this theorem, without loss of generality we assume that $W$ and $W'$ are LRP-oriented BIDMCs with
$W\preccurlyeq W'\preccurlyeq W$.
Let $y_0,y_1,\ldots,y_{n-1}$ be the elements in $\mathcal{Y}$ ordered according to $\varepsilon_0<\varepsilon_1<\cdots<\varepsilon_{n-1}$, where $\varepsilon_i\in [0,1]$ is the number satisfying
$y_i\in L_W(\varepsilon_i)$ for each $i\in[n]$.
Let $y'_0,y'_1,\ldots,y'_{m-1}$ be the elements in $\mathcal{Y}'$ ordered according to $\varepsilon'_0<\varepsilon'_1<\cdots<\varepsilon'_{m-1}$, where $\varepsilon'_i\in [0,1]$ is the number satisfying
$y'_i\in L_{W'}(\varepsilon'_i)$ for each $i\in[m]$. Assume that $Q$ is the intermediate channel of the degradation $W' \preccurlyeq W$, then $Q$ transmits $y\in\mathcal{Y}$ into $y'\in\mathcal{Y}'$ with probability $Q(y'|y)$ that satisfies (\ref{c05}).

According to $W\preccurlyeq W'\preccurlyeq W$ and Lemma~\ref{lem000}, we see $\varepsilon_0=\varepsilon'_0$.
Therefore, from (\ref{c06}) we have
$Q(y'_0|y_i)=0$ for $i>0$,
and then from (\ref{c05}) we see
\begin{align*}\Pr(y'_0|x=0)=\Pr(y_0|x=0)Q(y'_0|y_0)\leq \Pr(y_0|x=0).
\end{align*}
Similarly, one can also get $\Pr(y_0|x=0)\leq \Pr(y'_0|x=0)$.
Hence, we have
$P_W(\varepsilon_0)=P_{W'}(\varepsilon_0)$, $Q(y'_0|y_0)=1$
and $Q(y'_i|y_0)=0$ for $i>0$.

Furthermore, according to similar arguments one can show by induction, for any $j\geq 0$,
\begin{gather*}
\varepsilon_j=\varepsilon'_j,\ P_W(\varepsilon_j)=P_{W'}(\varepsilon_j),\ Q(y'_j|y_j)=1,\\
Q(y'_j|y_i)=Q(y'_i|y_j)=0\text{ for }i>j.
\end{gather*}
Hence, we must have $W\cong W'$.
\end{proof}

The following lemma encompasses several well-known results regarding degradation channels (see, for example, \cite{Urbanke08}, \cite{Vardy13}, \cite{Urbanke19}, etc.). For the sake of completeness, we present a new proof for it here.

\begin{lemma}\label{lem000'}
Assume that $W:x\in\mathcal{X}\mapsto y\in\mathcal{Y}$ and $W':x\in\mathcal{X}\mapsto y'\in\mathcal{Y}'$ are two BIDMCs with $W'\preccurlyeq W$.
Then, we have
\begin{gather}
I(W)\geq I(W'),\
P_{\epsilon}(W)\leq P_{\epsilon}(W'),\
Z(W)\leq Z(W').\label{c26}
\end{gather}
\end{lemma}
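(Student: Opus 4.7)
The plan is to derive all three inequalities uniformly from Theorem~\ref{lem00s} by combining the degradation relation (\ref{c05}) with Jensen's inequality applied to three concave functions on $[0,1]$: the binary entropy $\hbar(\varepsilon)$, the min-function $\min\{\varepsilon,\overline{\varepsilon}\}$, and $\sqrt{\varepsilon\overline{\varepsilon}}$. Concavity of the first two is classical, and for the third a direct second-derivative calculation (or the observation that $(\varepsilon,\sqrt{\varepsilon\overline{\varepsilon}})$ traces the upper semicircle of radius $1/2$ centred at $(1/2,0)$) suffices.

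First I would translate degradation into a convex-combination statement on the LRP classes. For each $y'\in L_{W'}(\varepsilon')$, set
\[
\gamma_\varepsilon(y')=\sum_{y\in L_W(\varepsilon)}\bigl(\Pr(y|x=0)+\Pr(y|x=1)\bigr)Q(y'|y),\qquad \Gamma(y')=\sum_{\varepsilon\in[0,1]}\gamma_\varepsilon(y').
\]
Using (\ref{d01}) to express $\Pr(y|x=a)$ in terms of $\varepsilon,\overline{\varepsilon}$ and $\Pr(y|x=0)+\Pr(y|x=1)$ for $y\in L_W(\varepsilon)$, the relation (\ref{c05}) yields $\Pr(y'|x=0)=\sum_{\varepsilon}\overline{\varepsilon}\,\gamma_\varepsilon(y')$ and $\Pr(y'|x=1)=\sum_{\varepsilon}\varepsilon\,\gamma_\varepsilon(y')$. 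The defining identity $\varepsilon'\Pr(y'|x=0)=\overline{\varepsilon'}\Pr(y'|x=1)$ for $L_{W'}(\varepsilon')$ then forces
\[
\varepsilon'=\sum_{\varepsilon\in[0,1]}\frac{\gamma_\varepsilon(y')}{\Gamma(y')}\,\varepsilon,
\]
exhibiting $\varepsilon'$ as the mean of the probability distribution $\{\gamma_\varepsilon(y')/\Gamma(y')\}_{\varepsilon\in[0,1]}$.

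Second, for any of the three concave functions $f$, Jensen's inequality gives $\Gamma(y')f(\varepsilon')\geq\sum_{\varepsilon}\gamma_\varepsilon(y')f(\varepsilon)$. Summing over $y'\in\mathcal{Y}'$, I would use the identities $\sum_{y'\in\mathcal{Y}'}\gamma_\varepsilon(y')=2P_W(\varepsilon)$ (which follows from $\sum_{y'}Q(y'|y)=1$ together with (\ref{rr01})) and $\sum_{y'\in L_{W'}(\varepsilon')}\Gamma(y')=2P_{W'}(\varepsilon')$ (since $\Gamma(y')=\Pr(y'|x=0)+\Pr(y'|x=1)$ by (\ref{c05})) to obtain the master inequality
\[
\sum_{\varepsilon'\in[0,1]}P_{W'}(\varepsilon')f(\varepsilon')\;\geq\;\sum_{\varepsilon\in[0,1]}P_W(\varepsilon)f(\varepsilon).
\]
Specialising $f=\hbar$ and invoking (\ref{f02}) yields $I(W)\geq I(W')$ (the direction flips because of the $1-\sum$ structure); specialising $f(\varepsilon)=\min\{\varepsilon,\overline{\varepsilon}\}$ and invoking (\ref{f01}) gives $P_{\epsilon}(W)\leq P_{\epsilon}(W')$; and specialising $f(\varepsilon)=\sqrt{\varepsilon\overline{\varepsilon}}$ and invoking (\ref{f03}) gives $Z(W)\leq Z(W')$.

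The only genuine obstacle is the bookkeeping in the first step, namely recognising the correct weights $\gamma_\varepsilon(y')/\Gamma(y')$ that realise $\varepsilon'$ as a convex combination of the underlying $\varepsilon$'s. Once that identity is in hand, the rest is a routine application of Jensen's inequality, and all three conclusions in (\ref{c26}) follow from the same master inequality by plugging in the three concave functions.
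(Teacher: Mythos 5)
Your proposal is correct and follows essentially the same route as the paper: both proofs translate the degradation relation (\ref{c05}) into the statement that each likelihood ratio $\varepsilon'$ of $W'$ is a convex combination of likelihood ratios $\varepsilon$ of $W$ (with weights coming from the intermediate channel), and then apply Jensen's inequality to the concave functions $\hbar$, $\min\{\varepsilon,\overline{\varepsilon}\}$ and $\sqrt{\varepsilon\overline{\varepsilon}}$ together with Theorem~\ref{lem00s}. The only cosmetic difference is that the paper first reduces to LRP-oriented representatives (so the weights become the concrete $P_W(\varepsilon_i)q_{i,j}/P_{W'}(\varepsilon'_j)$), whereas you work directly on the full output alphabets via the $\gamma_\varepsilon(y')/\Gamma(y')$ weights — the same computation, just without the WLOG step.
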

\begin{proof}
Without loss of generality, we assume that $W$, $W'$ are LRP-oriented, $\mathcal{Y}=\{b_i:i\in[n]\}$ and $\mathcal{Y}'=\{b'_j:j\in[m]\}$.
For $i\in[n]$ and $j\in[m]$, let $\varepsilon_i$ and $\varepsilon'_j$ be the numbers in $[0,1]$ with
$L_W(\varepsilon_i)=\{b_i\}$ and $L_{W'}(\varepsilon'_j)=\{b'_j\}$, respectively.  Assume that the intermediate channel of $W' \preccurlyeq W$ transmits $b_i$ into $b'_j$ with probability $q_{i,j}$.
Then, for $j\in[m]$ we have
\begin{gather*}
P_{W'}(\varepsilon'_j)=\frac{1}{2}\sum_{x\in\mathcal{X}}\Pr(y'=b'_j|x)
=\frac{1}{2}\sum_{x\in\mathcal{X}}\sum_{i\in[n]}\Pr(y=b_i|x)q_{i,j}
=\sum_{i\in[n]}P_W(\varepsilon_i)q_{i,j},\\
\varepsilon'_j=\frac{\Pr(y'=b'_j|x=1)}{2P_{W'}(\varepsilon'_j)}
=\frac{\sum_{i\in[n]}\Pr(y=b_i|x=1)q_{i,j}}{2P_{W'}(\varepsilon'_j)}
=\sum_{i\in[n]}\varepsilon_i \frac{P_W(\varepsilon_i)q_{i,j}}{P_{W'}(\varepsilon'_j)}.
\end{gather*}
Therefore, according to Theorem~\ref{lem00s} and the convexity of the functions $\hbar(\varepsilon)$, $\min\{\varepsilon,\overline{\varepsilon}\}$ and $\sqrt{\varepsilon\overline{\varepsilon}}$,
by using the well-known Jensen inequality we get
\begin{gather*}
I(W)=1-\sum_{j\in[m]}\sum_{i\in[n]}\hbar(\varepsilon_i)P_W(\varepsilon_i)q_{i,j}\geq 1-\sum_{j\in[m]}\hbar(\varepsilon'_j)P_{W'}(\varepsilon'_j)=I(W'),\\
P_{\epsilon}(W)=
\sum_{j\in[m]}\sum_{i\in[n]}\min\{\varepsilon_i,\overline{\varepsilon_i}\}
P_W(\varepsilon_i)q_{i,j}\leq \sum_{j\in[m]}\min\{\varepsilon'_j,\overline{\varepsilon'_j}\}
P_{W'}(\varepsilon'_j)=P_{\epsilon}(W'),\\
Z(W)=2\sum_{j\in[m]}\sum_{i\in[n]}\sqrt{\varepsilon_i\overline{\varepsilon_i}}P_W(\varepsilon_i)q_{i,j}
\leq 2\sum_{j\in[m]}\sqrt{\varepsilon'_j\overline{\varepsilon'_j}}P_{W'}(\varepsilon'_j)
=Z(W').
\end{gather*}

The proof is completed.
\end{proof}

\section{Arikan Transformations of BIDMCs}
\label{sec03}

Let $(u_0,u_1)$ be a random vector uniformly distributed over $\mathcal{X}^2$.
Assume that $u_0+u_1$ is transmitted over $W_0:x_0\in\mathcal{X}\mapsto y_0\in\mathcal{Y}_0$, whereas $u_1$ is transmitted over $W_1:x_1\in\mathcal{X}\mapsto y_1\in\mathcal{Y}_1$, respectively,
where the BIDMCs $W_0$ and $W_1$ are independent.
Let $A_0(W_0,W_1)$ denote the synthetic BIDMC: $u_0\mapsto(y_0,y_1)$ with transition probabilities
\begin{align}
  \Pr(y_0,y_1|u_0)=\frac{1}{2}\sum_{u_1\in\mathcal{X}}\Pr(y_0|x_0=u_0+u_1)\Pr(y_1|x_1=u_1).\label{at06}
\end{align}
Let $A_1(W_0,W_1)$ denote the synthetic BIDMC: $u_1\mapsto(y_0,y_1,u_0)$ with transition probabilities
\begin{align}
  \Pr(y_0,y_1,u_0|u_1)=\frac{1}{2}\Pr(y_0|x_0=u_0+u_1)\Pr(y_1|x_1=u_1).\label{at07}
\end{align}

We note that $A_0(W_0,W_1)$ and $A_1(W_0,W_1)$ are just the synthetic BIDMCs denoted by $W_0\boxtimes W_1$ and $W_0\circledast W_1$, respectively, in \cite{Hirche18}.
Since these synthetic BIDMCs played important roles in the proposal of Arikan's polar codes, we will call them and their compositions as \emph{Arikan Transformations} of some underlying BIDMCs in this paper.
For example,
$A_0(A_1(W_0,W_1),W_2)$ is an Arikan transformation of independent BIDMCs $W_0,W_1,W_2$.

\subsection{Some Elementary Properties of Arikan Transformations}

In this subsection we show some elementary properties of Arikan transformations of BIDMCs.

Firstly, we prove that any Arikan transformation will degrade when the underlying channels are substituted with some channels that are degradations of the original ones.
Actually, we have the following theorem, which is a generalization of Lemma 5 in \cite{Vardy13}.
\begin{theorem}
\label{theo21}
If $W'_0:x_0\in\mathcal{X}\mapsto y'_0\in\mathcal{Y}'_0$ and $W'_1:x_1\in\mathcal{X}\mapsto y'_1\in\mathcal{Y}'_1$ are BIDMCs with $W'_0\preccurlyeq W_0$ and $W'_1\preccurlyeq W_1$, then
we have
\begin{align}\label{at03}
A_0(W'_0,W'_1)\preccurlyeq A_0(W_0,W_1),\\
A_1(W'_0,W'_1)\preccurlyeq A_1(W_0,W_1).\label{at04}
\end{align}
\end{theorem}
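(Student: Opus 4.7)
The plan is to reduce the two-sided degradation to two one-sided ones and then exhibit explicit intermediate channels for each of the four resulting cases. More precisely, by transitivity of $\preccurlyeq$ it suffices to establish the two intermediate degradations
\[
A_i(W'_0,W'_1)\preccurlyeq A_i(W'_0,W_1)\preccurlyeq A_i(W_0,W_1),\qquad i\in\{0,1\},
\]
so four claims must be checked, each of the form: degrading only one of the two underlying channels produces a degradation of the Arikan transformation.

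Let $Q_0:\mathcal{Y}_0\to\mathcal{Y}'_0$ and $Q_1:\mathcal{Y}_1\to\mathcal{Y}'_1$ be the intermediate channels of $W'_0\preccurlyeq W_0$ and $W'_1\preccurlyeq W_1$, respectively. For each of the four reductions I would take the intermediate channel to be the one that ``passes through'' the relevant coordinate:
\begin{itemize}
\item for $A_0(W'_0,W_1)\preccurlyeq A_0(W_0,W_1)$, use $\tilde{Q}\bigl((y'_0,y_1)\,|\,(y_0,y_1^\ast)\bigr)=Q_0(y'_0|y_0)\,\mathbf{1}[y_1=y_1^\ast]$;
\item for $A_0(W'_0,W'_1)\preccurlyeq A_0(W'_0,W_1)$, use the analogous channel applying $Q_1$ to the second coordinate;
\item for $A_1$, the outputs additionally carry $u_0$, so the intermediate channels keep $u_0$ untouched and apply $Q_0$ or $Q_1$ to the corresponding coordinate.
\end{itemize}

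The verification is then a direct substitution into the defining formulas (\ref{at06}) and (\ref{at07}). For instance, for the first bullet, using (\ref{at06}) and the degradation relation $\Pr(y'_0|x_0)=\sum_{y_0}\Pr(y_0|x_0)Q_0(y'_0|y_0)$ one computes
\begin{align*}
\Pr(y'_0,y_1|u_0)
&=\tfrac{1}{2}\sum_{u_1\in\mathcal{X}}\Pr(y'_0|x_0=u_0{+}u_1)\Pr(y_1|x_1=u_1)\\
&=\sum_{y_0\in\mathcal{Y}_0}\tfrac{1}{2}\sum_{u_1\in\mathcal{X}}\Pr(y_0|x_0=u_0{+}u_1)\Pr(y_1|x_1=u_1)\,Q_0(y'_0|y_0)\\
&=\sum_{y_0\in\mathcal{Y}_0}\Pr(y_0,y_1|u_0)\,Q_0(y'_0|y_0),
\end{align*}
which is precisely (\ref{c05}) for $\tilde Q$. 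The other three verifications are structurally identical, with the sum over $u_1$ interchanged with the sum over $y_1$ (second bullet) and, for $A_1$, with the factor $\tfrac12$ replacing the average over $u_1$ and $u_0$ riding along as a passive coordinate.

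I do not anticipate a real obstacle here: the only subtlety is that in the $A_1$ case $u_0$ appears in the output, and one must check that the intermediate channel is still well defined as a stochastic kernel because it acts on $y_0$ (or $y_1$) independently of $u_0$, with the summation-to-one and nonnegativity properties inherited directly from $Q_0$ and $Q_1$. Once the bookkeeping of indices is set up, the chain $A_i(W'_0,W'_1)\preccurlyeq A_i(W'_0,W_1)\preccurlyeq A_i(W_0,W_1)$ yields (\ref{at03}) and (\ref{at04}) via transitivity.
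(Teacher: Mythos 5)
Your proof is correct. Compared to the paper, which constructs in a single step the \emph{combined} intermediate channel $T(y'_0,y'_1\,|\,y_0,y_1)=Q_0(y'_0|y_0)Q_1(y'_1|y_1)$ for $A_0$ (and the analogous $R$, carrying $u_0$ unchanged, for $A_1$), you factor the two-sided degradation into two one-sided ones chained by transitivity, each with an intermediate channel that touches only one coordinate. The two presentations are in essence the same construction --- the paper's product kernel is exactly the composition of your two one-coordinate kernels --- but your factorization reduces the verification to a single prototype computation (the interchange of the sum over $y_0$ with the averaging over $u_1$), with the other three cases following by symmetry of the roles of $(W_0,Q_0)$ and $(W_1,Q_1)$ and by the observation that for $A_1$ the coordinate $u_0$ passes through as a deterministic identity component of the intermediate kernel. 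The paper's one-shot version has the advantage of producing the explicit combined intermediate channel that is sometimes convenient to refer to later; your two-step version localizes the bookkeeping and makes it slightly clearer why no interaction between $Q_0$ and $Q_1$ is needed. Both are complete and rigorous once the stochasticity of the intermediate kernels is noted, which you did.
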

\begin{proof}
Let $Q_i$ be the intermediate channel of the degradation $W'_i\preccurlyeq W_i$ for $i=0,1$. According to (\ref{c05}) and (\ref{at06}), we have
\begin{align}
&\Pr(y'_0,y'_1|u_0)=\frac{1}{2}\sum_{a\in\mathcal{X}}\Pr(y'_0|x_0=u_0+a)\Pr(y'_1|x_1=a)\nonumber\\
=&\frac{1}{2}\sum_{a\in\mathcal{X}}
\sum_{y_0\in\mathcal{Y}_0}\Pr(y_0|x_0=u_0+a)Q_0(y'_0|y_0)
\sum_{y_1\in\mathcal{Y}_1}\Pr(y_1|x_1=a)Q_1(y'_1|y_1)\nonumber\\
=&\sum_{y_0\in\mathcal{Y}_0,y_1\in\mathcal{Y}_1}\Pr(y_0,y_1|u_0)Q_0(y'_0|y_0)Q_1(y'_1|y_1)\nonumber\\
=&\sum_{(y_0,y_1)\in\mathcal{Y}_0\times\mathcal{Y}_1}\Pr(y_0,y_1|u_0)T(y'_0,y'_1|y_0,y_1),\label{at07}
\end{align}
where $T(y'_0,y'_1|y_0,y_1)=Q_0(y'_0|y_0)Q_1(y'_1|y_1)$
is the transition probability of some channel
$T: (y_0,y_1)\in\mathcal{Y}_0\times\mathcal{Y}_1\mapsto (y'_0,y'_1)\in\mathcal{Y}'_0\times\mathcal{Y}'_1$ since we have
\begin{gather*}
\sum_{(y'_0,y'_1)\in\mathcal{Y}'_0\times\mathcal{Y}'_1}T(y'_0,y'_1|y_0,y_1)=
\sum_{(y'_0,y'_1)\in\mathcal{Y}'_0\times\mathcal{Y}'_1}Q_0(y'_0|y_0)Q_1(y'_1|y_1)=1.
\end{gather*}
Hence, from (\ref{at07}) we see (\ref{at03}).
According to (\ref{c05}) and (\ref{at07}), we have
\begin{align}
&\Pr(y'_0,y'_1,u_0|u_1)=\frac{1}{2}\Pr(y'_0|x_0=u_0+u_1)\Pr(y'_1|x_1=u_1)\nonumber\\
=&\frac{1}{2}\sum_{y_0\in\mathcal{Y}_0}\Pr(y_0|x_0=u_0+u_1)Q_0(y'_0|y_0)
\sum_{y_1\in\mathcal{Y}_1}\Pr(y_1|x_1=u_1)Q_1(y'_1|y_1)\nonumber\\
=&\sum_{y_0\in\mathcal{Y}_0,y_1\in\mathcal{Y}_1}\Pr(y_0,y_1,u_0|u_1)Q_0(y'_0|y_0)Q_1(y'_1|y_1)
\nonumber\\
=&\sum_{(y_0,y_1,u)\in\mathcal{Y}_0\times\mathcal{Y}_1\times\mathcal{X}}
\Pr(y_0,y_1,u|u_1)R(y'_0,y'_1,u_0|y_0,y_1,u),
\label{at05}
\end{align}
where $R(y'_0,y'_1,u_0|y_0,y_1,u)$ equals to $Q_0(y'_0|y_0)Q_1(y'_1|y_1)$ if $u=u_0$, and 0 otherwise.
From
\begin{align*}
\sum_{(y'_0,y'_1,u_0)\in\mathcal{Y}'_0\times\mathcal{Y}'_1
\times\mathcal{X}}R(y'_0,y'_1,u_0|y_0,y_1,u)
=\sum_{(y'_0,y'_1)\in\mathcal{Y}'_0\times\mathcal{Y}'_1}Q_0(y'_0|y_0)Q_1(y'_1|y_1)=1
\end{align*}
we see that $R(y'_0,y'_1,u_0|y_0,y_1,u)$ is the transition probability of some channel
$R: (y_0,y_1,u)\in\mathcal{Y}_0\times\mathcal{Y}_1\times\mathcal{X}\mapsto (y'_0,y'_1,u_0)\in\mathcal{Y}'_0\times\mathcal{Y}'_1\times\mathcal{X}$.
Therefore, from (\ref{at05}) we see (\ref{at04}).
\end{proof}

Based on Theorems~\ref{theo20} and \ref{theo21}, the following corollary can be easily deduced. This corollary demonstrates that any Arikan transformation will remain invariant, in terms of equivalence, whenever any of the underlying channels is substituted by some of its equivalent channels.
\begin{corollary}
If $W'_0\cong W_0$ and $W'_1\cong W_1$, then we have
\begin{gather}
A_0(W'_0,W'_1)\cong A_0(W_0,W_1),\
A_1(W'_0,W'_1)\cong A_1(W_0,W_1).\label{at09}
\end{gather}
\end{corollary}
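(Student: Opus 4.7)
The plan is to chain together Theorem~\ref{theo20} and Theorem~\ref{theo21}, which together do essentially all the work. The key observation is that equivalence of BIDMCs, thanks to Theorem~\ref{theo20}, decomposes into a pair of degradation statements going both ways. So I would first unpack the hypotheses into their degradation consequences, then push them through the monotonicity of the Arikan transformations, then repack the result back into an equivalence.

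Concretely, I would begin by applying Theorem~\ref{theo20} to the assumption $W'_0\cong W_0$, obtaining $W'_0\preccurlyeq W_0$ and $W_0\preccurlyeq W'_0$ simultaneously; likewise from $W'_1\cong W_1$ I would extract $W'_1\preccurlyeq W_1$ and $W_1\preccurlyeq W'_1$. Next I would feed the pair $W'_0\preccurlyeq W_0$ and $W'_1\preccurlyeq W_1$ into Theorem~\ref{theo21} to conclude
\[
A_0(W'_0,W'_1)\preccurlyeq A_0(W_0,W_1),\qquad A_1(W'_0,W'_1)\preccurlyeq A_1(W_0,W_1),
\]
and then feed the reverse pair $W_0\preccurlyeq W'_0$ and $W_1\preccurlyeq W'_1$ into the same Theorem~\ref{theo21} to obtain
\[
A_0(W_0,W_1)\preccurlyeq A_0(W'_0,W'_1),\qquad A_1(W_0,W_1)\preccurlyeq A_1(W'_0,W'_1).
\]

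Finally I would close the loop by invoking the only-if part of Theorem~\ref{theo20} once on each of the two Arikan transformations: having both $A_0(W'_0,W'_1)\preccurlyeq A_0(W_0,W_1)$ and $A_0(W_0,W_1)\preccurlyeq A_0(W'_0,W'_1)$ forces $A_0(W'_0,W'_1)\cong A_0(W_0,W_1)$, and identically for $A_1$. This yields the two equivalences in display~(\ref{at09}).

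There is no real obstacle here; the corollary is a formal consequence of the two-sided reading of $\cong$ that Theorem~\ref{theo20} provides together with the monotonicity built into Theorem~\ref{theo21}. The only thing to be mildly careful about is simply making the argument symmetric — explicitly invoking Theorem~\ref{theo21} twice with the roles of the primed and unprimed channels swapped — rather than quietly assuming one direction suffices.
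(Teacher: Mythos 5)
Your argument is exactly the one the paper intends: the paper states the corollary "can be easily deduced" from Theorems~\ref{theo20} and \ref{theo21} without writing it out, and your chain (unpack $\cong$ into two degradations via Theorem~\ref{theo20}, push each direction through Theorem~\ref{theo21}, repack via Theorem~\ref{theo20}) is precisely that deduction. Correct, and the same approach.
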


It should be noted that this corollary actually constitutes a generalization of Proposition 4 in \cite{Ye18}. The reason for this is that our definition of the equivalence of BIDMCs is somewhat less restrictive compared to the one provided therein.

Furthermore, as shown in the following theorem, Arikan transformations also admit a natural operation for the RSCs of BIDMCs.

\begin{theorem}\label{lem200}
If $W_0=\sum_{i\in[n]}p_iW'_i$ and $W_1=\sum_{j\in[m]}q_jW''_j$, where $W'_i:x_0\in\mathcal{X}\mapsto
y'_i\in\mathcal{Y}'_i$ and $W''_j:x_1\in\mathcal{X}\mapsto
y''_j\in\mathcal{Y}''_j$ are independent BIDMCs for $i\in[n']$ and $j\in[m]$, then
\begin{align}\label{at00}
A_0(W_0,W_1)\cong\sum_{i\in[n],j\in[m]}p_iq_jA_0(W'_i,W''_j),\\
A_1(W_0,W_1)\cong\sum_{i\in[n],j\in[m]}p_iq_jA_1(W'_i,W''_j).\label{at01}
\end{align}
\end{theorem}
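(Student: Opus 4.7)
The plan is to exhibit an explicit RSC decomposition of $A_0(W_0,W_1)$ whose sub-channels are precisely the Arikan transformations $A_0(W'_i,W''_j)$ weighted by $p_iq_j$, and to argue analogously for $A_1$. Since $\{\mathcal{Y}'_i\}_{i\in[n]}$ partitions the output alphabet $\mathcal{Y}_0$ of $W_0$ and $\{\mathcal{Y}''_j\}_{j\in[m]}$ partitions $\mathcal{Y}_1$, the natural candidate partition of the output alphabet $\mathcal{Y}_0\times\mathcal{Y}_1$ of $A_0(W_0,W_1)$ is $\{\mathcal{Y}'_i\times\mathcal{Y}''_j\}_{(i,j)\in[n]\times[m]}$, and for $A_1(W_0,W_1)$ the corresponding partition of $\mathcal{Y}_0\times\mathcal{Y}_1\times\mathcal{X}$ is $\{\mathcal{Y}'_i\times\mathcal{Y}''_j\times\mathcal{X}\}_{(i,j)}$.

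The main calculation for $A_0$ proceeds by substituting the identities $\Pr(y_0|x_0)=p_i\Pr_{W'_i}(y_0|x_0)$, valid for $y_0\in\mathcal{Y}'_i$, and $\Pr(y_1|x_1)=q_j\Pr_{W''_j}(y_1|x_1)$, valid for $y_1\in\mathcal{Y}''_j$ (both coming from (\ref{60})), into the defining formula (\ref{at06}). The factors $p_i$ and $q_j$ pull out of the summation over $u_1$, leaving behind precisely the $A_0(W'_i,W''_j)$-transition probability from $u_0$ to $(y_0,y_1)$. Summing $\Pr(y_0,y_1|u_0)$ over $(y_0,y_1)\in\mathcal{Y}'_i\times\mathcal{Y}''_j$ then yields $p_iq_j$, independent of $u_0$, which confirms both the RSC structure in the sense of (\ref{60}) and the value of the switching probability. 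Hence (\ref{at00}) follows, in fact as an equality of channels rather than only as an equivalence.

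The argument for (\ref{at01}) is essentially identical, and because there is no sum over $u_1$ in the definition of $A_1$, the factorization is even more immediate; the independence check is carried out by summing over $(y_0,y_1,u_0)\in\mathcal{Y}'_i\times\mathcal{Y}''_j\times\mathcal{X}$. I do not anticipate any conceptual obstacle; the only care needed is notational bookkeeping of the three different probability spaces involved (those of $W_0$, of its sub-channels $W'_i$, and of the aggregated Arikan output), together with the observation that the independence of $W_0$ and $W_1$ is what licenses the multiplicative split $\Pr(y_0|x_0)\Pr(y_1|x_1)=p_iq_j\Pr_{W'_i}(y_0|x_0)\Pr_{W''_j}(y_1|x_1)$ inside the Arikan formulas.
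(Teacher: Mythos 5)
Your proposal is correct and follows essentially the same path as the paper's own proof: partition the output alphabet of the Arikan transformation as $\{\mathcal{Y}'_i\times\mathcal{Y}''_j\}_{(i,j)}$ (resp. $\{\mathcal{Y}'_i\times\mathcal{Y}''_j\times\mathcal{X}\}_{(i,j)}$), substitute the sub-channel identities from (\ref{60}) into (\ref{at06}) and (\ref{at07}) to pull out the factor $p_iq_j$, and verify by summing over each block that the total probability $p_iq_j$ is independent of the input, so the RSC structure of (\ref{rr02}) applies. Your added remark that the decomposition is in fact an equality of channels (not merely an equivalence) is also accurate, since the sub-channel output alphabets exhaust the full product alphabet.
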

\begin{proof}
At first, we note that $\mathcal{Y}_0=\cup_{i\in[n]}\mathcal{Y}'_i$ and $\mathcal{Y}_1=\cup_{j\in[m]}\mathcal{Y}''_j$, the output alphabets of $W_0$ and $W_1$, have
partitions $\{\mathcal{Y}'_i\}_{i\in[n]}$ and $\{\mathcal{Y}''_j\}_{j\in[m]}$ respectively.

For $a_i\in\mathcal{Y}'_i$ and $b_j\in\mathcal{Y}''_j$, we have
\begin{align}
&\Pr(y_0=a_i,y_1=b_j|u_0)\nonumber\\
=&\frac{1}{2}\sum_{u_1\in\mathcal{X}}\Pr(y_0=a_i|x_0=u_0+u_1)\Pr(y_1=b_j|x_1=u_1)\nonumber\\
=&\frac{1}{2}\sum_{u_1\in\mathcal{X}}p_i\Pr(y'_i=a_i|x_0=u_0+u_1)q_j\Pr(y''_j=b_j|x_1=u_1)\nonumber\\
=&p_iq_j\Pr(y'_i=a_i,y''_j=b_j|u_0),\label{yy0}
\end{align}
and
\begin{align}
&\Pr(y_0=a_i,y_1=b_j,u_0|u_1)\nonumber\\
=&\frac{1}{2}\Pr(y_0=a_i|x_0=u_0+u_1)\Pr(y_1=b_j|x_1=u_1)\nonumber\\
=&\frac{1}{2}p_i\Pr(y'_i=a_i|x_0=u_0+u_1)q_j\Pr(y''_j=b_j|x_1=u_1)\nonumber\\
=&p_iq_j\Pr(y'_i=a_i,y''_j=b_j,u_0|u_1).\label{yy1}
\end{align}
Hence, we see
\begin{align*}
  &\sum_{(a_i,b_j)\in\mathcal{Y}'_i\times\mathcal{Y}''_j}\Pr(y_0=a_i,y_1=b_j|u_0)\\
  =&\sum_{(a_i,b_j)\in\mathcal{Y}'_i\times
  \mathcal{Y}''_j}p_iq_j\Pr(y'_i=a_i,y''_j=b_j|u_0)=p_iq_j
\end{align*}
is independent of $u_0$, and
\begin{align*}
  &\sum_{(a_i,b_j,u_0)\in\mathcal{Y}'_i\times\mathcal{Y}''_j\times
  \mathcal{X}}\Pr(y_0=a_i,y_1=b_j,u_0|u_1)\\
  =&\sum_{(a_i,b_j,u_0)\in\mathcal{Y}'_i\times\mathcal{Y}''_j\times
  \mathcal{X}}p_iq_j\Pr(y'_i=a_i,y''_j=b_j,u_0|u_1)=p_iq_j
\end{align*}
is also independent of $u_1$. Therefore, according to the definitions of the transformations $A_0(W_0,W_1)$, $A_1(W_0,W_1)$ and the random switching of BIDMCs, we see that (\ref{at00}) and (\ref{at01}) follow from (\ref{yy0}) and (\ref{yy1}) respectively.
\end{proof}
\subsection{Arikan Transformations of Symmetric BIDMCs}

In this subsection, we deal with the Arikan transformations when the underlying channels are symmetric BIDMCs.

For real numbers $a,b\in[0,1]$, let $a\star b=\bar{a}b+a\bar{b}$ and
\begin{gather}a\diamond b=\left\{\begin{array}{ll}
ab/(\overline{a}\star b),&\text{if }\{a,b\}\subset(0,1),\\
0,&\text{if }\{a,b\}\cap \{0,1\}\neq \emptyset.
\end{array}\right.\end{gather}
One can deduce the following lemma easily.
\begin{lemma}\label{lem201}
For the operations $\star$ and $\diamond$, we have
\begin{gather}
a\star b=b\star a,\ (a\star b)\star c=a \star(b\star c),\label{gg0}\\
a\diamond b=b\diamond a,\ (a\diamond b)\diamond c=a \diamond(b\diamond c),\label{gg1}\\
\overline{a\star b}=\overline{a}\star b,\ a\star \frac{1}{2}=\frac{1}{2},\label{gg2}\\
 \overline{a\diamond b}=\overline{a}\diamond \overline{b}\text{ for }\{a,b\}\subset (0,1),\label{gg3}\\
a\diamond \overline{a}=\frac{1}{2}\text{ and }a\diamond \frac{1}{2}=a\ \text{for }a\in(0,1).\label{gg4}
\end{gather}
Furthermore, $\langle(0,1),\diamond\rangle$ is an Abel group and
$$
\langle(0,1),\star\rangle,\ \langle[0,1/2],\star\rangle,\ \langle[0,1/2],\diamond\rangle
$$
are commutative semi-groups.
\end{lemma}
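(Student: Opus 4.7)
The plan is to reduce both operations to ordinary multiplication via well-chosen bijections, after which essentially every claim becomes automatic. For $\star$, the map $M:a\mapsto 1-2a$ sends $[0,1]$ bijectively onto $[-1,1]$. Expanding gives $a\star b=a+b-2ab$, hence
$$M(a\star b)=1-2(a+b-2ab)=(1-2a)(1-2b)=M(a)M(b),$$
so $\star$ is conjugate to ordinary multiplication under $M$. This immediately yields (\ref{gg0}), the identity $a\star\frac{1}{2}=\frac{1}{2}$ (since $M(\frac{1}{2})=0$ is the multiplicative zero), and closure of $\star$ on $[0,\frac{1}{2}]$ (since $M$ sends $[0,\frac{1}{2}]$ bijectively onto $[0,1]$, which is closed under multiplication, giving the semi-group structure there). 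The complement identity $\overline{a\star b}=\bar a\star b$ of (\ref{gg2}) drops out from $M(\bar a)=-M(a)$, since then $M(\overline{a\star b})=-M(a)M(b)=M(\bar a)M(b)=M(\bar a\star b)$.

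For $\diamond$, I use the likelihood ratio $L:a\in(0,1)\mapsto a/\bar a\in(0,\infty)$, which is a bijection onto the positive reals. From the definition together with $\bar a\star b=ab+\bar a\bar b$, I can rewrite $a\diamond b=ab/(ab+\bar a\bar b)$, and a one-line computation gives
$$L(a\diamond b)=\frac{ab/(ab+\bar a\bar b)}{\bar a\bar b/(ab+\bar a\bar b)}=\frac{ab}{\bar a\bar b}=L(a)L(b).$$
Thus $\langle(0,1),\diamond\rangle$ is isomorphic via $L$ to $\langle(0,\infty),\cdot\rangle$, an Abelian group with identity $1=L(\frac{1}{2})$ and inverses $L(a)^{-1}=L(\bar a)$. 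This single observation yields (\ref{gg1}), (\ref{gg4}), and the Abelian group claim. The complement identity (\ref{gg3}) follows by applying $L$ and using $L(\bar x)=L(x)^{-1}$: $L(\overline{a\diamond b})=L(a)^{-1}L(b)^{-1}=L(\bar a\diamond\bar b)$.

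It remains only to verify closure of $\diamond$ on $[0,\frac{1}{2}]$. For $a,b\in(0,\frac{1}{2}]$, the inequality $a\diamond b\le\frac{1}{2}$ reduces to $ab\le\bar a\bar b$, i.e.\ $a+b\le 1$, which is automatic; cases with an argument equal to $0$ are handled by the absorbing convention $a\diamond b=0$, under which commutativity and associativity extend trivially. The main (and in fact only) obstacle throughout is the bookkeeping of these boundary cases $\{a,b\}\cap\{0,1\}\ne\emptyset$ in the definition of $\diamond$; away from the boundary the two homomorphism identities above render the lemma essentially routine.
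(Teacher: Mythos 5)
Your proof is correct. The paper itself offers no argument for this lemma (it merely asserts that one can deduce it easily), so there is no paper proof to compare against; your reduction via the two homomorphisms $M(a)=1-2a$ (conjugating $\star$ to multiplication on $[-1,1]$) and $L(a)=a/\bar a$ (conjugating $\diamond$ to multiplication on $(0,\infty)$) is the natural and cleanest route, and it correctly handles every identity and closure statement. The identities $M(a\star b)=M(a)M(b)$ and $L(a\diamond b)=L(a)L(b)$ check out directly from $a\star b=a+b-2ab$ and $a\diamond b=ab/(ab+\bar a\bar b)$, and the boundary convention $a\diamond b=0$ when $\{a,b\}\cap\{0,1\}\neq\emptyset$ is absorbing, so associativity and commutativity extend to it trivially as you say. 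The only tiny omission is that you verify closure of $\star$ on $[0,1/2]$ but do not explicitly mention closure on $(0,1)$, which the lemma also asserts; this is equally immediate since $M$ maps $(0,1)$ bijectively onto $(-1,1)$, which is closed under multiplication. With that one-line addition, the proposal is a complete proof.
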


When the underlying BIDMCs are BSCs, the Arikan transformations assume simple forms as presented in the following lemma.
\begin{lemma}\label{lem202}
For any $\varepsilon_0,\varepsilon_1\in[0,1]$, we have
\begin{gather}
A_0(\mathrm{B}(\varepsilon_0),\mathrm{B}(\varepsilon_1))
\cong\mathrm{B}({\varepsilon_0}\star\varepsilon_1)\label{at31},\\
A_1(\mathrm{B}(\varepsilon_0),\mathrm{B}(\varepsilon_1))\cong
({\varepsilon_0}\star\overline{\varepsilon_1})
\mathrm{B}({\varepsilon_0}\diamond{\varepsilon_1})
+({\varepsilon_0}\star{\varepsilon_1})
\mathrm{B}({\varepsilon_0}\diamond\overline{\varepsilon_1}).\label{at32}
\end{gather}
\end{lemma}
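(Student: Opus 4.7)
The plan is to verify both equivalences directly from the definitions of $A_0$, $A_1$ in (\ref{at06})--(\ref{at07}) and of $\mathrm{B}(\varepsilon)$, by computing the joint transition probabilities, reading off the likelihood ratios, and then comparing LRPs through Theorem~\ref{theo20}. Throughout I would use that for $\mathrm{B}(\varepsilon)$, $\Pr(y|x)=\overline{\varepsilon}$ when $y=x$ and $\Pr(y|x)=\varepsilon$ when $y\neq x$.

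For (\ref{at31}), the output alphabet is $\{0,1\}^2$. Plugging the BSC rule into (\ref{at06}) and splitting on the parity of $y_0+y_1+u_0$, a short case analysis shows that $\Pr(y_0,y_1|u_0)=\tfrac{1}{2}\,\overline{\varepsilon_0\star\varepsilon_1}$ when $u_0=y_0+y_1$ and $\Pr(y_0,y_1|u_0)=\tfrac{1}{2}(\varepsilon_0\star\varepsilon_1)$ otherwise. Consequently the likelihood ratio at $(y_0,y_1)$ equals $\overline{\varepsilon_0\star\varepsilon_1}/(\varepsilon_0\star\varepsilon_1)$ when $y_0=y_1$ and its reciprocal when $y_0\neq y_1$. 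This yields $L_{A_0(\mathrm{B}(\varepsilon_0),\mathrm{B}(\varepsilon_1))}(\varepsilon_0\star\varepsilon_1)=\{(0,0),(1,1)\}$ and $L_{A_0(\mathrm{B}(\varepsilon_0),\mathrm{B}(\varepsilon_1))}(\overline{\varepsilon_0\star\varepsilon_1})=\{(0,1),(1,0)\}$, each carrying LRP mass $\tfrac{1}{2}$, which matches the LRP of $\mathrm{B}(\varepsilon_0\star\varepsilon_1)$.

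For (\ref{at32}), the output alphabet is $\{0,1\}^3$. I would tabulate $\Pr(y_0,y_1,u_0|u_1)$ via (\ref{at07}) for each of the eight triples and compute the likelihood ratios. Setting $\delta=\varepsilon_0\diamond\varepsilon_1$ and $\sigma=\varepsilon_0\diamond\overline{\varepsilon_1}$, a bookkeeping exercise shows that $L_{A_1}(\delta)=\{(0,0,0),(1,0,1)\}$, $L_{A_1}(\overline{\delta})=\{(0,1,1),(1,1,0)\}$, $L_{A_1}(\sigma)=\{(0,1,0),(1,1,1)\}$, and $L_{A_1}(\overline{\sigma})=\{(0,0,1),(1,0,0)\}$. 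Summing joint probabilities within each class yields $P_{A_1}(\delta)=P_{A_1}(\overline{\delta})=\tfrac{1}{2}(\varepsilon_0\star\overline{\varepsilon_1})$ and $P_{A_1}(\sigma)=P_{A_1}(\overline{\sigma})=\tfrac{1}{2}(\varepsilon_0\star\varepsilon_1)$. On the other hand, the right-hand side of (\ref{at32}), evaluated through (\ref{62'}), produces exactly the same LRP, so the equivalence follows from Theorem~\ref{theo20}.

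The main obstacle I anticipate is the boundary case $\{\varepsilon_0,\varepsilon_1\}\cap\{0,1\}\neq\emptyset$, where $\diamond$ vanishes by convention and several of the joint probabilities above become zero, collapsing the partition used in the generic argument. I would handle these cases separately by direct inspection: for instance, when $\varepsilon_0=0$ the output $y_0=u_0+u_1$ is deterministic, so from $(y_0,y_1,u_0)$ the receiver of $A_1$ recovers $u_1$ exactly, giving $A_1\cong\mathrm{B}(0)$; meanwhile the right-hand side reduces to $\overline{\varepsilon_1}\mathrm{B}(0)+\varepsilon_1\mathrm{B}(0)\cong\mathrm{B}(0)$. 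The cases $\varepsilon_0=1$, $\varepsilon_1\in\{0,1\}$, as well as the trivial $\varepsilon_1=1/2$ sanity check giving $A_1\cong\mathrm{B}(\varepsilon_0)$ via Lemma~\ref{lem201}, are handled analogously, completing the proof.
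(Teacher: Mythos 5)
Your proposal is correct and follows essentially the same route as the paper: expand the transition probabilities of $A_0$ and $A_1$ from (\ref{at06})--(\ref{at07}) using the BSC rule, identify the likelihood-ratio classes, and read off the RSC decomposition. One small remark: the appeal to Theorem~\ref{theo20} at the end is unnecessary, since $\cong$ is \emph{defined} by equality of LRPs in this paper, so once you have shown $P_{A_1}(\cdot)$ matches the LRP of the right-hand side (via (\ref{62'})) the equivalence is immediate; Theorem~\ref{theo20} is only needed if you wanted to deduce mutual degradation from $\cong$. Your explicit treatment of the boundary cases $\{\varepsilon_0,\varepsilon_1\}\cap\{0,1\}\neq\emptyset$ is sound and is in fact more careful than the paper's own proof, which reads the RSC structure off the probability table without separately noting that some classes degenerate there.
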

\begin{proof}
Since the channel $A_0(\mathrm{B}(\varepsilon_0),\mathrm{B}(\varepsilon_1))$ has transition probabilities
\begin{align*}
  \Pr(y_0,y_1|u_0)
  =&\frac{1}{2}\sum_{u_1\in\mathcal{X}}\Pr(y_0|x_0=u_0+u_1)\Pr(y_1|x_1=u_1)\\
  =&\left\{\begin{array}{ll}
  \frac{1}{2}(\varepsilon_0\varepsilon_1+\overline{\varepsilon_0}\, \overline{\varepsilon_1}), &\text{if }y_0+y_1=u_0,\\
  \frac{1}{2}(\overline{\varepsilon_0}\varepsilon_1+\varepsilon_0\overline{\varepsilon_1}), &\text{if }y_0+y_1=u_0+1,
  \end{array}
  \right.
\end{align*}
we see (\ref{at31}) follows. Since the channel $A_1(\mathrm{B}(\varepsilon_0),\mathrm{B}(\varepsilon_1))$ has transition probabilities
\begin{align*}
  \Pr(y_0,y_1,u_0|u_1)
  =&\frac{1}{2}\Pr(y_0|x_0=u_0+u_1)\Pr(y_1|x_1=u_1)\\
  =&\left\{\begin{array}{ll}
  \frac{1}{2}\varepsilon_0\varepsilon_1, &\text{if }y_0+u_0=y_1=u_1+1,\\
  \frac{1}{2}\overline{\varepsilon_0}\, \overline{\varepsilon_1}, &\text{if }y_0+u_0=y_1=u_1,\\
  \frac{1}{2}\overline{\varepsilon_0}\varepsilon_1, &\text{if }y_0+u_0=y_1+1=u_1,\\
  \frac{1}{2}\varepsilon_0\overline{\varepsilon_1}, &\text{if }y_0+u_0=y_1+1=u_1+1,
  \end{array}
  \right.
\end{align*}
we see (\ref{at32}) follows.
\end{proof}

Furthermore, for the Arikan transformations whose underlying BIDMCs are symmetric, we have the following lemma.
\begin{lemma}\label{lem301}
For symmetric BIDMCs $W$, $W'$ and $W''$, we have
\begin{gather}
  A_0(\mathrm{B}(1/2),W)\cong \mathrm{B}(1/2),\label{mm300} \\
  A_1(\mathrm{B}(1/2),W)\cong
  A_0(\mathrm{B}(0),W)\cong W, \label{mm301}\\
  A_1(\mathrm{B}(0),W)\cong \mathrm{B}(0),\label{mm302}\\
A_0(W,W')\cong A_0(W',W),\label{mm303}\\
A_1(W,W')\cong A_1(W',W),\label{mm304}\\
A_0(A_0(W,W'),W'')\cong A_0(W,A_0(W',W'')),\label{mm305}\\
A_1(A_1(W,W'),W'')\cong A_1(W,A_1(W',W'')).\label{mm306}
\end{gather}
In particular, the BIDMCs $A_0(W,W')$ and $A_1(W,W')$ are symmetric.
\end{lemma}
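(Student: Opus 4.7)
The plan is to reduce every identity in the lemma to the case where the underlying channels are BSCs, and then conclude via the algebraic properties recorded in Lemma~\ref{lem201}. Using the characterization of symmetric BIDMCs stated just before the lemma, I can write
\[
W\cong\sum_{i}p_i\mathrm{B}(\varepsilon_i),\quad W'\cong\sum_{j}q_j\mathrm{B}(\zeta_j),\quad W''\cong\sum_{k}r_k\mathrm{B}(\eta_k),
\]
with all parameters $\varepsilon_i,\zeta_j,\eta_k$ in $[0,1/2]$. The corollary following Theorem~\ref{theo21} shows that Arikan transformations are well-defined on equivalence classes, and Theorem~\ref{lem200} shows that they distribute over random switching; together with Lemma~\ref{lem202}, which computes $A_0$ and $A_1$ of two BSCs, each identity in Lemma~\ref{lem301} thereby becomes a statement about RSCs of BSCs whose parameters are built from $\varepsilon_i,\zeta_j,\eta_k$ using only $\star$ and $\diamond$.

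The identities (\ref{mm300})--(\ref{mm302}) then reduce to the single-BSC case $W=\mathrm{B}(\varepsilon)$ and follow directly from Lemma~\ref{lem202} together with the identities $\tfrac{1}{2}\star\varepsilon=\tfrac{1}{2}$, $0\star\varepsilon=\varepsilon$, $\tfrac{1}{2}\diamond\varepsilon=\varepsilon$, and $0\diamond\varepsilon=0$ drawn from Lemma~\ref{lem201}. The commutativity identities (\ref{mm303}) and (\ref{mm304}) likewise reduce to the BSC case via Theorem~\ref{lem200}; for $A_0$ this is just commutativity of $\star$, and for $A_1$ the two RSC summands on either side pair up via commutativity of $\diamond$ together with the identity $\mathrm{B}(\varepsilon_0\diamond\overline{\varepsilon_1})\cong\mathrm{B}(\varepsilon_1\diamond\overline{\varepsilon_0})$, which follows from $\mathrm{B}(\varepsilon)\cong\mathrm{B}(\overline{\varepsilon})$ and (\ref{gg3}).

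The main obstacle is the associativity identity (\ref{mm306}) for $A_1$. Iterating Lemma~\ref{lem202} and Theorem~\ref{lem200}, each side of (\ref{mm306}), after reducing to BSCs, expands into a random switching of four BSCs whose crossover parameters are triple $\diamond$-products such as $a\diamond b\diamond c$ and whose weights are products of $\star$-terms. Associativity of $\diamond$ matches the crossover parameters of corresponding summands on the two sides up to complementation, which can then be absorbed using $\mathrm{B}(\varepsilon)\cong\mathrm{B}(\overline{\varepsilon})$ together with (\ref{gg3}); what remains is a routine but somewhat lengthy algebraic verification that the total weights attached to each distinct BSC coincide on both sides. The associativity identity (\ref{mm305}) for $A_0$ is considerably easier, since $A_0$ of two BSCs is again a single BSC and the identity collapses to associativity of $\star$. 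Finally, since every Arikan transformation of an RSC of BSCs has been exhibited in the process as itself (equivalent to) an RSC of BSCs, both $A_0(W,W')$ and $A_1(W,W')$ are symmetric by the RSC characterization of symmetric BIDMCs recalled before the lemma.
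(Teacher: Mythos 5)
Your strategy --- reduce to BSCs via the corollary after Theorem~\ref{theo21} and Theorem~\ref{lem200}, compute with Lemma~\ref{lem202}, and close with the algebraic identities of Lemma~\ref{lem201} --- is exactly the paper's, and your treatment of (\ref{mm300})--(\ref{mm305}) and of the final symmetry claim matches the paper's argument. The one place your write-up defers the real work is (\ref{mm306}): you note that associativity of $\diamond$ matches the four BSCs on each side up to complementation and then declare the weight comparison ``routine but somewhat lengthy'' without carrying it out, yet this is the only step in the lemma that requires an actual computation. The paper performs it and uses the key simplification $(\varepsilon\star\overline{\sigma})(\varepsilon\diamond\sigma)=\varepsilon\sigma$, which collapses the weight attached to $\mathrm{B}(\varepsilon\diamond\sigma\diamond\delta)$ to $\varepsilon\sigma\delta+\overline{\varepsilon}\,\overline{\sigma}\,\overline{\delta}$ (and analogously for the other three summands), so that the expansion of $A_1(A_1(\mathrm{B}(\varepsilon),\mathrm{B}(\sigma)),\mathrm{B}(\delta))$ becomes manifestly symmetric in $\varepsilon,\sigma,\delta$ and no term-by-term comparison of the two sides is needed; that identity is what turns your ``routine'' claim into a fact. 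You should also separate off the degenerate case where one of the crossover probabilities is $0$ (i.e., some channel is $\cong\mathrm{B}(0)$), which the paper dispatches via (\ref{mm302}) before restricting the main computation to parameters in $(0,1)$.
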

\begin{proof}
From Lemmas~\ref{lem201} and \ref{lem202}, we see easily that (\ref{mm300}) to (\ref{mm305}) are valid when $W,W',W''$ are BSCs.
According to (\ref{mm302}) we see that (\ref{mm306}) is valid when at least one channel in $\{W,W',W''\}$
is $\mathrm{B}(0)$.
Furthermore, for $\varepsilon,\sigma,\delta\in(0,1)$, from $({\varepsilon}\star\overline{\sigma})({\varepsilon}\diamond{\sigma})
={\varepsilon}{\sigma}$, ${\varepsilon}\diamond\overline{\sigma}\diamond\overline{\delta}=
\overline{\overline{\varepsilon}\diamond{\sigma}\diamond\delta}$, (\ref{at01}) and (\ref{at32}) we have
\begin{align*}
&A_1(A_1(\mathrm{B}(\varepsilon),\mathrm{B}(\sigma)),\mathrm{B}(\delta))\\
\cong&A_1\big(({\varepsilon}\star\overline{\sigma})
\mathrm{B}({\varepsilon}\diamond{\sigma})
+({\varepsilon}\star{\sigma})
\mathrm{B}({\varepsilon}\diamond\overline{\sigma}),\mathrm{B}(\delta)\big)\\
\cong&({\varepsilon}\star\overline{\sigma})\big((({\varepsilon}\diamond{\sigma})\star\overline{\delta})
\mathrm{B}(({\varepsilon}\diamond{\sigma})\diamond{\delta})
+(({\varepsilon}\diamond{\sigma})\star{\delta})
\mathrm{B}((\varepsilon\diamond{\sigma})\diamond\overline{\delta})\big)+\\
&\ \ \ ({\varepsilon}\star{\sigma})\big((({{\varepsilon}}\diamond\overline{\sigma})\star\overline{\delta})
\mathrm{B}(({{\varepsilon}}\diamond\overline{\sigma})\diamond{\delta})
+(({{\varepsilon}}\diamond\overline{\sigma})\star{\delta})
\mathrm{B}(({\varepsilon}\diamond\overline{\sigma})\diamond\overline{\delta})\big)\\
\cong&(\varepsilon\sigma\delta+\overline{\varepsilon}\,\overline{\sigma}\overline{\delta})
\mathrm{B}(\varepsilon\diamond\sigma\diamond\delta)+
(\varepsilon\sigma\overline{\delta}+\overline{\varepsilon}\,\overline{\sigma}{\delta})
\mathrm{B}(\varepsilon\diamond\sigma\diamond\overline{\delta})+\\
&\ \ \ ({\varepsilon}\overline{\sigma}\delta+\overline{\varepsilon}{\sigma}\overline{\delta})
\mathrm{B}({\varepsilon}\diamond\overline{\sigma}\diamond\delta)+
(\varepsilon\overline{\sigma}\overline{\delta}+\overline{\varepsilon}{\sigma}{\delta})
\mathrm{B}(\overline{\varepsilon}\diamond{\sigma}\diamond\delta)
\end{align*}
and thus we see that (\ref{mm306}) is valid when $W,W',W''$ are BSCs.
Hence, from Theorem \ref{lem200} we see that (\ref{mm300}) to (\ref{mm306}) are valid for any symmetric BIDMCs $W,W',W''$.
Clearly, the BIDMCs $A_0(W,W')$ and $A_1(W,W')$ are symmetric since they are equivalent to some RSCs of BSCs.
\end{proof}

For independent and symmetric BIDMCs $W_0,W_1,\ldots$, we define further $\Delta(W_0)=\nabla(W_0)=W_0$
and, for $t\geq 1$,
\begin{gather}
\Delta(\{W_i\}_{i\in[t+1]})=A_0(\Delta(\{W_i\}_{i\in[t]}),W_{t}),\label{u01}\\
\nabla(\{W_i\}_{i\in[t+1]})=A_1(\nabla(\{W_i\}_{i\in[t]}),W_{t}).\label{u02}
\end{gather}
Clearly, according to Lemma~\ref{lem301},
for $1\leq j<t$ we have
\begin{gather}
\Delta(\{W_i\}_{i\in[t]})\cong A_0(\Delta(\{W_i\}_{i\in[j]}),\Delta(\{W_{i+j}\}_{i\in[t-j]})),\label{u05}\\
\nabla(\{W_i\}_{i\in[t]})\cong A_1(\nabla(\{W_i\}_{i\in[j]}),\nabla(\{W_{i+j}\}_{i\in[t-j]}),\label{u06}
\end{gather}
and for any permutation $\{W'_i\}_{i\in[t]}$ of the channels $\{W_i\}_{i\in[t]}$ we have
\begin{gather}
\Delta(\{W'_i\}_{i\in[t]})\cong \Delta(\{W_i\}_{i\in[t]}),\label{u03}\\
\nabla(\{W'_i\}_{i\in[t]})\cong \nabla(\{W_i\}_{i\in[t]}).\label{u04}
\end{gather}

The vector $(x_0,x_1,\ldots,x_{n-1})$ is also denoted by $x_0^n$ if no confusion.
For any vector $\sigma_0^t\in[0,1]^t$, let
\begin{align}
\varpi(\sigma_0^t)=\left\{
\begin{array}{ll}
\frac{1}{2}\big(\prod_{i\in [t]}\sigma_i+\prod_{i\in[t]}\overline{\sigma_i}\big),
& \text{if }\sigma_0^t\in(0,1)^t,
\\
2^{-t},
& \text{if }\sigma_0^t\not\in(0,1)^t.
\end{array}\right.\label{u07}
\end{align}
For $\sigma_0^{t+1}\in(0,1)^{t+1}$, from the definition of the operation $\diamond$, by induction one can prove easily
\begin{gather*}
\diamond_{i\in[t]}\sigma_{i}=\frac{\prod_{i\in[t]}\sigma_{i}}
{\prod_{i\in[t]}\sigma_{i}+\prod_{i\in[t]}\overline{\sigma_{i}}}
=\frac{\prod_{i\in[t]}\sigma_{i}}{2\varpi(\sigma_0^t)},
\end{gather*}
and then from $\overline{\diamond_{i\in[t]}\sigma_{i}}=\diamond_{i\in[t]}\overline{\sigma_{i}}$ we see
\begin{gather}
\varpi(\sigma_0^t)\big((\diamond_{i\in[t]}\sigma_{i})\star\overline{\sigma_t}\big)=
\varpi(\sigma_0^t)\bigg(\frac{\prod_{i\in[t+1]}\sigma_{i}}{2\varpi(\sigma_0^t)}
+\frac{\prod_{i\in[t+1]}\overline{\sigma_{i}}}{2\varpi(\sigma_0^t)}\bigg)
=\varpi(\sigma_0^{t+1}).\label{u10}
\end{gather}
\begin{lemma}
\label{lem700}
For $t\geq 1$ and $\varepsilon_0^t\in[0,1]^t$,
\begin{gather}
\Delta(\{\mathrm{B}(\varepsilon_i)\}_{i\in[t]})
\cong\mathrm{B}(\star_{i\in[t]}\varepsilon_i),\label{u08}\\
\nabla(\{\mathrm{B}(\varepsilon_i)\}_{i\in[t]})
\cong\sum_{\sigma_i\in\{\varepsilon_i,\overline{\varepsilon_i}\},i\in[t]}\varpi(\sigma_0^t)
\mathrm{B}(\diamond_{i\in[t]}\sigma_i),\label{u09}
\end{gather}
where $\varepsilon$ and $\overline{\varepsilon}$ should be seen as different elements even if $\varepsilon=1/2$.
\end{lemma}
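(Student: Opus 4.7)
My plan is to prove both statements simultaneously by induction on $t$. The base case $t=1$ is immediate: the $\Delta$-claim reduces to $\mathrm{B}(\varepsilon_0)\cong\mathrm{B}(\varepsilon_0)$, and the $\nabla$-claim reads $\frac{1}{2}\mathrm{B}(\varepsilon_0)+\frac{1}{2}\mathrm{B}(\overline{\varepsilon_0})\cong\mathrm{B}(\varepsilon_0)$, which follows from $\mathrm{B}(\varepsilon_0)\cong\mathrm{B}(\overline{\varepsilon_0})$ and (\ref{c00}). For the inductive step on $\Delta$, one invokes the recursion (\ref{u01}), the inductive hypothesis, formula (\ref{at31}) of Lemma~\ref{lem202}, and the associativity of $\star$ from Lemma~\ref{lem201}; this delivers $\Delta(\{\mathrm{B}(\varepsilon_i)\}_{i\in[t+1]})\cong A_0(\mathrm{B}(\star_{i\in[t]}\varepsilon_i),\mathrm{B}(\varepsilon_t))\cong\mathrm{B}(\star_{i\in[t+1]}\varepsilon_i)$ without further work.

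The inductive step for $\nabla$ is where the substance lies, and I would treat two cases. \emph{Case 1:} some $\varepsilon_j\in\{0,1\}$, so $\mathrm{B}(\varepsilon_j)\cong\mathrm{B}(0)$. Using the permutation invariance (\ref{u04}) to move index $j$ to the outermost $A_1$ position and then combining the symmetry (\ref{mm304}) with the absorbing identity (\ref{mm302}) of Lemma~\ref{lem301}, one obtains $\nabla(\{\mathrm{B}(\varepsilon_i)\}_{i\in[t+1]})\cong\mathrm{B}(0)$. On the other side, every configuration $\sigma_0^{t+1}$ contains $\sigma_j\in\{0,1\}$, so (\ref{u07}) forces $\varpi(\sigma_0^{t+1})=2^{-(t+1)}$ and the convention on $\diamond$ forces $\diamond_{i\in[t+1]}\sigma_i=0$; the $2^{t+1}$ identical summands add up to $\mathrm{B}(0)$, matching the left side. \emph{Case 2:} all $\varepsilon_i\in(0,1)$. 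Starting from (\ref{u02}) and the inductive hypothesis, Theorem~\ref{lem200} lets me distribute $A_1(\cdot,\mathrm{B}(\varepsilon_t))$ inside the RSC, and formula (\ref{at32}) then splits each summand indexed by $\sigma_0^t$ into two subterms indexed by a new label $\sigma_t\in\{\varepsilon_t,\overline{\varepsilon_t}\}$: writing $\alpha:=\diamond_{i\in[t]}\sigma_i$, both branches take the common form $(\alpha\star\overline{\sigma_t})\mathrm{B}(\alpha\diamond\sigma_t)$. The key identity (\ref{u10}) collapses $\varpi(\sigma_0^t)(\alpha\star\overline{\sigma_t})$ to $\varpi(\sigma_0^{t+1})$, and the associativity of $\diamond$ (Lemma~\ref{lem201}) rewrites $\alpha\diamond\sigma_t$ as $\diamond_{i\in[t+1]}\sigma_i$, reassembling the $2^{t+1}$ summands into exactly the claimed formula (\ref{u09}).

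I expect the main obstacle to be the clean re-indexing at the end of Case 2: one must verify that the two-branch split supplied by (\ref{at32}) lines up with the convention (emphasized in the statement) that $\varepsilon$ and $\overline{\varepsilon}$ are distinct labels even when $\varepsilon=1/2$, so that no summands are merged or double-counted. Identity (\ref{u10}) is precisely the algebraic lemma that makes the coefficients close, and its boundary-sensitive statement (restricted to $(0,1)^{t+1}$) is exactly why Case 1 must be treated separately rather than absorbed into the generic induction.
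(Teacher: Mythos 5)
Your proposal is correct and follows essentially the same route as the paper's proof: formula (\ref{u08}) from (\ref{at31}) and (\ref{u01}), and formula (\ref{u09}) on the interior domain by induction on $t$ using (\ref{u02}), Theorem~\ref{lem200} (i.e.\ (\ref{at01})), (\ref{at32}), the associativity of $\diamond$, and the coefficient identity (\ref{u10}). The one small difference is cosmetic: the paper simply declares the cases ``$t=1$ or $\varepsilon_0^t\notin(0,1)^t$'' to be obvious, whereas you unpack the degenerate case explicitly via (\ref{u04}), (\ref{mm304}), (\ref{mm302}) and a direct count of the $2^{t+1}$ identical $2^{-(t+1)}\mathrm{B}(0)$ summands; your elaboration is sound, and noting that it is a direct (non-inductive) argument would make the case split in the inductive step read slightly cleaner, but the substance is the same.
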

\begin{proof}
Clearly, (\ref{u08}) follows from (\ref{at31}) and (\ref{u01}).
It is obvious that (\ref{u09}) is valid for $t=1$ or $\varepsilon_0^t\not\in(0,1)^t$.
Now we manage to prove (\ref{u09}) for $\varepsilon_0^t\in(0,1)^t$ by induction on $t$.
Indeed, according to (\ref{at01}), (\ref{at32}), (\ref{u02}) and (\ref{u10}) we have
\begin{align*}
&\nabla(\{\mathrm{B}(\varepsilon_i)\}_{i\in[t+1]})
=A_1(\nabla(\{\mathrm{B}(\varepsilon_i)\}_{i\in[t]}),\mathrm{B}(\varepsilon_t))\\
\cong&\sum_{\sigma_i\in\{\varepsilon_i,\overline{\varepsilon_i}\},i\in[t]}\varpi(\sigma_0^t)
A_1(\mathrm{B}(\diamond_{i\in[t]}\sigma_i),\mathrm{B}(\varepsilon_t))\\
\cong&\sum_{\sigma_i\in\{\varepsilon_i,\overline{\varepsilon_i}\},i\in[t]}\varpi(\sigma_0^t)
\sum_{\sigma_t\in\{\varepsilon_t,\overline{\varepsilon_t}\}}
((\diamond_{i\in[t]}\sigma_i)\star\overline{\sigma_t})\mathrm{B}(\diamond_{i\in[t+1]}\sigma_i)\\
\cong&\sum_{\sigma_i\in\{\varepsilon_i,\overline{\varepsilon_i}\},i\in[t+1]}\varpi(\sigma_0^{t+1})
\mathrm{B}(\diamond_{i\in[t+1]}\sigma_i).
\end{align*}

The proof is complete.
\end{proof}

The following theorem is a simple corollary of Theorem~\ref{lem200} and Lemma~\ref{lem700}.

\begin{theorem}\label{cor501'}
Let $\{W_l\}_{l\in [m]}$ be independent and symmetric BIDMCs.
Suppose $W_l\cong\sum_{i\in[n_l]}q_{i,l}\mathrm{B}(\varepsilon_{i,l})$ for $l\in[m]$, then we have
\begin{gather}
\Delta(\{W_l\}_{l\in[m]})\cong
\sum_{\{i_l\in[n_l]\}_{l\in[m]}}\bigg(\prod_{l\in[m]}q_{i_l,l}\bigg)
\mathrm{B}(\star_{l\in[m]}\varepsilon_{i_l,l}),
\label{mm01}\\
\nabla(\{W_l\}_{l\in[m]})\cong\sum_{\{i_l\in[n_l],\sigma_{l}\in
\{\varepsilon_{i_l},\overline{\varepsilon_{i_l}}\}\}_{l\in[m]}}
\bigg(\prod_{l\in [m]}q_{i_l,l}\bigg)\varpi(\sigma_0^m)\mathrm{B}(\diamond_{l\in[m]}\sigma_{l}),\label{mm02}
\end{gather}
where $\varepsilon$ and $\overline{\varepsilon}$ should be seen as different elements even if $\varepsilon=1/2$.
\end{theorem}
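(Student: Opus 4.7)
The plan is to proceed by induction on $m$, using the recursive definitions (\ref{u01}) and (\ref{u02}) together with Theorem~\ref{lem200} (distribution of Arikan transformations over RSCs) and Lemma~\ref{lem700} (closed forms when the underlying channels are BSCs). The base case $m=1$ is immediate, since $\Delta(W_0)=\nabla(W_0)=W_0$ and both formulas reduce to the hypothesized decomposition $W_0\cong\sum_{i\in[n_0]}q_{i,0}\mathrm{B}(\varepsilon_{i,0})$.

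For the inductive step at stage $m$, I would start from $\Delta(\{W_l\}_{l\in[m]})=A_0(\Delta(\{W_l\}_{l\in[m-1]}),W_{m-1})$, substitute the inductive hypothesis for the first argument, substitute the given RSC decomposition for $W_{m-1}$, and apply Theorem~\ref{lem200} to distribute $A_0$ across both RSCs. The result is an RSC indexed by tuples $(i_0,\ldots,i_{m-1})$ with weights $\prod_{l\in[m]}q_{i_l,l}$ and sub-channels $A_0(\mathrm{B}(\star_{l\in[m-1]}\varepsilon_{i_l,l}),\mathrm{B}(\varepsilon_{i_{m-1},m-1}))$. Applying (\ref{at31}) collapses each sub-channel to $\mathrm{B}(\star_{l\in[m]}\varepsilon_{i_l,l})$ (using the associativity of $\star$ from Lemma~\ref{lem201}), which yields (\ref{mm01}).

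For (\ref{mm02}), the analogous argument using $\nabla(\{W_l\}_{l\in[m]})=A_1(\nabla(\{W_l\}_{l\in[m-1]}),W_{m-1})$ and Theorem~\ref{lem200} reduces the problem to an RSC of sub-channels $\nabla(\{\mathrm{B}(\varepsilon_{i_l,l})\}_{l\in[m]})$ with weights $\prod_{l\in[m]}q_{i_l,l}$. At this point I would simply invoke Lemma~\ref{lem700}, formula (\ref{u09}), to expand each such sub-channel as an RSC over sign assignments $\sigma_l\in\{\varepsilon_{i_l,l},\overline{\varepsilon_{i_l,l}}\}$ with weights $\varpi(\sigma_0^m)$ and sub-channels $\mathrm{B}(\diamond_{l\in[m]}\sigma_l)$. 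Merging the two RSC layers produces (\ref{mm02}).

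There is no serious obstacle; the main item of care is the bookkeeping, especially for $\nabla$, where each term obtained from Theorem~\ref{lem200} itself unfolds into a further RSC via Lemma~\ref{lem700}, and the index tuples $(i_0,\ldots,i_{m-1})$ and sign tuples $(\sigma_0,\ldots,\sigma_{m-1})$ must be tracked consistently. The independence hypothesis required by Theorem~\ref{lem200} is inherited from the independence of $\{W_l\}_{l\in[m]}$, which ensures $\Delta(\{W_l\}_{l\in[m-1]})$ and $W_{m-1}$ (resp.\ their $\nabla$-analogs) are independent. An alternative non-inductive route is to expand every $W_l$ as an RSC of BSCs up front, apply Theorem~\ref{lem200} iteratively to distribute $\Delta$ (or $\nabla$) across all $m$ RSCs at once, and close with a single application of Lemma~\ref{lem700}; this is arguably cleaner but logically equivalent.
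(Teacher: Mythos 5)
Your proposal is correct and takes the same approach as the paper, which gives no detailed proof but states the theorem as ``a simple corollary of Theorem~\ref{lem200} and Lemma~\ref{lem700}''; your write-up merely spells out how those two ingredients (distributing $A_0$, $A_1$ across the RSC decompositions via Theorem~\ref{lem200}, then evaluating the resulting Arikan transformations of BSCs via Lemma~\ref{lem700}) combine. One small cosmetic remark: your inductive step for $\Delta$ expands fully into BSCs at each stage and collapses with (\ref{at31}), whereas for $\nabla$ you effectively switch to the ``distribute first, close once with (\ref{u09})'' route that you describe as the alternative at the end — both are fine, but it is worth presenting the two cases symmetrically.
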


Notice that the synthetic BIDMCs $\Delta(\{W_l\}_{l\in[m]})$ and $\nabla(\{W_l\}_{l\in[m]})$ are equivalent to the parity-constrained-input parallel channel and the parallel broadcast channel discussed in \cite{Shamai05}, respectively, when the underlying channels $\{W_l\}_{l\in[m]}$ are symmetric BIDMCs.

\section{Synthetic Channels in Polar Codes}\label{sec04}

In this section, we mainly consider to express the synthetic channels, which are generated in the polar codes proposed by Arikan in
\cite{Arikan09}, as RSCs of only a few BSCs when the underlying channels are symmetric BIDMCs.

At first, we give a brief introduction for the polar codes from a new angle of view.
The synthetic channels in
the polar code of order $k\geq 2$ are Arikan transformations generated iteratively from underlying channels $\{W^{(0)}_{\alpha}\}_{\alpha\in \{0,1\}^k}$ which are independent copies of a given BIDMC $W$.
For any $0\leq m\leq k-2$, $a,b\in \{0,1\}$, $\beta\in \{0,1\}^{m}$ and $\alpha\in \{0,1\}^{k-m-2}$, let
\begin{align}
W^{(m+1)}_{\alpha\beta a b}=A_a(W^{(m)}_{\alpha b\beta0},W^{(m)}_{\alpha b\beta1}).
\end{align}
For any $\beta\in \{0,1\}^{k-1}$ and $a\in \{0,1\}$, let
\begin{align}
W^{(k)}_{\beta a}=A_a(W^{(k-1)}_{\beta0},W^{(k-1)}_{\beta1}).
\end{align}
Notice that, for any $0\leq m\leq k-1$ and $\beta\in\{0,1\}^m$, the channels in
$$\{W^{(m)}_{\alpha\beta a}:\alpha\in\{0,1\}^{k-1-m},a\in\{0,1\}\}$$ are independent and equivalent to each other.

If the tuples $\alpha=(d_{0},\ldots,d_{k-1})\in \{0,1\}^k$ are ordered according to the ascending order of $$b(\alpha)=\sum_{i\in [k]}d_i2^{k-1-i},$$
the synthetic channels generated in polar code of order $k=4$ are shown in the Figure~3, where $A_4$ and $A_8$ denote transformations with structures similar to that shown in the left sides respectively.

\setlength{\unitlength}{0.3cm}
\newcounter{dc0}
\newcounter{dc4}
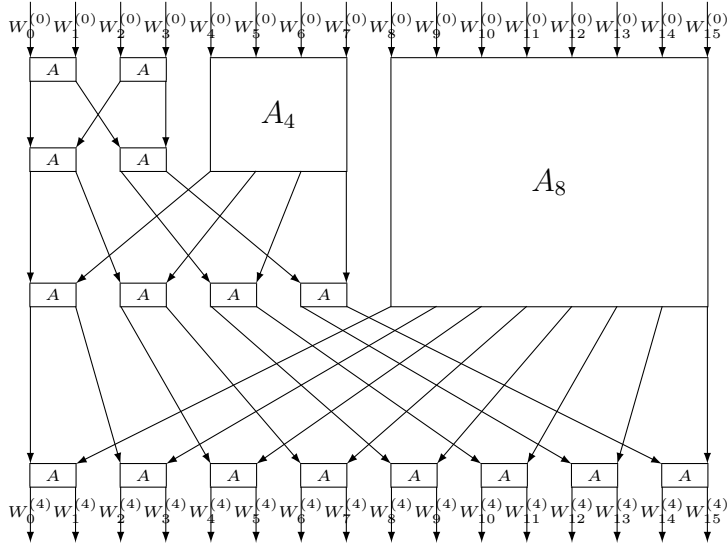
\begin{figure}[t]
\begin{center}
\ \ \ \ \ \ \begin{picture}(33,24)
\addtocounter{dc0}{-1}
\multiput(-1,21.9)(2,0){16}
{\addtocounter{dc0}{1}\makebox(0,0)[bl]{{\tiny $W^{(0)}_{\arabic{dc0}}$}}}
\addtocounter{dc4}{-1}
\multiput(-1,0.4)(2,0){16}
{\addtocounter{dc4}{1}\makebox(0,0)[bl]{{\tiny $W^{(4)}_{\arabic{dc4}}$}}}

\multiput(0,23.5)(2,0){16}{\vector(0,-1){2.5}}

\multiput(0,20)(4,0){2}{\framebox(2,1)[]{{\tiny $A$}}}
\multiput(0,16)(4,0){2}{\framebox(2,1)[]{{\tiny $A$}}}
\put(0,20){\vector(0,-1){3}}
\put(2,20){\vector(2,-3){2}}
\put(4,20){\vector(-2,-3){2}}
\put(6,20){\vector(0,-1){3}}

\multiput(0,10)(4,0){4}{\framebox(2,1)[]{{\tiny $A$}}}
\put(0,16){\vector(0,-1){5}}
\put(2,16){\vector(2,-5){2}}
\put(4,16){\vector(4,-5){4}}
\put(6,16){\vector(6,-5){6}}
\put(8,16){\vector(-6,-5){6}}
\put(10,16){\vector(-4,-5){4}}
\put(12,16){\vector(-2,-5){2}}
\put(14,16){\vector(0,-1){5}}
\put(8,16){\framebox(6,5){$A_4$}}

\multiput(0,2)(4,0){8}{\framebox(2,1)[]{{\tiny $A$}}}
\put(0,10){\vector(0,-1){7}}
\put(2,10){\vector(2,-7){2}}
\put(4,10){\vector(4,-7){4}}
\put(6,10){\vector(6,-7){6}}
\put(8,10){\vector(8,-7){8}}
\put(10,10){\vector(10,-7){10}}
\put(12,10){\vector(12,-7){12}}
\put(14,10){\vector(14,-7){14}}
\put(16,10){\vector(-14,-7){14}}
\put(18,10){\vector(-12,-7){12}}
\put(20,10){\vector(-10,-7){10}}
\put(22,10){\vector(-8,-7){8}}
\put(24,10){\vector(-6,-7){6}}
\put(26,10){\vector(-4,-7){4}}
\put(28,10){\vector(-2,-7){2}}
\put(30,10){\vector(0,-7){7}}
\put(16,10){\framebox(14,11){$A_8$}}
\multiput(0,2)(2,0){16}{\vector(0,-1){2.5}}

\end{picture}\end{center}
\caption{Synthetic channels in polar code of order $k=4$.}
\end{figure}

For $W_0\cong W_1\cong W$, we also write $A_0(W_0,W_1)$ and $A_1(W_0,W_1)$ as $A_0(W)$ and $A_1(W)$, respectively. For $n\geq 1$ and $\alpha\in\{0,1\}^n$, we define further $A_{\alpha 0}(W)=A_0(A_{\alpha}(W))$ and $A_{\alpha 1}(W)=A_1(A_{\alpha}(W))$ iteratively.
The following lemma is from \cite{Arikan09}.
\begin{lemma}
If the underlying channels $\{W^{(0)}_{\alpha}\}_{\alpha\in \{0,1\}^k}$ are copies of a given BIDMC $W$, then the set $\{W^{(k)}_{\alpha}:\alpha\in \{0,1\}^k\}$ of the synthetic channels is just the set $\{A_{\alpha }(W):\alpha\in \{0,1\}^k\}$
which shall exhibit polarization phenomena when $k$ is large: The capacity of the part with a proportion close to $I(W)$ is approximately equal to 1, and meanwhile the capacity of the part with a proportion close to $1-I(W)$ is approximately equal to 0.
\end{lemma}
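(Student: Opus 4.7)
The statement has two parts: (i) the recursively defined set $\{W^{(k)}_\alpha\}$ coincides with $\{A_\alpha(W):\alpha\in\{0,1\}^k\}$, and (ii) these channels polarize. I will handle them separately.

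For (i), I would argue by induction on $k$ that each $W^{(k)}_\alpha$ is equivalent to $A_{\pi(\alpha)}(W)$ for some bijection $\pi$ on $\{0,1\}^k$; since $\pi$ is a bijection, the two sets coincide. The base case $k=1$ is immediate from the definition, since the underlying channels $W^{(0)}_0,W^{(0)}_1$ are independent copies of $W$, giving $W^{(1)}_a = A_a(W^{(0)}_0,W^{(0)}_1) = A_a(W)$. For the inductive step, I would unwrap one layer of the recursion $W^{(m+1)}_{\alpha\beta ab}=A_a(W^{(m)}_{\alpha b\beta 0},W^{(m)}_{\alpha b\beta 1})$, observe that the two inner channels are independent and (by induction) equivalent to the same iterated Arikan transformation of $W$, and then invoke the corollary following Theorem~\ref{theo21} (which shows $A_a(W',W'')$ is invariant up to equivalence under substitution of equivalent underlying channels) to conclude $W^{(m+1)}_{\alpha\beta ab}\cong A_a(A_{\pi'(\alpha\beta b)}(W))=A_{\pi'(\alpha\beta b)a}(W)$. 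The indexing shuffle is just bookkeeping.

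For (ii), I would follow the standard martingale argument of \cite{Arikan09}. Let $B_1,B_2,\ldots$ be i.i.d.\ uniform on $\{0,1\}$ and define $\widetilde{W}_0=W$, $\widetilde{W}_{n}=A_{B_n}(\widetilde{W}_{n-1})$. Then the empirical distribution over $\{A_\alpha(W)\}_{\alpha\in\{0,1\}^k}$ of any reliability functional is exactly the law of that functional applied to $\widetilde{W}_k$. Because $A_0,A_1$ preserve the sum of symmetric capacities, the bounded process $I_n=I(\widetilde{W}_n)$ is a martingale with $\mathbb{E}[I_n]=I(W)$, and by the martingale convergence theorem it converges almost surely to some $I_\infty$. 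Simultaneously, the Bhattacharyya parameter satisfies $Z(A_1(W))=Z(W)^2$ and $Z(A_0(W))\le 2Z(W)-Z(W)^2$, so $Z_n=Z(\widetilde{W}_n)$ is a bounded supermartingale, also convergent a.s. Conditioning on $B_{n+1}=1$ and passing to the limit forces $Z_\infty=Z_\infty^2$, hence $Z_\infty\in\{0,1\}$ a.s. Via the classical relations between $I$ and $Z$ (namely $I(W)\ge 1-Z(W)$ with $Z(W)\to 0$, and $I(W)\le \sqrt{1-Z(W)^2}$ with $Z(W)\to 1$), this upgrades to $I_\infty\in\{0,1\}$ a.s. Since $\mathbb{E}[I_\infty]=I(W)$, we get $\Pr(I_\infty=1)=I(W)$ and $\Pr(I_\infty=0)=1-I(W)$, which yields the asymptotic proportions stated.

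The main obstacle is the concentration of $Z_\infty$ on $\{0,1\}$: the martingale property of $I_n$ alone never rules out a non-degenerate limiting distribution, and the distinctive polarization ingredient is the exact squaring identity $Z(A_1(W))=Z(W)^2$ applied to the stochastic recursion. Once this is in hand, the rest is the classical conversion between the reliability measures; and the passage from the random-path process $\widetilde{W}_n$ to the deterministic set $\{A_\alpha(W)\}$ is just the observation that a uniformly random $\alpha\in\{0,1\}^k$ induces exactly the law of $\widetilde{W}_k$. The parameters $I$ and $Z$ appearing here are covered by Theorem~\ref{lem00s}, so no additional apparatus beyond what the paper has already developed (together with standard facts about $Z$ under Arikan transformations) is required.
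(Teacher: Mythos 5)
The paper does not prove this lemma; it is explicitly quoted as being ``from \cite{Arikan09},'' so there is no internal proof to compare against. What you have written is a correct reconstruction of Arikan's original argument, and it is the argument the paper is pointing to. In part (i), the index bookkeeping you wave away does work out cleanly: tracking the recursion $W^{(m+1)}_{\alpha\beta a b}=A_a(W^{(m)}_{\alpha b\beta0},W^{(m)}_{\alpha b\beta1})$ through the paper's conventions shows that the induced bijection on recipes is in fact the identity, so $W^{(k)}_\alpha\cong A_\alpha(W)$ with no relabeling; invoking the corollary to Theorem~\ref{theo21} at each step is exactly what licenses replacing the two (independent, mutually equivalent) inner channels by a common representative. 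In part (ii), the martingale skeleton is the right one. The only place your write-up is more compressed than it should be is the concentration step ``conditioning on $B_{n+1}=1$ and passing to the limit forces $Z_\infty=Z_\infty^2$'': what one actually argues is that a.s. convergence of the bounded supermartingale $Z_n$ gives $\mathbb{E}|Z_{n+1}-Z_n|\to0$, while $\mathbb{E}|Z_{n+1}-Z_n|\ge\tfrac12\,\mathbb{E}[Z_n(1-Z_n)]$ by restricting to the event $B_{n+1}=1$; hence $\mathbb{E}[Z_\infty(1-Z_\infty)]=0$ and $Z_\infty\in\{0,1\}$ a.s. With that small expansion, the rest (the two-sided $I$--$Z$ inequalities, the martingale mean $\mathbb{E}[I_\infty]=I(W)$, and the identification of the empirical distribution over $\{A_\alpha(W)\}_{\alpha\in\{0,1\}^k}$ with the law of the random path $\widetilde W_k$) is complete and correct, and matches the result the paper is citing.
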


For $d\in \{0,1\}$, let $\mathcal{S}(d)$ denote $\{1\}$ if $d=1$ and $\{0,1\}$ if $d=0$.
For $\delta=(d_0,\ldots,d_{n-1})\in \{0,1\}^n$, let
\begin{gather*}
\mathcal{S}(\delta)=\{(a_{0},\ldots,a_{n-1})\in \{0,1\}^n: a_i\in \mathcal{S}(d_{n-1-i}), i\in [n]\}.
\end{gather*}
For example, $\mathcal{S}(0101)$ is the set $\{1a1b:a,b\in\{0,1\}\}$.
If the inputs of the synthetic channels $\{W^{(k)}_{\alpha}\}_{\alpha\in\{0,1\}^{k}}$ are set to be $\{u_{\alpha}\}_{\alpha\in \{0,1\}^{k}}$, then those of the underlying channels $\{W^{(0)}_{\alpha}\}_{\alpha\in \{0,1\}^{k}}$ are given by
\begin{align}
\{x^{(0)}_{\alpha}\}_{\alpha\in \{0,1\}^{k}}=\{u_{\delta}\}_{\delta\in \{0,1\}^{k}}G_{k},
\label{ak05}
\end{align}
where $G_{k}$ is the 0\,-1 matrix of order $2^k$ whose entry at the $\alpha$-column and the $\delta$-row
is 1 if and only if $\delta\in\mathcal{S}(\alpha)$.
For example,
$$G_1=\left(
\begin{array}{cc}
1 & 0 \\
1 & 1 \\
\end{array}
\right),\
G_2=\left(
  \begin{array}{cccc}
    1 & 0 & 0 & 0 \\
    1 & 0 & 1 & 0 \\
    1 & 1 & 0 & 0 \\
    1 & 1 & 1 & 1 \\
  \end{array}
\right),\
G_3=\left(
  \begin{array}{cccccccc}
    1 & 0 & 0 & 0 & 0 & 0 & 0 & 0 \\
    1 & 0 & 0 & 0 & 1 & 0 & 0 & 0 \\
    1 & 0 & 1 & 0 & 0 & 0 & 0 & 0 \\
    1 & 0 & 1 & 0 & 1 & 0 & 1 & 0 \\
    1 & 1 & 0 & 0 & 0 & 0 & 0 & 0 \\
    1 & 1 & 0 & 0 & 1 & 1 & 0 & 0 \\
    1 & 1 & 1 & 1 & 0 & 0 & 0 & 0 \\
    1 & 1 & 1 & 1 & 1 & 1 & 1 & 1 \\
  \end{array}
\right).
$$

The {\it communication scheme} of
the polar code of order $k$ is as follows.
The information sequence $\{u_{\alpha}\}_{\alpha\in \{0,1\}^k}$ is coded into $\{x^{(0)}_{\alpha}\}_{\alpha\in \{0,1\}^k}$ according to (\ref{ak05}) and then transmitted over the BIDMCs $\{W^{(0)}_{\alpha}\}_{\alpha\in \{0,1\}^k}$ respectively, whereas the receiver decodes the synthetic channels $\{W^{(k)}_{\alpha}\}_{\alpha\in \{0,1\}^k}$ sequentially by the {\it Successive Cancellation} (SC) decoding, which generates estimates for $\{u_{\alpha}\}_{\alpha\in \{0,1\}^k}$ by MLD in successive order.
If a \emph{genie} is willing to show us the true inputs of the noisy synthetic channels, which are called {\it frozen channels}, then reliable communication can be realized by decoding only the remaining synthetic channels. Notice that the role of such a genie can be almost realized by a pseudo-random sequence of very large period, which is known at the two ends of communication.

To determine the frozen channels of polar codes, it is desired to compute the capacities of the synthetic channels $\{A_{\alpha }(W)\}_{\alpha\in \{0,1\}^k}$ for large $k$. We will consider to give the compactest RSC forms of these synthetic channels.

\subsection{Arikan Transformations $\Delta_{m}(W)$ and $\nabla_{m}(W)$}

If $\{W_i\}_{i\in[t]}$ are independent copies of a
symmetric BIDMC $W$, we also write $\Delta(\{W_i\}_{i\in[t]})$ and $\nabla(\{W_i\}_{i\in[t]})$
as $\Delta_t(W)$ and $\nabla_t(W)$, respectively. We will deal with these Arikan transformations in this subsection.

According to Lemma~\ref{lem301}, Theorem~\ref{lem200} and (\ref{u01}) to (\ref{u04}) one can deduce the following lemma easily.
\begin{lemma}
Suppose $W$ is a symmetric BIDMC. Let $\Delta_0(W)=\mathrm{B}(0)$ and $\nabla_0(W)=\mathrm{B}(1/2)$.
Then,

\noindent
1.
For nonnegative integers $i,j$, we have
\begin{gather}
A_0(\Delta_i(W),\Delta_j(W))=\Delta_{i+j}(W),\
A_1(\nabla_i(W),\nabla_j(W))=\nabla_{i+j}(W).\label{tt1}
\end{gather}

\noindent
2. For positive integer $k$, we have
\begin{gather}
A_{0^k}(W)=\Delta_{2^k}(W),\
A_{1^k}(W)=\nabla_{2^k}(W).\label{tt11}
\end{gather}
3. For nonnegative numbers $p,q$ with $p+q\leq 1$, let $W_{p,q}$ be
the symmetric BIDMC $p\mathrm{B}(0)+q\mathrm{B}(1/2)+\overline{p+q}W$. For positive integer $t$, we have
\begin{gather}
\Delta_{t}(W_{p,q})\cong \overline{\overline{q}^t}\mathrm{B}\Big(\frac{1}{2}\Big)+
\sum_{i=0}^{t}\binom{t}{i}p^{t-i}\overline{p+q}^i\Delta_{i}(W),\label{pp7}\\
\nabla_{t}(W_{p,q})\cong \overline{\overline{p}^t}\mathrm{B}(0)+
\sum_{i=0}^{t}\binom{t}{i}q^{t-i}\overline{p+q}^i\nabla_{i}(W).\label{pp8}
\end{gather}
\end{lemma}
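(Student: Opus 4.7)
The plan is to handle the three parts of the lemma in sequence, each reducing to combinatorial bookkeeping on top of Theorem~\ref{lem200} and the ``absorption'' identities of Lemma~\ref{lem301}.

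\textbf{Part 1.} Realize $\Delta_i(W)\cong\Delta(\{W_l\}_{l\in[i]})$ and $\Delta_j(W)\cong\Delta(\{W_{l+i}\}_{l\in[j]})$ on a single pool of $i+j$ independent copies of $W$. Then (\ref{u05}) with $t=i+j$ directly gives $A_0(\Delta_i(W),\Delta_j(W))\cong\Delta_{i+j}(W)$, and (\ref{u06}) yields the $\nabla$-analogue.

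\textbf{Part 2.} Induct on $k$. The base $A_0(W)=A_0(W,W)=\Delta_2(W)$ is the very definition of $\Delta_2$. For the step, $A_{0^{k+1}}(W)=A_0(A_{0^k}(W))$ is by definition $A_0$ applied to two independent realizations of $A_{0^k}(W)$; the inductive hypothesis says each is equivalent to $\Delta_{2^k}(W)$ over a disjoint pool of $W$-copies, so part~1 delivers $A_0(\Delta_{2^k}(W),\Delta_{2^k}(W))\cong\Delta_{2^{k+1}}(W)$. The $\nabla$-statement is entirely symmetric.

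\textbf{Part 3.} This is the substantive step. I would first apply Theorem~\ref{lem200} iteratively to the $t$ independent copies of $W_{p,q}=p\mathrm{B}(0)+q\mathrm{B}(1/2)+\overline{p+q}W$. This expresses $\Delta_t(W_{p,q})$ as an RSC indexed by words $\tau\in\{\mathrm{B}(0),\mathrm{B}(1/2),W\}^t$: the word $\tau$ carries weight $p^{n_0(\tau)}q^{n_{1/2}(\tau)}\overline{p+q}^{n_W(\tau)}$ and contributes the sub-channel $\Delta(\tau_0,\ldots,\tau_{t-1})$. Next I collapse each $\Delta(\tau)$ using Lemma~\ref{lem301}: by (\ref{mm300}) together with the commutativity/associativity identities (\ref{mm303})--(\ref{mm305}), any $\mathrm{B}(1/2)$-slot annihilates $\Delta(\tau)$ to $\mathrm{B}(1/2)$; by (\ref{mm301}) any $\mathrm{B}(0)$-slot simply drops out. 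Hence $\Delta(\tau)\cong\mathrm{B}(1/2)$ whenever $n_{1/2}(\tau)>0$, and $\Delta(\tau)\cong\Delta_{n_W(\tau)}(W)$ otherwise. Summing the weights, the total mass of $\mathrm{B}(1/2)$-containing words is $1-\overline{q}^t=\overline{\overline{q}^t}$, while among the remaining words those with exactly $i$ copies of $W$ and $t-i$ copies of $\mathrm{B}(0)$ carry total weight $\binom{t}{i}p^{t-i}\overline{p+q}^i$. Collecting these gives (\ref{pp7}). Formula (\ref{pp8}) is obtained by the dual argument, with the roles of $\mathrm{B}(0)$ and $\mathrm{B}(1/2)$ swapped via (\ref{mm301})--(\ref{mm302}) and (\ref{mm304})--(\ref{mm306}).

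The main obstacle is the bookkeeping in Part~3: one must confirm that iterated application of Theorem~\ref{lem200} produces exactly the $3^t$ weighted words (not something more intricate), and that the absorption rules can be peeled off slot by slot rather than only at the outermost $A_0$. Both are secured by the permutation-invariance and associativity of $\Delta$ and $\nabla$ recorded in (\ref{u03})--(\ref{u06}); once this is in place the remaining arithmetic is routine multinomial counting.
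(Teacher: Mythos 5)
Your proposal is correct and uses exactly the ingredients the paper points to (Lemma~\ref{lem301}, Theorem~\ref{lem200}, and the structural identities (\ref{u01})--(\ref{u06})); the paper itself offers no written proof, asserting only that the lemma follows ``easily.'' One small omission: in Part~1 you invoke (\ref{u05})--(\ref{u06}), which are stated for $1\le j<t$, so the boundary cases $i=0$ or $j=0$ are not covered by that appeal. They are immediate from (\ref{mm301}) (for $A_0(\mathrm{B}(0),\cdot)$ and $A_1(\mathrm{B}(1/2),\cdot)$) together with commutativity (\ref{mm303})--(\ref{mm304}), but they should be mentioned since the lemma explicitly allows $i=0$ and $j=0$. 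In Part~2, the inductive step implicitly uses the corollary that equivalent inputs produce equivalent Arikan transformations; this is fine but worth flagging as the justification for replacing the two independent realizations of $A_{0^k}(W)$ by two independent realizations of $\Delta_{2^k}(W)$. Part~3 is the right decomposition: iterating Theorem~\ref{lem200} does indeed yield exactly the $3^t$ weighted words (because each additional slot in the recursion (\ref{u01})--(\ref{u02}) splits into the three sub-channels of $W_{p,q}$), and the slot-by-slot absorption is secured precisely by the permutation-invariance (\ref{u03})--(\ref{u04}) plus (\ref{mm300})--(\ref{mm302}); the multinomial collection then matches (\ref{pp7})--(\ref{pp8}).
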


The following corollary is a direct application of Theorem~\ref{cor501'}.
\begin{corollary}\label{lem501}
For $W\cong\sum_{j\in[n]}q_j\mathrm{B}(\varepsilon_j)$ with $\varepsilon_0^n\in (0,1)^n$ and $m\geq 1$,
\begin{gather}
\Delta_m(W)\cong\sum_{i_0^m\in[n]^m}\mathrm{B}(\star_{j\in[m]}\varepsilon_{i_j})\prod_{j\in[m]}q_{i_j},
\label{mm01}\\
\nabla_m(W)\cong\sum_{\{i_l\in[n],\sigma_{l}\in\{\varepsilon_{i_l},\overline{\varepsilon_{i_l}}\}\}_{
l\in[m]}}
\mathrm{B}(\diamond_{l\in[m]}\sigma_{l})\varpi(\sigma_0^m)\prod_{l\in [m]}q_{i_l},\label{mm02}
\end{gather}
where $\varepsilon$ and $\overline{\varepsilon}$ should be seen as different elements even if $\varepsilon=1/2$.
\end{corollary}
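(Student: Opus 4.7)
The plan is to treat Corollary~\ref{lem501} as a pure specialization of Theorem~\ref{cor501'}, taking $W_l \cong W$ for every $l \in [m]$. By the definitions preceding (\ref{u05}) and (\ref{u06}), $\Delta_m(W) = \Delta(\{W_i\}_{i\in[m]})$ and $\nabla_m(W) = \nabla(\{W_i\}_{i\in[m]})$ where the $W_i$ are independent copies of $W$, and each $W_i \cong \sum_{j\in[n]} q_j \mathrm{B}(\varepsilon_j)$. So we are directly in the hypothesis of Theorem~\ref{cor501'} with $n_l = n$, $q_{i,l} = q_i$, and $\varepsilon_{i,l} = \varepsilon_i$ for every $l \in [m]$ and $i \in [n]$.

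Second, I would rewrite the two formulas of Theorem~\ref{cor501'} under this specialization. The multi-index sum $\sum_{\{i_l \in [n_l]\}_{l\in[m]}}$ collapses to $\sum_{i_0^m \in [n]^m}$, the product $\prod_{l\in[m]} q_{i_l,l}$ becomes $\prod_{j\in[m]} q_{i_j}$, and by the associativity and commutativity of $\star$ recorded in Lemma~\ref{lem201}, $\star_{l\in[m]} \varepsilon_{i_l,l}$ reduces to $\star_{j\in[m]} \varepsilon_{i_j}$, giving the claimed representation of $\Delta_m(W)$. The analogous rewriting applies for $\nabla_m(W)$, using the associativity and commutativity of $\diamond$ from Lemma~\ref{lem201}. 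The hypothesis $\varepsilon_0^n \in (0,1)^n$ guarantees that every choice $\sigma_l \in \{\varepsilon_{i_l},\overline{\varepsilon_{i_l}}\}$ lies in $(0,1)$, so the tuple $\sigma_0^m$ always falls in the regime $(0,1)^m$ where $\varpi(\sigma_0^m) = \frac{1}{2}\bigl(\prod_{l\in[m]}\sigma_l + \prod_{l\in[m]}\overline{\sigma_l}\bigr)$ by (\ref{u07}) and $\diamond_{l\in[m]}\sigma_l$ is the well-defined group element described before Lemma~\ref{lem700}.

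Since the corollary is in essence a notational restatement, the main obstacle is merely bookkeeping rather than mathematical content. All of the substantive work has been done upstream: Lemma~\ref{lem202} provides the base-case Arikan transforms of two BSCs in terms of $\star$ and $\diamond$; Lemma~\ref{lem700} extends this to $t$-fold $\Delta$ and $\nabla$ of BSCs by induction on $t$; Theorem~\ref{lem200} distributes Arikan transformations over random switching; and these are combined in Theorem~\ref{cor501'}. I would finally note explicitly, as in Lemma~\ref{lem700}, that $\varepsilon$ and $\overline{\varepsilon}$ are to be treated as distinct choices even when $\varepsilon = 1/2$, since this convention is precisely what makes the coefficient $\varpi(\sigma_0^m)$ produce the correct weight in the $\nabla$ formula.
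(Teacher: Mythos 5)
Your proposal is correct and matches the paper's approach exactly: the paper states that Corollary~\ref{lem501} ``is a direct application of Theorem~\ref{cor501'}'' and gives no further proof, so your specialization $W_l\cong W$, $n_l=n$, $q_{i,l}=q_i$, $\varepsilon_{i,l}=\varepsilon_i$ is precisely what is intended. The bookkeeping details you supply, including the observation that $\varepsilon_0^n\in(0,1)^n$ places $\sigma_0^m$ in the regime where $\varpi$ takes its product form, are accurate and fill in what the paper leaves implicit.
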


Clearly, according to Lemma \ref{lem201} one can express $\Delta_m(W)$ and $\nabla_m(W)$ as RSCs of less BSCs than those shown in Corollary~\ref{lem501} by merging the equivalent sub-channels.
For any symmetric BIDMC, its compactest expression of RSC of BSCs is the LRP-oriented form.
If the LRP-oriented form of $W$ is an RSC of $n$ BSCs, i.e., $W\cong\sum_{j\in[n]}q_j\mathrm{B}(\varepsilon_j)$ for some positive numbers $\{q_j\}_{j\in[n]}$ and $0\leq\varepsilon_0<\cdots<\varepsilon_{n-1}\leq 1/2$,
we denote $\phi(W)=n$.

\begin{lemma}\label{lemf00}
1. For any vector $a_0^n$ of nonnegative integers with $\sum_{i\in[n]}a_i=m$, let $K(a_0^n)$ denote the number of vectors in
$$\{d_0^m\in[n]^m:|\{j\in [m]: d_j=i\}|=a_i,\,i\in[n]\}.$$ Then,
\begin{gather} K(a_0^n)=\binom{m}{a_0^n}=\frac{m!}{\prod_{i\in[n]}(a_i!)}.\label{gg20}
\end{gather}
2. Let $H_{m,n}$ denote the number of methods for putting $m$ balls into $n$ distinct boxes. Then,
\begin{align}
H_{m,n}=\binom{m+n-1}{m}=\frac{(m+n-1)!}{m!(n-1)!}.
\end{align}
\end{lemma}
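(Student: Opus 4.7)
The plan is to treat the two parts as standard combinatorial counting exercises, since both identities are classical and the proofs are essentially bookkeeping arguments. I would handle them separately since they involve different constructions.

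For part 1, I would argue directly by sequentially choosing positions. Given the target counts $a_0,a_1,\ldots,a_{n-1}$ with $\sum_{i\in[n]}a_i=m$, the task is to place the symbol $i$ in exactly $a_i$ of the $m$ coordinates, and the coordinates receiving distinct symbols must be disjoint subsets partitioning $[m]$. I would first select the $a_0$ positions to receive value $0$ in $\binom{m}{a_0}$ ways, then the $a_1$ positions from the remaining $m-a_0$ to receive value $1$ in $\binom{m-a_0}{a_1}$ ways, and so on through value $n-1$, finally obtaining
\begin{equation*}
K(a_0^n)=\binom{m}{a_0}\binom{m-a_0}{a_1}\cdots\binom{a_{n-1}}{a_{n-1}}=\frac{m!}{\prod_{i\in[n]}(a_i!)},
\end{equation*}
where the product collapses by the obvious telescoping of factorials. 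No obstacle arises here.

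For part 2, I would invoke the standard stars-and-bars bijection. Placing $m$ indistinguishable balls into $n$ distinguishable boxes is equivalent to choosing a nonnegative integer tuple $(a_0,\ldots,a_{n-1})$ with $\sum_i a_i=m$, which in turn is encoded as a binary string containing $m$ stars (for balls) and $n-1$ bars (for box dividers), where the number of stars between consecutive bars records the contents of the corresponding box. The total number of such strings of length $m+n-1$ is the number of ways to select the positions of the $m$ stars among $m+n-1$ slots, yielding
\begin{equation*}
H_{m,n}=\binom{m+n-1}{m}=\frac{(m+n-1)!}{m!(n-1)!}.
\end{equation*}
The bijection between the two descriptions is clearly one-to-one and onto, finishing the count.

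Neither part presents a real obstacle; both are textbook identities. The only mild care needed is making sure the telescoping in part 1 is written cleanly and that the boundary cases (e.g. some $a_i=0$, or $n=1$ in part 2) are consistent with the formulas, which they are since $\binom{m}{0}=1$ and empty products equal $1$. If desired for uniformity with the rest of the paper, I could also give an alternative derivation of part 2 by induction on $n$, using Pascal's rule, but the bijective argument is shorter and more transparent.
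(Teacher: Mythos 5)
Your proof is correct. The paper itself omits the proof of this lemma (stating only that it is simple), so there is no paper argument to compare against; your sequential-choice derivation of the multinomial coefficient for part 1 and the stars-and-bars bijection for part 2 are exactly the standard arguments one would expect, and both are sound. One small point worth making explicit, since the paper's statement is terse: the formula $H_{m,n}=\binom{m+n-1}{m}$ presupposes the $m$ balls are indistinguishable while the $n$ boxes are distinguishable, which you correctly assume in setting up the bijection with compositions $\sum_i a_i = m$ in nonnegative integers.
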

\begin{proof}
The proof for this lemma is simple, we omit it here.
\end{proof}
According to Corollary~\ref{lem501}, Lemmas~\ref{lem201} and \ref{lemf00}, one can deduce the following theorem easily.
\begin{theorem}\label{cor10}
For $m\geq 1$ and $W\cong\sum_{i\in[n]}q_i\mathrm{B}(\varepsilon_i)$ with $\varepsilon_0^n\in (0,1)^n$,

\noindent 1.
$\Delta_m(W)$ is equivalent to an RSC of $\phi(\Delta_m(W))\leq \binom{m+n-1}{m}$ BSCs as the following
\begin{align}
\Delta_m(W)\cong\sum_{\sum_{i\in[n]}a_i=m}
\mathrm{B}\big(\star_{i\in[n]}\varepsilon_{i}^{\star a_i}\big)\binom{m}{a_0^n}\prod_{i\in[n]}q_{i}^{a_i},\label{pp10}
\end{align}
where $a_0^n$ is a vector of nonnegative integers, $\varepsilon^{\star 0}=1$ and, for integer $a>0$,
$$\varepsilon^{\star a}=\underbrace{\varepsilon\star\cdots\star\varepsilon}_{a}
=\sum_{0\leq i\leq \lfloor (a-1)/2\rfloor}\binom{a}{2i+1}\varepsilon^{2i+1}\overline{\varepsilon}^{a-2i-1}.$$

\noindent 2. $\nabla_m(W)$ is equivalent to an RSC of
\begin{gather}
\phi(\nabla_m(W))\leq
1+\sum_{\omega=1}^{m}2^{\omega-1}\binom{n}{\omega}\sum_{0\leq 2b\leq m-\omega}\binom{m-2b-1}{m-2b-\omega}
\end{gather}
BSCs as the following
\begin{align}
\nabla_m(W)\cong&
\sum_{\Omega\subset[n],\sum_{i\in\Omega}a_i+2\sum_{j\in[n]}b_j=m}
\binom{m}{a_0^n+b_0^n,b_0^n}
\prod_{i\in[n]}q_{i}^{a_i+2b_i}(\varepsilon_i\overline{\varepsilon_i})^{b_i}\nonumber\\
&\ \ \ \ \ \ \ \ \ \ \
\sum_{\{\sigma_{l}\in\{\varepsilon_{l},\overline{\varepsilon}_{l}\}\}_{l\in\Omega}}
\frac{1}{2}\bigg(\prod_{i\in\Omega}\sigma_i^{a_i}+
\prod_{i\in\Omega}\overline{\sigma_i}^{a_i}\bigg)\mathrm{B}(\diamond_{i\in\Omega}\sigma^{\diamond a_i}_{i}),
\label{pp4}
\end{align}
where $\varepsilon$ and $\overline{\varepsilon}$ should be always seen as different elements, $b_j\geq 0$ for $j\in[n]$, $a_i>0$ for $i\in\Omega$, $a_l=0$ for $l\in[n]\setminus\Omega$,
$$\sigma^{\diamond a}=\underbrace{\sigma\diamond\cdots\diamond\sigma}_{a}=
\frac{\sigma^a}{\sigma^a+\overline{\sigma}^a},$$
and the inner summation in (\ref{pp4}) denotes $\mathrm{B}(1/2)$ if $\Omega=\emptyset$.
\end{theorem}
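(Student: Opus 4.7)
The plan is to take the expansions from Corollary~\ref{lem501} as starting points and then collapse equivalent sub-channels by exploiting the commutativity/associativity of $\star$ and $\diamond$ (Lemma~\ref{lem201}), the identity $\varepsilon\diamond\overline{\varepsilon}=1/2$, and $\mathrm{B}(x)\cong\mathrm{B}(\overline{x})$. Multinomial counts are supplied by Lemma~\ref{lemf00}.

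For part 1, the sum (\ref{mm01}) runs over $i_0^m\in[n]^m$, and by commutativity of $\star$ both the BSC parameter $\star_{j\in[m]}\varepsilon_{i_j}$ and the weight $\prod_j q_{i_j}$ depend only on the multiplicity vector $a_0^n$ with $a_i=|\{j:i_j=i\}|$. Lemma~\ref{lemf00} gives $\binom{m}{a_0^n}$ tuples for a fixed $a_0^n$, yielding (\ref{pp10}) at once. The number of vectors $a_0^n$ of nonnegative integers with $\sum_i a_i=m$ equals $H_{m,n}=\binom{m+n-1}{m}$, so $\phi(\Delta_m(W))\leq\binom{m+n-1}{m}$, with possible further coincidences among the BSC parameters. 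The polynomial expansion of $\varepsilon^{\star a}$ then follows by a routine induction on $a$ using $\varepsilon_1\star\varepsilon_2=\overline{\varepsilon_1}\varepsilon_2+\varepsilon_1\overline{\varepsilon_2}$, keeping only the odd-index contributions.

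For part 2, starting from (\ref{mm02}), each position $l\in[m]$ carries both an index $i_l\in[n]$ and a sign $\sigma_l\in\{\varepsilon_{i_l},\overline{\varepsilon_{i_l}}\}$. For each $i\in[n]$, let $k_i$ be the number of positions with $i_l=i$ and $e_i$ the number of those with $\sigma_l=\varepsilon_i$; set $a_i=|2e_i-k_i|$ (the surplus) and $b_i=\min(e_i,k_i-e_i)$ (the canceling pairs), so $k_i=a_i+2b_i$. Let $\Omega=\{i:a_i>0\}$ and let $\sigma_i\in\{\varepsilon_i,\overline{\varepsilon_i}\}$ denote the surplus direction for $i\in\Omega$. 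Since $\sigma\diamond\overline{\sigma}=1/2$ is the identity of $\diamond$, each canceling pair drops out, giving $\diamond_{l\in[m]}\sigma_l=\diamond_{i\in\Omega}\sigma_i^{\diamond a_i}$. Each canceling pair contributes $\varepsilon_i\overline{\varepsilon_i}$ to both $\prod_l\sigma_l$ and $\prod_l\overline{\sigma_l}$, while the surplus at $i\in\Omega$ contributes $\sigma_i^{a_i}$ to the first product and $\overline{\sigma_i}^{a_i}$ to the second, so
\begin{align*}
\varpi(\sigma_0^m)=\tfrac{1}{2}\Bigl(\prod_{i\in\Omega}\sigma_i^{a_i}+\prod_{i\in\Omega}\overline{\sigma_i}^{a_i}\Bigr)\prod_{j\in[n]}(\varepsilon_j\overline{\varepsilon_j})^{b_j}.
\end{align*}
The number of tuples $(i_l,\sigma_l)_{l\in[m]}$ yielding the configuration $(\{a_i\},\{b_j\},\{\sigma_i\})$ is $\frac{m!}{\prod_i k_i!}\prod_i\binom{k_i}{a_i+b_i}=\binom{m}{a_0^n+b_0^n,b_0^n}$, obtained by first distributing the $m$ positions among indices and then, within each index, choosing the $a_i+b_i$ positions that take the surplus direction. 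Combining these factors with $\prod_i q_i^{a_i+2b_i}$ gives (\ref{pp4}).

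For the bound on $\phi(\nabla_m(W))$, observe that the BSC parameter $\diamond_{i\in\Omega}\sigma_i^{\diamond a_i}$ depends only on $(\Omega,\{a_i\}_{i\in\Omega},\{\sigma_i\}_{i\in\Omega})$ and not on $\{b_j\}$, and is further invariant under simultaneously flipping all signs $\sigma_i\mapsto\overline{\sigma_i}$ since $\mathrm{B}(x)\cong\mathrm{B}(\overline{x})$. The case $\Omega=\emptyset$ contributes the single channel $\mathrm{B}(1/2)$; for $\omega=|\Omega|\geq 1$, there are $\binom{n}{\omega}$ choices of $\Omega$, for each $b=\sum_j b_j$ with $2b+\omega\leq m$ there are $\binom{m-2b-1}{\omega-1}=\binom{m-2b-1}{m-2b-\omega}$ compositions of $m-2b$ into $\omega$ positive parts, and $2^{\omega}/2=2^{\omega-1}$ sign classes survive the $\sigma\leftrightarrow\overline{\sigma}$ identification; summing yields the stated upper bound. \textbf{The main obstacle} is the bookkeeping in part 2: keeping the multinomial tuple count, the $(\varepsilon_j\overline{\varepsilon_j})^{b_j}$ factor extracted from the canceling pairs in $\varpi$, and the probability weights mutually consistent, and then verifying that the enumeration of BSC classes correctly exploits both the $\{b_j\}$-independence of the BSC parameter and the $\mathrm{B}(x)\cong\mathrm{B}(\overline{x})$ symmetry.
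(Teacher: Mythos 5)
Your proof is correct and follows essentially the same route as the paper's own (very terse) proof, which simply cites Corollary~\ref{lem501}, Lemma~\ref{lem201} and Lemma~\ref{lemf00}. You have supplied the bookkeeping that the paper leaves implicit: the surplus/cancellation decomposition $(a_i,b_i)$ with $k_i=a_i+2b_i$, the factorization of $\varpi(\sigma_0^m)$, the multinomial count $\binom{m}{a_0^n+b_0^n,\,b_0^n}$, and the enumeration of BSC classes (the $\{b_j\}$-independence of the $\diamond$-parameter plus the $\mathrm{B}(x)\cong\mathrm{B}(\overline{x})$ symmetry) that yields the upper bound on $\phi(\nabla_m(W))$.
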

\begin{proof}
(\ref{pp10}) and (\ref{pp4}) follow
simply from Corollary~\ref{lem501}, Lemmas \ref{lem201} and \ref{lemf00}.
According to (\ref{pp10}) we see $\phi(\Delta_m(W))\leq H_{m,n}=\binom{m+n-1}{m}$.
According to (\ref{pp4}) and $\mathrm{B}(\diamond_{i\in\Omega}\sigma^{\diamond a_i}_{i})\cong\mathrm{B}(\diamond_{i\in\Omega}\overline{\sigma}^{\diamond a_i}_{i})$ we see
\begin{align*}
\phi(\nabla_m(W))\leq& 1+\sum_{\omega=1}^{m}2^{\omega-1}\binom{n}{\omega}\sum_{0\leq 2b\leq m-\omega}H_{m-\omega-2b,\omega}\\
=&1+\sum_{\omega=1}^{m}2^{\omega-1}\binom{n}{\omega}\sum_{0\leq 2b\leq m-\omega}\binom{m-2b-1}{m-2b-\omega},
\end{align*}
where $\omega$ and $b$ correspond the cardinality of the set $\Omega$ and the summation $\sum_{i\in[n]}b_i$, respectively.
\end{proof}

If some of the cross-over probabilities are equal to $1/2$, then Theorem \ref{cor10}
can be refined further as the following theorem.
\begin{theorem}
\label{lem80}
For $m\geq 1$ and $W\cong\sum_{j\in[n+1]}q_j\mathrm{B}(\varepsilon_j)$ with $0<\varepsilon_0<\cdots<\varepsilon_{n-1}<\varepsilon_n=1/2$,

\noindent 1.
$\Delta_{m}(W)$ is equivalent to an RSC of $\phi(\Delta_{m}(W))\leq H_{m,n-1}+1=\binom{m +n-1}{m}+1$ BSCs as the following
\begin{align}
\Delta_{m}(W)\cong&
(1-\overline{q_n}^{m})\mathrm{B}\big(\frac{1}{2}\big)+\sum_{\sum_{i\in[n]}a_i=m}
\mathrm{B}\big(\star_{i\in[n]}\varepsilon_{i}^{\star a_i}\big)
\binom{m}{a_0^n}\prod_{i\in[n]}q_{i}^{a_i},
\label{pp11}
\end{align}
where $a_0^n$ is vector of nonnegative integers.

\noindent 2. $\nabla_{m}(W)$ is equivalent to an RSC of
\begin{gather}
\phi(\nabla_{m}(W))\leq 1+\sum_{\omega=1}^{m}2^{\omega-1}\binom{n}{\omega}\binom{m}{\omega}
\end{gather}
BSCs as the following
\begin{align}
\nabla_{m}(W)\cong&
\sum_{s+\sum_{j\in[n]}(a_j+2b_j)=m}
\binom{m}{s,a_0^n+b_0^n,b_0^n}
q_n^s\prod_{i\in[n]}q_{i}^{a_i+2b_i}(\varepsilon_i\overline{\varepsilon_i})^{b_i}\nonumber\\
&\ \ \ \ \ \
\sum_{\{\sigma_l\in\{\varepsilon_l,\overline{\varepsilon}_l\}\}_{a_l>0,l\in[n]}}
\frac{1}{2}\bigg(\prod_{i\in[n]}\sigma_i^{a_i}+\prod_{i\in[n]}\overline{\sigma_i}^{a_i}\bigg)
\mathrm{B}(\diamond_{i\in[n]}\sigma^{\diamond a_i}_{i}),\label{pp5}
\end{align}
where integer $s\geq 0$, $a_0^n,b_0^n$ are vectors of nonnegative integers, $\sigma^{\diamond 0}=1/2$ and the inner summation in (\ref{pp5}) denotes $\mathrm{B}(1/2)$ if $\sum_{i\in[n]}a_i=0$.
\end{theorem}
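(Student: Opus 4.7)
The plan is to apply Theorem~\ref{cor10} (or the more primitive Corollary~\ref{lem501}) to the decomposition $W\cong\sum_{j\in[n+1]}q_j\mathrm{B}(\varepsilon_j)$ and then exploit the fact that the extra BSC component $\mathrm{B}(1/2)$ interacts specially with $\star$ and $\diamond$: by Lemma~\ref{lem201} we have $a\star\tfrac12=\tfrac12$, $a\diamond\tfrac12=a$, and $\varepsilon\diamond\overline{\varepsilon}=\tfrac12$, so $\tfrac12$ is absorbing for $\star$ and acts as the identity for $\diamond$. These identities will collapse many of the terms produced by a naive expansion.

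For the $\Delta_m(W)$ claim, I would apply (\ref{pp10}) with $n+1$ summands. For a multi-index $(a_0,\ldots,a_n)$ with $\sum a_i=m$, the associated BSC is $\mathrm{B}(\star_{i\in[n+1]}\varepsilon_i^{\star a_i})$; whenever $a_n\geq 1$, the factor $\varepsilon_n^{\star a_n}=1/2$ absorbs the whole $\star$-product into $1/2$, collapsing the term to $\mathrm{B}(1/2)$. The multinomial theorem gives
\[
\sum_{a_n\geq 1,\,\sum a_i=m}\binom{m}{a_0^{n+1}}\prod_{i\in[n+1]}q_i^{a_i}=1-\overline{q_n}^{\,m},
\]
which is the total weight on the coalesced $\mathrm{B}(1/2)$; the residual $a_n=0$ contribution is exactly the sum in (\ref{pp10}) restricted to $n$ indices $a_0^n$. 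Combining these yields (\ref{pp11}), and the bound on $\phi(\Delta_m(W))$ follows from counting: one coalesced $\mathrm{B}(1/2)$ plus at most $\binom{m+n-1}{m}$ compositions $a_0^n$ with $\sum_{i\in[n]}a_i=m$.

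For the $\nabla_m(W)$ claim, I would start from Corollary~\ref{lem501} with $n+1$ summands and reorganize each tuple $(i_l,\sigma_l)_{l\in[m]}$ by the value of $i_l$: let $s=|\{l:i_l=n\}|$ (so $\sigma_l=1/2$ on these positions) and for $i\in[n]$ write $|\{l:i_l=i\}|=a_i+2b_i$, where $b_i$ counts the positions paired as $(\varepsilon_i,\overline{\varepsilon_i})$ (which cancel in $\diamond$ by $\varepsilon_i\diamond\overline{\varepsilon_i}=\tfrac12$) and $a_i$ counts the excess in a single majority direction $\sigma_i\in\{\varepsilon_i,\overline{\varepsilon_i}\}$. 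The $\diamond$-product then collapses to $\diamond_{i\in\Omega}\sigma_i^{\diamond a_i}$ with $\Omega=\{i:a_i>0\}$, and a short computation yields
\[
\prod_l\sigma_l=\Bigl(\tfrac12\Bigr)^s\prod_{i\in\Omega}(\varepsilon_i\overline{\varepsilon_i})^{b_i}\prod_{i\in\Omega}\sigma_i^{a_i}
\]
together with the dual expression for $\prod_l\overline{\sigma_l}$, so that $\varpi(\sigma_0^m)=(\tfrac12)^{s+1}\prod_i(\varepsilon_i\overline{\varepsilon_i})^{b_i}(\prod_i\sigma_i^{a_i}+\prod_i\overline{\sigma_i}^{a_i})$.

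The main obstacle is the multiplicity count, which must exactly reproduce the coefficient in (\ref{pp5}). For fixed parameters $(s,\{a_i\},\{b_i\},\{\sigma_i\})$, the number of tuples $(i_l,\sigma_l)_{l\in[m]}$ that realize them is $2^s\binom{m}{s,a_0^n+b_0^n,b_0^n}$: the multinomial factor distributes the positions (into $T_n$, and within each $T_i$ into majority and minority), while the $2^s$ records the two formally distinct choices $\sigma_l\in\{\varepsilon_n,\overline{\varepsilon_n}\}=\{1/2,1/2\}$ at each position with $i_l=n$, per the convention following (\ref{u09}). This $2^s$ cancels the $(\tfrac12)^s$ from $\varpi$, leaving the single $\tfrac12$ displayed in (\ref{pp5}). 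Finally, the upper bound on $\phi(\nabla_m(W))$ follows by enumerating the distinct LRP-oriented BSCs: one for $\Omega=\emptyset$, and for each $1\leq\omega\leq m$ there are $\binom{n}{\omega}$ choices of $\Omega$, $2^{\omega-1}$ inequivalent direction tuples (halving $2^\omega$ via $\mathrm{B}(\diamond_i\sigma_i^{\diamond a_i})\cong\mathrm{B}(\diamond_i\overline{\sigma_i}^{\diamond a_i})$), and, by the hockey-stick identity, $\sum_{a=\omega}^{m}\binom{a-1}{\omega-1}=\binom{m}{\omega}$ compositions $\{a_i\geq 1\}_{i\in\Omega}$ with $\sum_{i\in\Omega}a_i\leq m$.
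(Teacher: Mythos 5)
Your proof is correct, and for part 1 it coincides with the paper's (the paper invokes (\ref{pp7}) and (\ref{pp10}); your direct expansion of (\ref{pp10}) with $n+1$ indices followed by the multinomial-theorem collapse of all terms with $a_n\geq 1$ is the same computation). For part 2 you take a slightly more self-contained route: the paper derives (\ref{pp5}) from Theorem~\ref{cor10}'s (\ref{pp4}) applied to $n+1$ cross-over probabilities and then proves two explicit multinomial identities, namely $2\sum_{b_n}\binom{m}{a_0^{n+1}+b_0^{n+1},b_0^{n+1}}=2^s\binom{m}{s,a_0^n+b_0^n,b_0^n}$ (odd $s$) and its even-$s$ analogue, to merge the index-$n$ contributions; you instead go back to Corollary~\ref{lem501} and reorganize the raw $m$-tuples $(i_l,\sigma_l)$ directly by $(s,a_0^n,b_0^n,\{\sigma_i\}_{i\in\Omega})$, where your multiplicity count $2^s\binom{m}{s,a_0^n+b_0^n,b_0^n}$ plays exactly the role of those identities and makes transparent why the $2^s$ cancels the $(1/2)^s$ coming from $\varpi$. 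Your enumeration for the bound on $\phi(\nabla_m(W))$ via the hockey-stick identity $\sum_{A=\omega}^{m}\binom{A-1}{\omega-1}=\binom{m}{\omega}$ matches the paper's sum $\sum_{0\le k\le m-\omega}H_{m-\omega-k,\omega}=\binom{m}{\omega}$ up to a change of summation variable. One small remark not affecting your argument: the paper's displayed bound reads $H_{m,n-1}+1=\binom{m+n-1}{m}+1$, but since $H_{m,n}=\binom{m+n-1}{m}$ the intended quantity is $H_{m,n}+1$, which is precisely the count you give (one coalesced $\mathrm{B}(1/2)$ plus at most $H_{m,n}$ compositions $a_0^n$).
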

\begin{proof}
Clearly, (\ref{pp11}) follows from (\ref{pp7}) and (\ref{pp10}).
For any vectors $a_0^n, b_0^n$ of nonnegative integers with $\sum_{i\in[n]}(a_i+2b_i)\leq m$,
let $s=m-\sum_{i\in[n]}(a_i+2b_i)$. If $s=2t+1$ is odd, then
we have
\begin{align*}
&2\sum_{b_n=0}^t\binom{m}{a_0^n+b_0^n,b_0^n}=
2^{s}\binom{m}{s,a_0^n+b_0^n,b_0^n},
\end{align*}
where $a_n=s-2b_n$ is always positive. If $s=2t$ is even, then we have
\begin{align*}
&\binom{m}{a_0^n+b_0^n,b_0^n,t,t}+
2\sum_{b_n=0}^{t-1}\binom{m}{a_0^{n+1}+b_0^{n+1},b_0^{n+1}}
=2^{s}\binom{m}{s,a_0^n+b_0^n,b_0^n},
\end{align*}
where $a_n=s-2b_n$ is always positive too. Therefore, we see that (\ref{pp5}) follows from (\ref{pp4}), $\varepsilon_n=\frac{1}{2}$ and $\sigma\diamond \frac{1}{2}=\sigma$ for any $\sigma\in(0,1)$.
Furthermore, from (\ref{pp5}) and $\overline{\diamond_{i\in[n]}\sigma^{\diamond a_i}_{i}}=\diamond_{i\in[n]}\overline{\sigma}^{\diamond a_i}_{i}$ we see
\begin{align*}
\phi(\nabla_m(W))\leq& 1+\sum_{\omega=1}^{m}2^{\omega-1}\binom{n}{\omega}\sum_{0\leq k\leq m-\omega}H_{m-\omega-k,\omega}\\
=&1+\sum_{\omega=1}^{m}2^{\omega-1}\binom{n}{\omega}\sum_{0\leq k\leq m-\omega}\binom{m-k-1}{m-k-\omega}\\
=&1+\sum_{\omega=1}^{m}2^{\omega-1}\binom{n}{\omega}\binom{m}{\omega},
\end{align*}
where $\omega$ corresponds the number of positive integers $a_i$, $k$ corresponds the sum of $s$ and $2\sum_{i\in[n]}b_i$.
\end{proof}

We note that, for any synthetic channel generated in polar codes over symmetric BIDMCs,
it is not difficult to give an exact expression of RSC of BSCs for it by using
Theorems~\ref{cor10} and \ref{lem80}.

\subsection{Arikan Transformations $A_{\alpha}(\mathrm{E}(q))$ and $A_{\alpha}(\mathrm{B}(\varepsilon))$}

It has been pointed out in \cite{Arikan09} that all the Arikan transformations are BECs when
the underlying channel is a BEC.
Indeed, if the underlying channel $W$ is $\mathrm{E}(q)\cong\overline{q}\mathrm{B}(0)+q\mathrm{B}(1/2)$, the BEC with erasure probability $q$, then for any integer $t$ from (\ref{pp7}) and (\ref{pp8}) we have
\begin{gather}
\Delta_{t}(\mathrm{E}(q))\cong
\overline{q}^t\mathrm{B}(0)+\overline{\overline{q}^t}\mathrm{B}(1/2)
\cong\mathrm{E}(\overline{\overline{q}^t}), \\
\nabla_{t}(\mathrm{E}(q))\cong
\overline{q^t}\mathrm{B}(0)+q^t\mathrm{B}(1/2)\cong\mathrm{E}(q^t).
\end{gather}
Furthermore, for nonnegative integer $s$, let $f_s$ denote the map over $[0,1]$ defined by $f_s(p)=\overline{p}^{2^s}$.
Therefore, for any sequence $\alpha=0^{t_1}1^{t_2}\cdots0^{t_{2r-1}}1^{t_{2r}}$ in $\{0,1\}^{*}$ with $t_1\geq 0,t_{2r}\geq 0$ and $t_i>0, 2\leq i\leq 2r-1$, we have
\begin{gather*}
A_{\alpha}(\mathrm{E}(q))\cong \overline{F_{\alpha}(q)}\mathrm{B}(0)
+F_{\alpha}(q)\mathrm{B}(1/2)
\cong \mathrm{E}(F_{\alpha}(q)),\\
I(A_{\alpha}(\mathrm{E}(q)))=\overline{F_{\alpha}(q)},
\end{gather*}
where $F_{\alpha}$ is the compound map $f_{t_{2r}}\circ f_{t_{2r-1}}\circ\cdots \circ f_{t_1}$ over $[0,1]$.
For example, for $\alpha=0110=0^11^20^11^0$, we have
\begin{gather*}
F_{0110}(q)=1-(1-(1-(1-q)^2)^4)^2,\\
I(A_{0110}(\mathrm{E}(q)))=\overline{F_{0110}(q)}=(1-(1-(1-q)^2)^4)^2.
\end{gather*}

If the underlying channel $W$ is a general BIDMC but BEC,
for any Arikan transformation of $W$ one can deduce an exact expression of RSC of BSCs by using
the results shown in the last subsection.
However, as pointed in the proof of Theorem~\ref{lem80}, the Arikan transformation $\nabla_{m}(W)$ given in (\ref{pp5}) is not in LRP-oriented form, some of the sub-channels may be merged further.

In the following theorem we show the LRP-oriented forms for some of the synthetic channels generated in polar codes when the underlying channel is a BSC.

\begin{theorem}
For $l\geq 0$ and $\varepsilon\in(0,1/2)$, we have
\begin{gather}
A_{0^l}(\mathrm{B}(\varepsilon))\cong \mathrm{B}(\varepsilon_l),\label{mm307'}
\end{gather}
where $\varepsilon_l=\varepsilon^{\star 2^l}$. For $k\geq 0$, we have
\begin{align}
A_{0^l1^{k+1}}(\mathrm{B}(\varepsilon))\cong
\binom{2^{k+1}}{2^{k}}(\varepsilon_l\overline{\varepsilon_l})^{2^{k}}
\mathrm{B}\Big(\frac{1}{2}\Big)+
\sum_{i=1}^{2^{k}}\binom{2^{k+1}}{2^{k}-i}
\frac{\varepsilon_l^{2i}+\overline{\varepsilon_l}^{2i}}
{(\varepsilon_l\overline{\varepsilon_l})^{i-2^k}}
\mathrm{B}(\varepsilon_l^{\diamond (2i)}).\label{pp0}
\end{align}
For $i\geq 0$, let $\varepsilon_{l,i}=(\varepsilon_l^{\diamond 2})^{\star 2^i}$ and $q_{l,i}
=(\varepsilon_l^{2}+\overline{\varepsilon_l}^{2})^{2^i}$. Then, for $k\geq 0$, we have
\begin{align}
A_{0^l10^i1^k}(\mathrm{B}(\varepsilon))
\cong&
\sum_{0\leq b\leq 2^{k-1}}
\binom{2^k}{2^k-2b,b,b}
\overline{q_{l,i}}^{2^k-2b}(q_{l,i}^2\varepsilon_{l,i}\overline{\varepsilon_{l,i}})^{b}
\mathrm{B}\Big(\frac{1}{2}\Big)\nonumber\\
+\sum_{a=1}^{2^{k}}\mathrm{B}(\varepsilon_{l,i}^{\diamond a})&
(\varepsilon_{l,i}^{a}+\overline{\varepsilon_{l,i}}^{a})q_{l,i}^a
\sum_{s+2b=2^k-a}
\binom{2^k}{s,a+b,b}
\overline{q_{l,i}}^s(q_{l,i}^2\varepsilon_{l,i}\overline{\varepsilon_{l,i}})^{b}.\label{pp1'''}
\end{align}
Let
$p_{l,i}=q_{l,i}^2(\varepsilon_{l,i}^2+\overline{\varepsilon_{l,i}}^2)+2q_{l,i}\overline{q_{l,i}}$ and
$r_{l,i}=2q_{l,i}\overline{q_{l,i}}/p_{l,i}$. Then, for $t\geq 0$, we have
\begin{align}
&A_{0^l10^i10^t}(\mathrm{B}(\varepsilon))\cong
a_{l,i,t}\mathrm{B}\Big(\frac{1}{2}\Big)+
\sum_{s=0}^{2^t}b_{l,i,t,s}
\mathrm{B}\big(\beta_{l,i,t,s}\big),\label{pp2''}
\end{align}
where $a_{l,i,t}=1-p_{l,i}^{2^t}$,
$b_{l,i,t,s}=\binom{2^t}{s}\overline{r_{l,i}}^{2^t-s}r_{l,i}^{s}p_{l,i}^{2^t}$ and
$\beta_{l,i,t,s}=(\varepsilon_{l,i}^{\diamond 2})^{\star(2^t-s)}\star(\varepsilon_{l,i})^{\star s}$ for $0\leq s\leq 2^t$. Furthermore, we have $0<\beta_{l,i,t,0}<\beta_{l,i,t,1}<\cdots<\beta_{l,i,t,2^t}<1/2$ and
\begin{align}
&A_{0^l10^i10^t1}(\mathrm{B}(\varepsilon))
\cong
\Big(a_{l,i,t}^2+2\sum_{s=0}^{2^t}b_{l,i,t,s}^2\beta_{l,i,t,s}\overline{\beta_{l,i,t,s}}\Big)
\mathrm{B}\Big(\frac{1}{2}\Big)+\nonumber\\
&\ \ \ \ 2a_{l,i,t}\sum_{s=0}^{2^t}b_{l,i,t,s}\mathrm{B}(\beta_{l,i,t,s})
+\sum_{s=0}^{2^t}b_{l,i,t,s}^2(\beta_{l,i,t,s}^2+\overline{\beta_{l,i,t,s}}^2)
\mathrm{B}(\beta_{l,i,t,s}^{\diamond 2})+
\nonumber\\
&\ \ \ \ 2\sum_{0\leq s<r\leq 2^t}b_{l,i,t,s}b_{l,i,t,r}\sum_{\sigma\in\{\beta_{l,i,t,r},\overline{\beta_{l,i,t,r}}\}}
(\beta_{l,i,t,s}\sigma+\overline{\beta_{l,i,t,s}}\,\overline{\sigma})
\mathrm{B}(\beta_{l,i,t,s}\diamond\sigma).
\label{h001}
\end{align}
\end{theorem}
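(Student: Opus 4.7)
The plan is to exploit the transformation identities $A_0(W) = \Delta_2(W)$ and $A_1(W) = \nabla_2(W)$ together with the additivity (\ref{tt1}) and the identity (\ref{tt11}), so that each block of $0$'s or $1$'s in the index string collapses into a single $\Delta_{2^{\,\cdot}}$ or $\nabla_{2^{\,\cdot}}$ applied to whatever RSC was produced by the previous step. Throughout, Theorem~\ref{cor10} (for underlying RSCs whose $\varepsilon_i$ all lie in $(0,1)$) and Theorem~\ref{lem80} (when one sub-channel is $\mathrm{B}(1/2)$) supply the explicit RSC decompositions, and Lemma~\ref{lem201} is used to merge equivalent BSCs $\mathrm{B}(x)$ and $\mathrm{B}(\overline{x})$. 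I would handle the five equations in the order in which they build on one another.

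First I would prove (\ref{mm307'}) by writing $A_{0^l}(\mathrm{B}(\varepsilon)) = \Delta_{2^l}(\mathrm{B}(\varepsilon))$ via (\ref{tt11}) and invoking (\ref{pp10}) with $n=1$. Then for (\ref{pp0}) I would note $A_{0^l 1^{k+1}}(\mathrm{B}(\varepsilon)) = \nabla_{2^{k+1}}(\mathrm{B}(\varepsilon_l))$ (by (\ref{tt1}), (\ref{tt11}), and (\ref{mm307'})) and apply (\ref{pp4}) with $n=1$, $q_0=1$, $\varepsilon_0=\varepsilon_l$: since $m=2^{k+1}$ is even, the only nontrivial values of $a_0$ are the even numbers $2i$ with $b_0 = 2^k-i$, and the sum over $\sigma_0\in\{\varepsilon_l,\overline{\varepsilon_l}\}$ collapses using $\mathrm{B}(\varepsilon_l^{\diamond 2i})\cong\mathrm{B}(\overline{\varepsilon_l}^{\diamond 2i})$ to give the stated coefficient $\varepsilon_l^{2i}+\overline{\varepsilon_l}^{2i}$.

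For (\ref{pp1'''}) I would first extract the $k=0$ case of (\ref{pp0}) to obtain
$A_{0^l 1}(\mathrm{B}(\varepsilon))\cong 2\varepsilon_l\overline{\varepsilon_l}\mathrm{B}(1/2)+(\varepsilon_l^2+\overline{\varepsilon_l}^2)\mathrm{B}(\varepsilon_l^{\diamond 2})$, then apply $\Delta_{2^i}$ through (\ref{pp11}) with $n=1$, $q_0=\varepsilon_l^2+\overline{\varepsilon_l}^2$, $\varepsilon_0=\varepsilon_l^{\diamond 2}$; the single surviving term with $a_0=2^i$ produces $q_{l,i}\mathrm{B}(\varepsilon_{l,i})$, with $\mathrm{B}(1/2)$ carrying the complementary mass. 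Feeding this two-BSC decomposition into (\ref{pp5}) with $n=1$, $q_n=\overline{q_{l,i}}$, $q_0=q_{l,i}$, $\varepsilon_0=\varepsilon_{l,i}$, $m=2^k$, and again merging $\sigma_0\in\{\varepsilon_{l,i},\overline{\varepsilon_{l,i}}\}$ yields (\ref{pp1'''}) after identifying the $a=0$ terms with the $\mathrm{B}(1/2)$ part. Specializing to $k=1$ and collecting the $\mathrm{B}(1/2)$ coefficients into $\overline{p_{l,i}}$ and the ratio $r_{l,i}$ rewrites $A_{0^l 1 0^i 1}(\mathrm{B}(\varepsilon))$ as $\overline{p_{l,i}}\mathrm{B}(1/2)+p_{l,i}\overline{r_{l,i}}\mathrm{B}(\varepsilon_{l,i}^{\diamond 2})+p_{l,i}r_{l,i}\mathrm{B}(\varepsilon_{l,i})$; then a second use of (\ref{pp11}) with $n=2$ and $m=2^t$ (whose single interior term has multi-index $(2^t-s,s)$) produces (\ref{pp2''}). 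The monotonicity $0<\beta_{l,i,t,0}<\cdots<\beta_{l,i,t,2^t}<1/2$ follows because $\varepsilon_{l,i}^{\diamond 2}<\varepsilon_{l,i}<1/2$ (since $\varepsilon_{l,i}\in(0,1/2)$) and $a\star b$ is strictly increasing in $a$ when $b\in(0,1/2)$ and maps $(0,1/2)\times(0,1/2)$ into $(0,1/2)$; each step from $s$ to $s+1$ replaces exactly one instance of $\varepsilon_{l,i}^{\diamond 2}$ by the larger $\varepsilon_{l,i}$.

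Finally, for (\ref{h001}) I would write $A_{0^l 1 0^i 1 0^t 1}(\mathrm{B}(\varepsilon))=\nabla_2(A_{0^l 1 0^i 1 0^t}(\mathrm{B}(\varepsilon)))$ and apply Theorem~\ref{lem200} to distribute $\nabla_2=A_1$ across the RSC form (\ref{pp2''}), producing a sum over ordered pairs of sub-channels. Pairs involving $\mathrm{B}(1/2)$ are reduced by (\ref{mm301}): $(\mathrm{B}(1/2),\mathrm{B}(1/2))$ contributes $\mathrm{B}(1/2)$ with weight $a_{l,i,t}^2$ and each cross pair $(\mathrm{B}(1/2),\mathrm{B}(\beta_{l,i,t,s}))$ contributes $\mathrm{B}(\beta_{l,i,t,s})$ with weight $2a_{l,i,t}b_{l,i,t,s}$; the diagonal pair $(\mathrm{B}(\beta_s),\mathrm{B}(\beta_s))$ is computed by (\ref{at32}) as $2\beta_s\overline{\beta_s}\mathrm{B}(1/2)+(\beta_s^2+\overline{\beta_s}^2)\mathrm{B}(\beta_s^{\diamond 2})$; and each off-diagonal pair $(\mathrm{B}(\beta_s),\mathrm{B}(\beta_r))$ with $s<r$ contributes via (\ref{at32}) the two terms associated with $\sigma\in\{\beta_r,\overline{\beta_r}\}$, giving the final double sum of (\ref{h001}) after gathering the $\mathrm{B}(1/2)$ contributions. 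The main obstacle will be the bookkeeping in (\ref{h001}): correctly identifying which pairs collapse, verifying that ``diagonal'' $\beta_s\diamond\overline{\beta_s}=1/2$ is absorbed into the $\mathrm{B}(1/2)$ group, and checking that the coefficients match the compact notation $(\beta_s\sigma+\overline{\beta_s}\overline{\sigma})$; everything else is a direct application of previously established identities.
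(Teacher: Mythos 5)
Your proof is correct and follows essentially the same route as the paper: express $A_{0^l}$ and $A_{1^k}$ as $\Delta_{2^l}$ and $\nabla_{2^k}$ via (\ref{tt11}), chain them using (\ref{tt1}), and expand the resulting RSCs at each stage through Theorems~\ref{cor10} and~\ref{lem80} (together with (\ref{pp7})) before merging $\mathrm{B}(x)$ with $\mathrm{B}(\overline x)$. Your bookkeeping for (\ref{pp1'''}), (\ref{pp2''}) and (\ref{h001}), and your step-by-step monotonicity argument for the $\beta_{l,i,t,s}$ (which is a slightly more explicit version of the paper's appeal to $\alpha\diamond\sigma<\min\{\alpha,\sigma\}\leq\max\{\alpha,\sigma\}<\alpha\star\sigma$), all match the paper's proof.
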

\begin{proof}
From (\ref{tt11}) and (\ref{pp10}), we see (\ref{mm307'}).

From (\ref{tt11}), (\ref{pp4}) and (\ref{mm307'}), we see
\begin{align*}
&A_{0^l1^{k+1}}(\mathrm{B}(\varepsilon)) \cong A_{1^{k+1}}(A_{0^l}(\mathrm{B}(\varepsilon))) \cong A_{1^{k+1}}(\mathrm{B}(\varepsilon_l))\cong\nabla_{2^{k+1}}(\mathrm{B}(\varepsilon_l))\\
\cong&
\sum_{a+2b=2^{k+1}}\binom{2^{k+1}}{a+b,b}(\varepsilon_l\overline{\varepsilon_l})^b
\sum_{\sigma\in\{\varepsilon_l,\overline{\varepsilon_l}\}}\frac{\sigma^a+\overline{\sigma}^a}{2}
\mathrm{B}(\sigma^{\diamond a}),
\end{align*}
where the inner summation denotes $\mathrm{B}(1/2)$ when $a=0$.
Therefore, by taking $a=2i$ and $b=2^k-i$ we see that (\ref{pp0}) follows from $\mathrm{B}(\sigma^{\diamond a})\cong\mathrm{B}(\overline{\sigma}^{\diamond a})$.

From (\ref{pp7}), (\ref{mm307'}) and (\ref{pp0}) we have
\begin{align*}
&A_{0^l10^i1^k}(\mathrm{B}(\varepsilon))
\cong A_{0^i1^k}(A_{0^l1}(\mathrm{B}(\varepsilon)))\cong
A_{0^i1^k}(2\varepsilon_l\overline{\varepsilon_l}\mathrm{B}(1/2)
+(\varepsilon_l^2+\overline{\varepsilon_l}^2)\mathrm{B}(\varepsilon_l^{\diamond 2}))\nonumber\\
\cong &
A_{1^k}(\overline{q_{l,i}}\mathrm{B}(1/2)+q_{l,i}\mathrm{B}(\varepsilon_{l,i}))
\cong \nabla_{2^k}(\overline{q_{l,i}}\mathrm{B}(1/2)+q_{l,i}\mathrm{B}(\varepsilon_{l,i}))
\nonumber\\
\cong&
\sum_{0\leq b\leq 2^{k-1}}
\binom{2^k}{2^k-2b,b,b}
\overline{q_{l,i}}^{2^k-2b}q_{l,i}^{2b}(\varepsilon_{l,i}\overline{\varepsilon_{l,i}})^{b}
\mathrm{B}\Big(\frac{1}{2}\Big)\nonumber\\
&+\sum_{a=1}^{2^{k}}\sum_{\sigma\in\{\varepsilon_{l,i},\overline{\varepsilon_{l,i}}\}}
\frac{1}{2}\bigg(\sigma^{a}+\overline{\sigma}^{a}\bigg)
\mathrm{B}(\sigma^{\diamond a})\sum_{s+2b=2^k-a}
\binom{2^k}{s,a+b,b}
\overline{q_{l,i}}^sq_{l,i}^{a+2b}(\varepsilon_{l,i}\overline{\varepsilon_{l,i}})^{b}
\nonumber\\
\cong&
\sum_{0\leq b\leq 2^{k-1}}
\binom{2^k}{2^k-2b,b,b}
\overline{q_{l,i}}^{2^k-2b}(q_{l,i}^2\varepsilon_{l,i}\overline{\varepsilon_{l,i}})^{b}
\mathrm{B}\Big(\frac{1}{2}\Big)\nonumber\\
&+\sum_{a=1}^{2^{k}}\mathrm{B}(\varepsilon_{l,i}^{\diamond a})
(\varepsilon_{l,i}^{a}+\overline{\varepsilon_{l,i}}^{a})q_{l,i}^a
\sum_{s+2b=2^k-a}
\binom{2^k}{s,a+b,b}
\overline{q_{l,i}}^s(q_{l,i}^2\varepsilon_{l,i}\overline{\varepsilon_{l,i}})^{b},
\end{align*}
i.e. (\ref{pp1'''}) is valid for $k\geq 0$.

From (\ref{pp7}), (\ref{pp10}) and (\ref{pp1'''}) we have
\begin{align*}
&A_{0^l10^i10^t}(\mathrm{B}(\varepsilon))=A_{0^t}(A_{0^l10^i1}(\mathrm{B}(\varepsilon)))\nonumber\\
\cong& A_{0^t}\Big(q_{l,i}^2(\varepsilon_{l,i}^2+\overline{\varepsilon_{l,i}}^2)
\mathrm{B}(\varepsilon_{l,i}^{\diamond 2})
+2q_{l,i}\overline{q_{l,i}}\mathrm{B}(\varepsilon_{l,i})+
(2\varepsilon_{l,i}\overline{\varepsilon_{l,i}}q_{l,i}^2+\overline{q_{l,i}}^2)
\mathrm{B}(1/2)\Big)\nonumber\\
\cong&\Delta_{2^t}\Big(p_{l,i}\big(\overline{r_{l,i}}\mathrm{B}(\varepsilon_{l,i}^{\diamond 2})
+r_{l,i}\mathrm{B}(\varepsilon_{l,i})\big)+\overline{p_{l,i}}\mathrm{B}(1/2)\Big)\nonumber\\
\cong&(1-p_{l,i}^{2^t})\mathrm{B}(1/2)+p_{l,i}^{2^t}\Delta_{2^t}
\big(\overline{r_{l,i}}\mathrm{B}(\varepsilon_{l,i}^{\diamond 2})
+r_{l,i}\mathrm{B}(\varepsilon_{l,i})\big)\nonumber\\
\cong &
\big(1-p_{l,i}^{2^t}\big)\mathrm{B}\Big(\frac{1}{2}\Big)+p_{l,i}^{2^t}
\sum_{s=0}^{2^t}\binom{2^t}{s}\overline{r_{l,i}}^{2^t-s}r_{l,i}^{s}
\mathrm{B}\big((\varepsilon_{l,i}^{\diamond 2})^{\star (2^t-s)}\star\varepsilon_{l,i}^{\star s}\big),
\end{align*}
i.e. (\ref{pp2''}) is valid for any $t\geq 0$.

Since for any $\alpha,\sigma\in(0,1/2)$ we have $$1/2>\alpha\star\sigma>\max\{\alpha,\sigma\}\geq\min\{\alpha,\sigma\}>\alpha\diamond\sigma>0,$$ we see easily $0<\beta_{l,i,t,0}<\beta_{l,i,t,1}<\cdots<\beta_{l,i,t,2^t}<1/2$.
Therefore, (\ref{h001}) follows from (\ref{at01}), (\ref{at32}) and (\ref{pp2''}).
\end{proof}

Notice that, by taking $i=0$ in (\ref{pp1'''}), from (\ref{pp0}) one can deduce easily that,
for $k\geq 0$ and $0\leq a\leq 2^k$,
\begin{gather}
\sum_{s+2b=2^k-a}\binom{2^k}{s,a+b,b}2^{s}=\binom{2^{k+1}}{2^k-a},
\end{gather}
where $s, b$ assume nonnegative integers.

\subsection{Number of BSCs in Arikan Transformations}
Let $0\leq\varepsilon_0<\cdots<\varepsilon_{n-1}<\varepsilon_n=\frac{1}{2}$ and
$W\cong\sum_{j\in[n+1]}q_j\mathrm{B}(\varepsilon_j)$.
Clearly, the sizes of the output sets of $W$, $A_0(W)$, $A_1(W)$, $A_{00}(W)$, $A_{01}(W)$, $A_{10}(W)$, $A_{11}(W)$ are at least $2n+1$, $(2n+1)^2$, $2(2n+1)^2$, $(2n+1)^4$, $2(2n+1)^4$, $4(2n+1)^4$, $8(2n+1)^4$, respectively.
However, on the numbers of RSCs in their LRP-oriented forms, according to Theorem~\ref{lem80}, we see
\begin{align}\label{pp60}
\phi(A_0(W))&\leq \frac{n^2+n}{2}+1,\\
\phi(A_1(W))&\leq n^2+n+1,\label{pp61}\\
\phi(A_{00}(W))&\leq \binom{n+3}{4}+1=\frac{(n^2+n)(n^2+5n+6)}{24}+1,\label{pp62}\\
\phi(A_{11}(W))&\leq 4\binom{n}{1}+12\binom{n}{2}+16\binom{n}{3}+8\binom{n}{4}+1\nonumber\\
&=\frac{(n^2+n)(n^2+n+4)}{3}+1,\label{pp63}
\end{align}
and by using (\ref{pp60}) and (\ref{pp61}) we have
\begin{align}\label{pp66}
\phi(A_{01}(W))&\leq \frac{(n^2+n)(n^2+n+2)}{4}+1,\\
\phi(A_{10}(W))&\leq \frac{(n^2+n)(n^2+n+1)}{2}+1.
\end{align}

Moreover, for $\alpha\in \{0,1\}^k$, let $\varphi(\alpha)$ denote the number defined by $\varphi(1^l)=\varphi(0^l)=(2^l)!2^{2^l}$ for $l\geq 0$ and, for $\beta\in \{0,1\}^s$,
\begin{gather*}
\varphi(\beta 01^l)=(2^l)!(\varphi(\beta 0))^{2^l},\
\varphi(\beta 10^l)=(2^l)!(\varphi(\beta 1))^{2^l}.
\end{gather*}
Then, we have
\begin{theorem}
Let $0\leq\varepsilon_0<\cdots<\varepsilon_{n-1}<\varepsilon_n=\frac{1}{2}$ and
$W\cong\sum_{j\in[n+1]}q_j\mathrm{B}(\varepsilon_j)$.
Then, for $\alpha\in \{0,1\}^k$, we have
\begin{align}
\label{f001}
\phi(A_{\alpha}(W))\leq h_{\alpha}(n)=\frac{2^{b(\alpha)}(2n)^{2^k}}{\varphi(\alpha)}+g_{\alpha}(n),
\end{align}
where $g_{\alpha}(n)$ is a polynomial of order at most $2^k-1$ and
$b(\alpha)=a_{k-1}2^{k-1}+a_{k-2}2^{k-2}+\cdots+a_{0}$ if $\alpha=a_{k-1}a_{k-2}\cdots a_0$.
In particular, for any $\alpha\in \{0,1\}^k$ the average number of elements with the same likelihood ratios in the output set of $A_{\alpha}(W)$
is at least $\varphi(\alpha)/2$ when $n$ is sufficiently large.
\end{theorem}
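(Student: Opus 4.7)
My plan is to induct on the number of maximal constant-bit runs $r$ in $\alpha$, not on $k$. From the iterated definition of $A_\alpha$ and identity (\ref{tt11}) one gets the block identities $A_{\gamma 0^l}(W)\cong\Delta_{2^l}(A_\gamma(W))$ and $A_{\gamma 1^l}(W)\cong\nabla_{2^l}(A_\gamma(W))$, and invoking Theorem~\ref{lem80} once on $\Delta_{2^l}$ or $\nabla_{2^l}$ gives a strictly sharper bound than applying it $l$ times to $\Delta_2$ or $\nabla_2$. The definition of $\varphi$ is visibly engineered for exactly this run-by-run recursion: for $\gamma$ ending in $1$ (or empty) one has $\varphi(\gamma 0^l)=(2^l)!\varphi(\gamma)^{2^l}$ and $b(\gamma 0^l)=2^l b(\gamma)$; symmetrically for $\gamma 1^l$ with $\gamma$ ending in $0$ one has $b(\gamma 1^l)=2^l b(\gamma)+2^l-1$.

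For the base case $r=1$, write $\alpha=0^k$ or $\alpha=1^k$. Theorem~\ref{lem80} with $m=2^k$ gives $\phi(\Delta_{2^k}(W))\le\binom{2^k+n-1}{2^k}+1$ and $\phi(\nabla_{2^k}(W))\le 1+\sum_{\omega=1}^{2^k}2^{\omega-1}\binom{n}{\omega}\binom{2^k}{\omega}$. Extracting the $n^{2^k}$-coefficients gives $1/(2^k)!$ and $2^{2^k-1}/(2^k)!$ respectively, which equal the leading coefficients of $h_{0^k}(n)$ and $h_{1^k}(n)$ once one unpacks $\varphi(0^k)=\varphi(1^k)=(2^k)!\,2^{2^k}$, $b(0^k)=0$, $b(1^k)=2^k-1$; the remaining terms are polynomials in $n$ of degree $\le 2^k-1$ and sit inside $g_\alpha(n)$.

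For the inductive step, write $\alpha=\gamma 0^l$ with $\gamma$ ending in $1$ (the $\gamma 1^l$ case is symmetric); set $V=A_\gamma(W)$ and $M=\phi(V)\le h_\gamma(n)$ by the inductive hypothesis. I first verify that $V$ contains a $\mathrm{B}(1/2)$ sub-channel, so that Theorem~\ref{lem80}~(Part 1) applies: this follows by a quick induction using (\ref{pp11}) and (\ref{pp5})---the coefficient $1-\overline{q_n}^m$ of $\mathrm{B}(1/2)$ in $\Delta_m$ and the coefficient $q_n^m$ of the $a=b=0$ term in $\nabla_m$ are strictly positive whenever $q_n>0$, so both operations preserve the presence of $\mathrm{B}(1/2)$ and $V$ inherits one from $W$. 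Theorem~\ref{lem80}~(Part 1) then yields
\[
\phi(\Delta_{2^l}(V))\le\binom{2^l+M-2}{2^l}+1=\frac{M^{2^l}}{(2^l)!}+O(M^{2^l-1}),
\]
and substituting $M\le h_\gamma(n)=c_\gamma n^{2^{k-l}}+g_\gamma(n)$ with $g_\gamma$ of degree $\le 2^{k-l}-1$, the top contribution is $\frac{c_\gamma^{2^l}}{(2^l)!}\,n^{2^k}$ while every other term has $n$-degree at most $2^k-1$. With $c_\gamma=\frac{2^{b(\gamma)}2^{2^{k-l}}}{\varphi(\gamma)}$ and the $\varphi,b$ recursions above, a direct calculation gives $\frac{c_\gamma^{2^l}}{(2^l)!}=c_\alpha=\frac{2^{b(\alpha)}2^{2^k}}{\varphi(\alpha)}$, which is the leading coefficient of $h_\alpha(n)$. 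The $\gamma 1^l$ case uses Part 2 instead, whose leading-in-$M$ term (attained at $\omega=2^l$) is $\frac{2^{2^l-1}M^{2^l}}{(2^l)!}$; the extra factor $2^{2^l-1}$ matches precisely the additional $2^l-1$ in $b(\gamma 1^l)$.

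The average-multiplicity claim then drops out immediately: the raw output alphabet of $A_\alpha(W)$ has cardinality $(2n+2)^{2^k}\cdot 2^{b(\alpha)}$, from the recursions $|\mathcal Y(A_0(W_0,W_1))|=|\mathcal Y_0||\mathcal Y_1|$ and $|\mathcal Y(A_1(W_0,W_1))|=2|\mathcal Y_0||\mathcal Y_1|$, while the number of distinct likelihood ratios is at most $2\phi(A_\alpha(W))\le 2h_\alpha(n)$. Hence the average size of $L_{A_\alpha(W)}(\varepsilon)$ is at least $\frac{(2n+2)^{2^k}2^{b(\alpha)}}{2h_\alpha(n)}$, which tends to $\varphi(\alpha)/2$ as $n\to\infty$ since $h_\alpha(n)\sim\frac{2^{b(\alpha)}(2n)^{2^k}}{\varphi(\alpha)}$. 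The main obstacle is purely arithmetic bookkeeping: one must track exponents of $2$, factorials and binomial coefficients across both the $\Delta_{2^l}$- and $\nabla_{2^l}$-steps carefully enough to confirm that (a) the leading coefficients produced by Theorem~\ref{lem80} match the recursions defining $\varphi(\alpha)$ and $b(\alpha)$ on the nose, and (b) none of the cross-terms in the expansion of $h_\gamma(n)^{2^l}/(2^l)!$ exceeds the declared degree bound $2^k-1$.
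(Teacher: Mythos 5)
Your proof is correct and follows essentially the same route as the paper's: decompose $\alpha=\sigma\delta$ into the last maximal constant-bit run $\delta\in\{0^l,1^l\}$ and the prefix $\sigma$, apply the base case ($\alpha=0^k$ or $1^k$, from Theorem~\ref{lem80}) to the run, and chain the leading-coefficient bookkeeping through the $\varphi$ and $b$ recursions. Inducting on the number of runs rather than on $k$ is a cosmetic difference. You add one genuinely useful hygiene step the paper elides: explicitly verifying that $A_\gamma(W)$ retains a $\mathrm{B}(1/2)$ sub-channel (by tracking the strictly positive $\mathrm{B}(1/2)$ coefficients in (\ref{pp11}) and (\ref{pp5})), which is the hypothesis under which Theorem~\ref{lem80} is stated. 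The $(2n+2)^{2^k}$ versus the paper's $(2n+1)^{2^k}$ for the output alphabet size is an off-by-one that depends on whether $\mathrm{B}(1/2)$ is counted as one or two symbols; it affects only lower-order terms and not the asymptotic $\varphi(\alpha)/2$ conclusion.
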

\begin{proof}
If $\alpha=0^k$, then we have $b(\alpha)=0$, $\varphi(\alpha)=(2^k)!2^{2^k}$ and, according to (\ref{tt11}) and Theorem~\ref{lem80},
there is a polynomial $g_{0^k}(n)$ of order at most $2^k-1$ such that
\begin{align*}
\phi(A_{\alpha}(W))\leq 1+\binom{n+2^k-1}{2^k}=\frac{n^{2^k}}{(2^k)!}+g_{0^k}(n)
=\frac{2^{b(\alpha)}(2n)^{2^k}}{\varphi(\alpha)}+g_{0^k}(n),
\end{align*}
i.e., (\ref{f001}) is valid for $\alpha=0^k$.

If $\alpha=1^k$, then we have $b(\alpha)=2^k-1$, $\varphi(\alpha)=(2^k)!2^{2^k}$ and, according to (\ref{tt11}) and Theorem~\ref{lem80},
there is a polynomial $g_{1^k}(n)$ of order at most $2^k-1$ such that
\begin{align*}
\phi(A_{\alpha}(W))&\leq
1+\sum_{\omega=1}^{2^k}2^{\omega-1}\binom{n}{\omega}\binom{2^k}{\omega}
=\frac{2^{2^k-1}n^{2^k}}{(2^k)!}+g_{1^k}(n)
=\frac{2^{b(\alpha)}(2n)^{2^k}}{\varphi(\alpha)}+g_{1^k}(n),
\end{align*}
i.e., (\ref{f001}) is valid for $\alpha=1^k$.

Now we consider to prove (\ref{f001}) by induction on $k$.
Assume that (\ref{f001}) is valid for any sequence $\alpha$ of length smaller than $k$.
Suppose $\alpha\in \{0,1\}^k\setminus\{0^k,1^k\}$ and $l$ is the largest integer
such that $\alpha=\sigma\delta$, $\delta\in\{0^l,1^l\}$ and $\sigma\in\{0,1\}^{k-l}$. Then, we have $1\leq l<k$,
$\varphi(\delta)=(2^l)!2^{2^l}$, $\varphi(\alpha)=(2^l)!(\varphi(\sigma))^{2^l}$, $b(\alpha)=b(\sigma)2^l+b(\delta)$ and
\begin{gather*}
\phi(A_{\sigma}(W))\leq h_{\sigma}(n)=\frac{2^{b(\sigma)}(2n)^{2^{k-l}}}{\varphi(\sigma)}+g_{\sigma}(n).
\end{gather*}
Hence, we see there is a polynomial $g_{\alpha}(n)$ of order at most $2^k-1$ such that
\begin{align*}
\phi(A_{\alpha}(W))&=\phi(A_{\delta}(A_{\sigma}(W)))\\
&\leq \frac{2^{b(\delta)}(2h_{\sigma}(n))^{2^l}}{\varphi(\delta)}+g_{\delta}(h_{\sigma}(n))\\
&=\frac{2^{b(\delta)}2^{2^l}2^{b(\sigma)2^l}(2n)^{2^{k-l}2^l}}{(2^l)!2^{2^l}(\varphi(\sigma))^{2^l}}+g_{\alpha}(n)\\
&=\frac{2^{b(\alpha)}(2n)^{2^k}}{\varphi(\alpha)}+g_{\alpha}(n),
\end{align*}
where $g_{\delta}(h_{\sigma}(n))$ is a polynomial of order at most
$(2^l-1)2^{k-l}\leq 2^k-1$. Thus, (\ref{f001}) is valid for any sequence $\alpha$ of length $k$.

Since the size of the output set of $A_{\alpha}(W)$ is $2^{b(\alpha)}(2n+1)^{2^k}$, from (\ref{f001}) we see that the average size of the non-empty sets $L_{A_{\alpha}(W)}(\varepsilon)$
is at least $\varphi(\alpha)/2$ when $n$ is sufficiently large.
\end{proof}

Notice that one can deduce easily that, for each $\alpha\in \{0,1\}^k$, the number $\varphi(\alpha)2^{1-2^{k+1}}$ is an odd integer and
\begin{align*}
2^{2^{k+1}-1}\leq\varphi(\alpha)\leq (2^k)!2^{2^k}.
\end{align*}

\section{Conclusions}
\label{sec05}
The focus of this paper is to investigate the principal properties of the synthetic channels that are iteratively formed through Arikan transformations during the construction of polar codes. Given that evaluating the reliability of these synthetic channels is of great significance.

Utilizing the likelihood ratio profile (LRP), we established the equivalence and symmetry of binary input discrete memoryless channels (BIDMCs). By representing symmetric BIDMCs as random switching channels (RSCs) of binary symmetric channels (BSCs), we managed to convert channel transformations into algebraic operations.

By taking advantage of these algebraic operations, we obtained concise and compact expressions for the Arikan transformations of symmetric BIDMCs. In the case where the underlying channel is a BSC, we proposed the LRP-oriented forms of RSCs for several synthetic channels generated in the process of polar code construction.

In the end, we derived a lower bound for the average number of elements that possess the same likelihood ratio within the output alphabet of any synthetic channel in a polar code.

\end{document}